\theoremstyle{plain}               
\newtheorem{thm}{Theorem}[section]
\newtheorem{lem}{Lemma}[section]
\newtheorem{cor}{Corollary}[section]
\newtheorem{defn}{Definition}[section]
\theoremstyle{remark}
\def\beq{\begin{equation}}
\def\eeq{\end{equation}}
\def\bq{\begin{quote}}
\def\eq{\end{quote}}
\def\ben{\begin{enumerate}}
\def\een{\end{enumerate}}
\def\bit{\begin{itemize}}
\def\eit{\end{itemize}}
\def\ra{\rightarrow}
\def\lb{\left(}
\def\rb{\right)}
\def\lset{\lbrace}
\def\rset{\rbrace}
\def\lk{\left\langle}
\def\rk{\right\rangle}
\def\l|{\left|}
\def\r|{\right|}
\def\lbr{\left[}
\def\rbr{\right]}
\newcommand\C{\mathbbm{C}}
\newcommand\R{\mathbbm{R}}
\newcommand\N{\mathbbm{N}}
\newcommand\M{\mathcal{M}}
\newcommand\D{\mathcal{D}}
\newcommand{\Sm}{\mathcal{S}}
\newcommand{\Em}{\mathcal{E}}
\newcommand{\Lm}{\mathcal{L}}
\newcommand{\ketbra}[1]{|#1\rangle\langle#1|}
\newcommand{\tr}{\text{tr}}
\newcommand{\one}{\mathds{1}}
\newcommand{\id}{\text{id}}
\newcommand{\liou}{\mathcal{L}}
\newcommand{\Var}{\text{Var}}
\newcommand{\Ent}{\text{Ent}}
\begin{document} 

\title{{Sandwiched R\'{e}nyi Convergence \\ for Quantum Evolutions}}

\author[1]{Alexander M\"uller-Hermes}
\email{muellerh@posteo.net}
\author[2]{Daniel Stilck Fran\c{c}a}
\email{dsfranca@mytum.de}

\affil[1]{\small{Department of Mathematical Sciences, University of Copenhagen, 2100 Copenhagen, Denmark}}

\affil[2]{\small{Department of Mathematics, Technische Universit\"at M\"unchen, 85748 Garching, Germany}}

\maketitle

\begin{abstract}
We study the speed of convergence of a primitive quantum time evolution towards its fixed point in the distance of 
sandwiched R\'{e}nyi divergences. 
For each of these distance measures the convergence is typically exponentially fast and the best exponent is 
given by a constant (similar to a logarithmic Sobolev constant) depending only on the generator of the time evolution. 
We establish relations between these constants and the logarithmic Sobolev constants as well as the spectral gap. 
An important consequence of these relations is the derivation of mixing time bounds for time evolutions directly 
from logarithmic Sobolev inequalities without relying on notions like $l_p$-regularity. 
We also derive strong converse bounds for the classical capacity of a quantum time evolution and apply these to obtain
bounds on the classical capacity of some examples, including stabilizer Hamiltonians under thermal noise.  
\end{abstract}

\newpage
\tableofcontents
\newpage

\section{Introduction}

Consider a quantum system affected by Markovian noise modeled by a quantum dynamical semigroup $T_t$ (with time parameter $t\in\R^+$) driving every initial state towards a unique full rank state $\sigma$. Using the framework of logarithmic Sobolev inequalities as introduced in \cite{Olkiewicz1999246,Kastoryanosob} the speed of the convergence towards the fixed point can be studied. 
Specifically, the $\alpha_1$-logarithmic Sobolev constant (see \cite{Olkiewicz1999246,Kastoryanosob}) is the optimal exponent $\alpha\in\R^+$ such that the inequality 
\begin{equation}
D(T_t(\rho)\|\sigma)\leq e^{-2\alpha t}D\lb\rho\|\sigma\rb
\label{equ:basicInequ}
\end{equation}
holds for the quantum Kullback-Leibler divergence, given by $D\lb\rho\|\sigma\rb=\tr\left[\rho(\ln(\rho)-\ln(\sigma))\right]$, for all $t\in \R^+$ and all states $\rho$. 

The framework of logarithmic Sobolev constants is closely linked to properties of noncommutative $l_p$-norms, and specifically to hypercontractivity~\cite{Olkiewicz1999246,Kastoryanosob}. 
Noncommutative $l_p$-norms also appeared recently in the definition of generalized R\'{e}nyi divergences (so called ``sandwiched R\'{e}nyi divergences''~\cite{muller2013quantum,strongconvrenyi}). 
It is therefore natural to study the relationship between logarithmic Sobolev inequalities and noncommutative $l_p$-norms more closely. 
The approach used here is to define constants (which we call $\beta_p$ for a parameter $p\in\lbr 1,\infty\rb$), which resemble the logarithmic Sobolev constants, but where the distance measure is a sandwiched R\'{e}nyi divergence instead 
of the quantum Kullback-Leibler divergence. More specifically, the constants $\beta_p$ will be the optimal exponents such that inequalities of the form $\eqref{equ:basicInequ}$ hold for the sandwiched R\'{e}nyi divergences $D_p$, given by
\begin{equation}
 D_p\lb\rho\|\sigma\rb=
\begin{cases} 
\frac{1}{p-1}\ln\lb\tr\left[\lb\sigma^{\frac{1-p}{2p}}\rho\sigma^{\frac{1-p}{2p}}\rb^p\right]\rb & \mbox{if }  
\text{ker}\lb\sigma\rb\subseteq\text{ker}\lb\rho\rb\text{ or }p\in \lb 0,1\rb
\\ +\infty, & \mbox{otherwise,}
\end{cases}
\end{equation}
instead of the quantum Kullback-Leibler divergence $D$.

Our main results are two-fold: 
\begin{itemize}
\item We derive inequalities between the new $\beta_p$ and other quantities such as logarithmic Sobolev constants and the spectral gap of the generator of the time evolution. These inequalities not only reveal basic properties of the $\beta_p$, but can also be used as a technical tool to strengthen results involving logarithmic Sobolev constants.
\item We apply our framework to derive bounds on the mixing time of quantum dynamical semigroups. 
Using the interplay between the $\beta_p$ and the logarithmic Sobolev constants we show how to derive a mixing time bound with the same
scaling as that of the one derived in \cite{Kastoryanosob}
directly from a logarithmic Sobolev constant.
Previously, this was only known under the additional assumption of $l_p$-regularity (see \cite{Kastoryanosob}) of the generator or for the $\alpha_1$-logarithmic Sobolev constant. 
It is still an open question whether $l_p$-regularity holds for all primitive generators.
\end{itemize}
As an additional application of our methods we derive time-dependent strong converse bounds on the classical capacity of a 
quantum dynamical semigroup. 
We apply these to some examples of systems under thermal noise. These include stabilizer Hamiltonians, such as the $2D$ toric code, and a truncated
harmonic oscillator.
To the best of our knowledge, these are the first bounds available on the classical capacity of these 
channels.
We also apply our bound to depolarizing channels, whose classical capacity is known~\cite{king2003capacity}, to benchmark our findings.

\section{Notation and Preliminaries}\label{sec:notandprelim}
Throughout this paper $\M_d$ will denote the space of $d\times d$ complex matrices.
We will denote by $\D_d$ the set of $d$-dimensional quantum states, i.e. positive semi-definite matrices $\rho\in\M_d$ with trace $1$. By $\M_d^+$ we denote the set of positive definite matrices and by $\D_d^+=\M_d^+\cap\D_d$ the set of full rank states. 

In \cite{muller2013quantum,strongconvrenyi} the following definition of sandwiched quantum R\'{e}nyi divergences was proposed:
\begin{defn}[Sandwiched $p$-R\'{e}nyi divergence]
Let $\rho,\sigma\in\D_d$. For $p\in( 0,1)\cup(1,\infty)$, the \textbf{sandwiched $p$-R\'{e}nyi divergence} is defined as:
\begin{equation}
 D_p\lb\rho\|\sigma\rb=
\begin{cases} 
\frac{1}{p-1}\ln\left(\tr\left[\lb\sigma^{\frac{1-p}{2p}}\rho\sigma^{\frac{1-p}{2p}}\rb^p\right]\right) & \mbox{if }  
\text{ker}\lb\sigma\rb\subseteq\text{ker}\lb\rho\rb\text{ or }p\in \lb 0,1\rb
\\ +\infty, & \mbox{otherwise}
\end{cases}
\label{equ:Renyi}
\end{equation}
where $\text{ker}\lb\sigma\rb$ is the kernel of $\sigma$.
\end{defn}
Note that we are using a different normalization than in \cite{muller2013quantum,strongconvrenyi}, which is more convenient
for our purposes. The logarithm in our definition is in base $e$, while theirs is in base $2$. When we write $\log$ in later sections we will mean the logarithm in base $2$.

Taking the limit $p\to1$ gives the usual quantum Kullback-Leibler divergence~\cite{umegaki1962} 
\[
\lim_{p\to1}D_p\lb\rho\|\sigma\rb=D\lb\rho\|\sigma\rb :=
\begin{cases}
\tr[\rho\lb\ln(\rho)-\ln(\sigma)\rb]  & \mbox{if }  
\text{ker}\lb\sigma\rb\subseteq\text{ker}\lb\rho\rb
\\ +\infty, & \mbox{otherwise}

\end{cases} .
\]
Similarly by taking the limit $p\ra\infty$ we obtain the max-relative entropy~\cite[Theorem 5]{muller2013quantum}
\[
\lim_{p\to\infty}D_p\lb\rho\|\sigma\rb = D_{\infty}\lb\rho\|\sigma\rb = \ln\lb\|\sigma^{-\frac{1}{2}}\rho\sigma^{-\frac{1}{2}}\|_\infty\rb.
\]
The sandwiched R\'{e}nyi divergences increase monotonically in the parameter $p\geq 1$ (see \cite[Theorem 7]{dataprocessingbeigi}) and we have
\begin{equation}
D\lb\rho\|\sigma\rb = D_1\lb\rho\|\sigma\rb\leq D_p\lb\rho\|\sigma\rb\leq D_q\lb\rho\|\sigma\rb \leq D_\infty\lb\rho\|\sigma\rb.
\label{equ:orderingSandwich}
\end{equation}
for any $q\geq p\geq 1$ and all $\rho,\sigma\in\D_d$. Next we state two simple consequences of this ordering, which will be useful later.
 
\begin{lem}\label{lem:maximalrelent}
For $\sigma\in\D_d^+$ and $p\in[1,+\infty)$
\begin{align}\label{equ:upperentropy}
\sup\limits_{\rho\in\D_d}D_p\lb \rho\|\sigma\rb=\ln\lb \|\sigma^{-1}\|_\infty\rb.
\end{align}

\end{lem}
\begin{proof}
Using \eqref{equ:orderingSandwich} for $\rho\in\D_d$ we have
\[
D_p(\rho\|\sigma)\leq D_\infty\lb\rho\|\sigma\rb = \ln\lb\|\sigma^{-\frac{1}{2}}\rho\sigma^{-\frac{1}{2}}\|_\infty\rb\leq \ln\lb\|\sigma^{-1}\|_\infty\rb.
\]
Here we used that any quantum state $\rho\in\D_d$ fulfills $\rho\leq \one_d$. Clearly, choosing $\rho = \ketbra{v_{\min}}$ for an eigenvector $\ket{v_{\min}}\in\C^d$ corresponding to the eigenvalue $\|\sigma^{-1}\|_\infty$ of $\sigma^{-1}$ achieves equality in the previous bound.

\end{proof}

%

Using \eqref{equ:orderingSandwich} together with the well-known Pinsker inequality~\cite[Theorem 3.1]{hiai1981} for the quantum Kullback-Leibler divergence we have
\begin{equation}
\frac{1}{2}\|\sigma-\rho\|_1^2\leq D\lb\rho\|\sigma\rb\leq D_p\lb\rho\|\sigma\rb
\label{equ:pinskerrenyi}
\end{equation}
for any $p\geq 1$ and all $\rho,\sigma\in\D_d$. The constant $\frac{1}{2}$ has been shown to be optimal in the classical case (see \cite{2006cs3097G}), i.e.\ restricting to $\rho$ that commute with $\sigma$, and is therefore also optimal here. 


\subsection{Noncommutative \texorpdfstring{$l_p$}{lp}-spaces}\label{sec:noncommlp}

In the following $\sigma\in\D^+_d$ will denote a full rank reference state. For $p\geq1$ we define the \textbf{noncommutative $p$-norm}  with respect to $\sigma$ as  
\begin{equation}
 \|X\|_{p,\sigma}=\lb\tr\lbr\left|\sigma^{\frac{1}{2p}}X\sigma^{\frac{1}{2p}}\right|^p\rbr\rb^{\frac{1}{p}}
\end{equation}
for any $X\in\M_d$. The space $\lb\M_d,\|\cdot\|_{p,\sigma}\rb$ is called a (weighted) noncommutative $l_p$-space. 
For a linear map $\Phi:\M_d\to\M_d$ and $p,q\geq1$ we define the \textbf{noncommutative $p\to q$-norm} with respect to $\sigma$ as
\[
\|\Phi\|_{p\to q,\sigma}=\sup\limits_{Y\in\M_d}\frac{\|\Phi(Y)\|_{q,\sigma}}{\|Y\|_{p,\sigma}}. 
\]
We introduce the \textbf{weighting operator} $\Gamma_\sigma:\M_d\ra\M_d$ as
\[
\Gamma_\sigma\lb X\rb =\sigma^{\frac{1}{2}}X\sigma^{\frac{1}{2}}.
\] 
For powers of the weighting operator we set 
\[
\Gamma^p_\sigma\lb X\rb = \sigma^{\frac{p}{2}}X\sigma^{\frac{p}{2}}
\]
for $p\in\R$ and $X\in\M_d$. We define the so called \textbf{power operator} $I_{p,q}:\M_d\ra\M_d$ as
\begin{equation}
I_{p,q}(X)=\Gamma^{-\frac{1}{p}}_\sigma\lb \left|\Gamma^{\frac{1}{q}}_\sigma\lb X\rb \right|^{\frac{q}{p}}\rb 
\label{equ:powerOp}
\end{equation}
for $X\in\M_d$. It can be verified that 
\[
\|I_{p,q}(X)\|^p_{p,\sigma} = \| X\|^q_{q,\sigma}
\]
for any $X\in\M_d$. As in the commutative theory, the noncommutative $l_2$-space turns out to be a Hilbert space, where the \textbf{weighted scalar product} is given by
\begin{equation}
\lk X,Y\rk_\sigma=\tr\lbr \Gamma_\sigma\lb X^\dagger\rb Y\rbr
\label{equ:nonCScalarp}
\end{equation}
for $X,Y\in\M_d$.
With the above notions we can express the sandwiched $p$-R\'{e}nyi divergence \eqref{equ:Renyi} for $p>1$ in terms of a noncommutative $l_p$-norm as
\begin{equation}
D_p\lb\rho\|\sigma\rb=\frac{1}{p-1}\ln\lb\| \Gamma_\sigma^{-1}\lb\rho\rb\|_{p,\sigma}^p\rb.
\label{equ:RenyiWithNorm}
\end{equation}  
For a state $\rho\in\D_d$ the positive matrix $\Gamma^{-1}_\sigma\lb\rho\rb\in\M_d$ is called the \textbf{relative density} of $\rho$ with respect to $\sigma$. 
Note that any $X\geq 0$ with $\| X\|_{1,\sigma}=1$ can be written as $X = \Gamma^{-1}_\sigma\lb\rho\rb$ for some state $\rho\in\D_d$. 
We will simply call operators $X\geq 0$ that satisfy $\| X\|_{1,\sigma}=1$ relative densities when the reference state is clear.

We refer to \cite{Olkiewicz1999246,Kastoryanosob} and references therein for proofs and more details about the concepts introduced in this section.




\subsection{Quantum dynamical semigroups}
A family of quantum channels, i.e. trace-preserving completely positive maps, $\lset T_t\rset_{t\in\R_0^+}$, $T_t:\M_d\to\M_d$, 
parametrized by a non-negative parameter $t\in\R_0^+$ is called a 
\textbf{quantum dynamical semigroup} if $T_0 = \id_d$ (the identity map in $d$ dimensions), 
$T_{t+s} = T_t\circ T_s$ for any $s,t\in\R_0^+$ and $T_t$ depends continuously on $t$. 
Any quantum dynamical semigroup can be written as $T_t = e^{t\Lm}$ (see \cite{lindblad1976,gorini}) for a Liouvillian $\liou:\M_d\ra\M_d$ of the form 
\begin{align*}
\liou(X) = \Sm(X) - \kappa X - X\kappa^\dagger,
\end{align*}
where $\kappa\in\M_d$ and $\Sm:\M_d\ra\M_d$ is completely positive such that $\Sm^{*}(\one_d) = \kappa + \kappa^\dagger$, where $\Sm^*$ is the adjoint of $\Sm$ with respect to the Hilbert-Schmidt scalar product. We will also deal with tensor powers of semigroups. For a quantum dynamical semigroup $\{T_t\}_{t\in\R^+}$ with Liouvillian $\liou$ we denote by $\liou^{(n)}$ the Liouvillian of the quantum dynamical semigroup $\{T_t^{\otimes n}\}_{t\in\R^+}$. 

In the following we will consider quantum dynamical semigroups having a full rank fixed point $\sigma\in\D^+_d$, i.e.\ the Liouvillian generating the semigroup fulfills $\liou(\sigma)=0$ (implying that $e^{t\liou}(\sigma)=\sigma$ for any time $t\in\R_0^+$). 
We call a quantum dynamical semigroup (or the Liouvillian generator) \textbf{primitive} if it has a unique full rank fixed point $\sigma$. In this case for any initial state $\rho\in\D_d$ 
we have $\rho_t = e^{t\Lm}(\rho)\ra \sigma$ as $t\ra\infty$ (see \cite[Theorem 14]{ergodicchiribella}). 

The notion of primitivity can also be defined for discrete semigroups of quantum channels. For a quantum channel $T:\M_d\to\M_d$ we will sometimes consider the discrete semigroup $\{T^{n}\}_{n\in\N}$. 
Similar to the continuous case we will call this semigroup (or the channel $T$) primitive if there is a unique full rank state $\sigma\in\D^+_d$ with
$\lim\limits_{n\to\infty}T^n(\rho)=\sigma$ for any $\rho\in\D_d$. We refer to \cite{ergodicchiribella} for other characterizations of primitive channels and sufficient conditions for primitivity.

To study the convergence of a primitive semigroup to its fixed point $\sigma$ we introduce the time evolution of the relative density $X_t = \Gamma^{-1}_{\sigma}\lb\rho_t\rb$. For any Liouvillian $\liou:\M_d\ra\M_d$ with full rank fixed point $\sigma\in\D^+_d$ define
\begin{equation}
\hat{\liou}=\Gamma^{-1}_\sigma\circ\liou\circ\Gamma_\sigma
\label{equ:HutLiou}
\end{equation}  
to be the generator of the time evolution of the relative density. Indeed it can be checked that
\[
X_t = \Gamma^{-1}_{\sigma}\lb e^{t\liou}\lb\rho\rb\rb = e^{t\hat{\liou}}\lb X\rb
\]
for any state $\rho\in\D_d$ and relative density $X = \Gamma^{-1}_\sigma\lb \rho\rb$. Note that $\|X_t\|_{1,\sigma}=\|X\|_{1,\sigma}=1$ for all $t\in\R_0^+$. Clearly the semigroup generated by $\hat{\liou}$ is completely positive and unital, but it is not trace-preserving in general. In the special case where
\begin{equation}
\Gamma^{-1}_\sigma\circ\liou\circ\Gamma_\sigma=\liou^*,
\label{equ:Reversible}
\end{equation}
the map $\hat{\liou}$ generates the adjoint of the initial semigroup, i.e. the corresponding time evolution in Heisenberg picture.
A semigroup fulfilling \eqref{equ:Reversible} is called \textbf{reversible} (or said to fulfill \textbf{detailed balance}), and in this case the Liouvillian 
$\hat{\liou}$ is a Hermitian operator w.r.t. the $\sigma$-weighted scalar product. We again refer to~\cite{Kastoryanosob,Olkiewicz1999246} for more details on these topics.
For discrete semigroups we similarly set $\hat{T}=\Gamma^{-1}_\sigma\circ T\circ\Gamma_\sigma$.

One important class of semigroups are \textbf{Davies generators}, which describe a system weakly coupled to a thermal bath under an appropriate approximation~\cite{spohn1978irreversible}. 
Describing them in detail goes beyond the scope of this article and here we will only review their most basic properties. We refer to ~\cite{davies1976quantum,breuer2007theory,DAVIES1979421} for more details. 

Suppose that we have a system of dimension $d$ weakly coupled to a thermal bath of dimension $d_B$ at inverse inverse temperature $\beta>0$. Consider a Hamiltonian $H_{\text{tot}}\in\M_d\otimes\M_{d_B}$ of the system and the bath of the form
\begin{align*}
H_{\text{tot}}=H\otimes\one_B+\one_S\otimes H_{B}+H_{I}, 
\end{align*}
where $H\in\M_d$ is the Hamiltonian of the system, $H_B\in\M_{d_B}$ of the bath and 
\begin{align}\label{equ:couplingdavies}
H_I=\sum\limits_{\alpha}S^\alpha\otimes B^\alpha\in\M_d\otimes\M_{d_B}
\end{align}
describes the interaction between the system and the bath. Here the operators $S^\alpha$ and $B^\alpha$ are self-adjoint.
Let $\{\lambda_k\}_{k\in[d]}$ be the spectrum of the Hamiltonian $H$. 
We then define the Bohr-frequencies $\omega_{i,j}$ to be given by the differences of eigenvalues of $H$, that is,
$\omega_{i,j}=\lambda_i-\lambda_j$ for different values of $\lambda$. We will drop the indices on $\omega$ from now on to avoid cumbersome notation,
as is usually done.
Moreover, we introduce operators $S^\alpha(\omega)$ which are the Fourier components of the coupling operators $S^\alpha$
and satisfy
\begin{align*}
e^{iHt}S^\alpha e^{-iHt}=\sum\limits_{\omega}S^\alpha(\omega)e^{i\omega t}. 
\end{align*}
The canonical form of the Davies generator at inverse temperature $\beta>0$ in the Heisenberg picture, $\liou_\beta^*$, is then given by
\begin{align*}
\liou_\beta^*(X)=i[H,X]+\sum\limits_{\omega,\alpha}\liou^*_{\omega,\alpha}(X), 
\end{align*}
where
\begin{align*}
\liou_{\omega,\alpha}^*(X)=G^\alpha(\omega)\lb S^\alpha(\omega)^\dagger XS^\alpha(\omega)-\frac{1}{2}\{S^\alpha(\omega)^\dagger S^\alpha(\omega),X\}\rb. 
\end{align*}
Here $\{X,Y\}=XY+YX$ is the anticommutator and $G^\alpha:\R\to\R$ are the transition rate functions.
Their form depends on the choice of the bath model~\cite{breuer2007theory}. For our purposes it will be enough to assume that these are functions that satisfy the KMS condition~\cite{Kossakowski1977}, that is,
 $G^\alpha(-\omega)=G^\alpha(\omega)e^{-\beta\omega}$.
Although this presentation of the Davies generators is admittedly very short, for our purposes it will be enough to note
that under some assumptions on the operators $S^\alpha(\omega)$~\cite{spohn,Temme2017} and on the transition rate functions, the semigroup generated by $\liou_\beta$
converges to the thermal state $\frac{e^{-\beta H}}{\tr\lb e^{-\beta H}\rb}$ and is reversible~\cite{Kossakowski1977}. In the examples considered
here this will always be the case.

\subsection{Logarithmic Sobolev inequalities and the spectral gap}

To study hypercontractive properties and convergence times of primitive quantum dynamical semigroups the framework of logarithmic Sobolev inequalities has been developed in~\cite{Olkiewicz1999246,Kastoryanosob}. 
Here we will briefly introduce this theory. For more details and proofs see \cite{Olkiewicz1999246,Kastoryanosob} and the references therein. 

We define the \textbf{operator valued relative entropy} (for $p>1$) of $X\in\M^+_d$ as
\begin{equation}
S_p(X) = -p\frac{d}{ds}I_{p+s,p}\lb X\rb|_{s=0}.
\label{equ:OpValEntr}
\end{equation}
With this we can define the $p$-relative entropy:

\begin{defn}[$p$-relative entropy]
For any full rank $\sigma\in\M^+_d$ and $p>1$ we define the \textbf{$p$-relative entropy} of $X\in\M^+_d$ as
\begin{equation}
\Ent_{p,\sigma}(X) = \lk I_{q,p}\lb X\rb,S_p(X)\rk_\sigma - \|X\|^p_{p,\sigma}\ln\lb\| X\|_{p,\sigma}\rb,
\label{equ:pRelEntr}
\end{equation}
where $\frac{1}{q}+\frac{1}{p}=1$.
For $p=1$ we can consistently define
\[
\Ent_{1,\sigma}(X)=\tr[\Gamma_\sigma(X)\lb \ln\lb\Gamma_\sigma(X)\rb-\ln(\sigma)\rb].
\]
by taking the limit $p\to1$.

\end{defn}
The $p$-relative entropy is not a divergence in the information-theoretic sense (e.g. it is not contractive under quantum channels). It was originally introduced to study hypercontractive properties of semigroups in~\cite{Olkiewicz1999246},
 where they also show it is positive for positive operators. There is however a connection to the quantum relative entropy as
\[
\Ent_{p,\sigma}\lb I_{p,1}\lb \Gamma^{-1}_\sigma\lb\rho\rb\rb\rb = \frac{1}{p} D\lb\rho\|\sigma\rb.
\] 
As a special case of the last equation we have
\[
\Ent_{1,\sigma}\lb\Gamma^{-1}_\sigma\lb\rho\rb\rb = D(\rho\|\sigma).
\] 
We may also use it to obtain an expression for $\Ent_{2,\sigma}$:
\[
\Ent_{2,\sigma}(X)=\tr\left[ \lb \Gamma_\sigma^{\frac{1}{2}} (X)\rb^2\ln\lb\frac{\Gamma_\sigma^{\frac{1}{2}} (X)}{\|X\|_{2,\sigma}}\rb\right]-\frac{1}{2}\tr\left[\lb \Gamma_\sigma^{\frac{1}{2}} (X)\rb^2\ln(\sigma)\right].
\]

We also need Dirichlet forms to define logarithmic Sobolev inequalities:

\begin{defn}[Dirichlet form]
Let $\liou:\M_d\to\M_d$ be a Liouvillian with full rank fixed point $\sigma\in\D_d^+$.
For $p>1$ we define the \textbf{$p$-Dirichlet form} of $X\in\M^+_d$ as
\[
\Em_p^{\liou}(X)=-\frac{p}{2(p-1)}\lk I_{q,p}(X),\hat{\liou}(X)\rk_{\sigma}
\]
where $\frac{1}{p}+\frac{1}{q}=1$ and $\hat{\liou}=\Gamma^{-1}_{\sigma}\circ\liou\circ\Gamma_{\sigma}$ denotes the generator of the time evolution of the relative density (cf. \eqref{equ:HutLiou}).
For $p=1$ we may take the limit $p\to1$ and consistently define the $1$-Dirichlet form by
\[
\Em_1^{\liou}\lb X\rb=-\frac{1}{2}\tr\left[\Gamma_\sigma\lb\hat{\liou}(X)\rb\lb\ln\lb\Gamma_\sigma\lb X\rb\rb-\ln(\sigma)\rb\right]. 
\]

\label{defn:DirichletForm}
\end{defn}
Formally, by making this choice we introduce the logarithmic Sobolev framework for $\hat{\liou}$ (i.e. the generator of the time-evolution of the relative density) instead of $\liou^*$. While this is a slightly different definition compared to \cite{Kastoryanosob}, where the Heisenberg picture is used, they are the same for reversible Liouvillians.  

In \cite{Olkiewicz1999246} the Dirichlet forms were introduced to study hypercontractive properties of semigroups. 
As we will see in Theorem \ref{thm:derivativesand}, they appear naturally when we compute the entropy production of the Sandwiched R\'{e}nyi divergences. 
From Corollary \ref{thm:derivativesand}
we will be able to infer that the Dirichlet form is positive for positive operators, a fact already proved in~\cite{Olkiewicz1999246}.
Both the $\Ent_{p,\sigma}$ and the Dirichlet form are intimately related to hypercontractive properties of semigroups,
as we have for a relative density $X$, some constant $\alpha>0$ and $p(t)=1+e^{2\alpha t}$ that
\begin{align*}
\frac{d}{dt}\ln\lb \|X_t\|_{p(t),\sigma}\rb=\frac{\alpha e^{\alpha t}}{\lb1+e^{\alpha t}\rb\|X_t\|_{p(t),\sigma}^{p(t)}}\lb\Ent_{p(t),\sigma}(X_t)-\frac{1}{\alpha}\Em_{p(t)}(X_t)\rb, 
\end{align*}
as shown in~\cite{Olkiewicz1999246}.

Notice that when working with $\Em_2^\liou$ we may always suppose the Liouvillian is reversible without loss of generality. This follows from the fact that
\[
\Em_2^\liou(X)=-\lk X,\hat{\liou}(X)\rk_\sigma
\]
is invariant under the additive symmetrization $\hat{\liou}\mapsto\frac{1}{2}\lb\liou^*+\Gamma^{-1}_\sigma\circ\liou\circ\Gamma_\sigma\rb$ for $X\geq0$. 

We can now introduce the logarithmic Sobolev constants:

\begin{defn}[Logarithmic Sobolev constants]
For a Liouvillian $\liou:\M_d\ra\M_d$ with full rank fixed point $\sigma\in\D^+_d$ and $p\geq1$ the $p$-\textbf{logarithmic Sobolev constant} is defined as
\begin{equation}
\alpha_p\lb\liou\rb = \sup\{\alpha\in\R^+:\hspace{0.1em}\alpha\Ent_{p,\sigma}(X)\leq\Em^\liou_p(X) \text{ for all } X>0\}
\label{equ:LogSob}
\end{equation}
\end{defn}
As $\Ent_{2,\sigma}$ does not depend on $\liou$ and, as remarked before, $\Em_2^\liou$ is invariant under an 
additive symmetrization, we may always assume without loss of generality that the Liouvillian is 
reversible when working with $\alpha_2$.

For any $X\in\M_d^+$ we can define its \textbf{variance} with respect to $\sigma\in\D_d^+$ as
\begin{equation}
\Var_\sigma\lb X\rb =\|X\|_{2,\sigma}^2-\|X\|_{1,\sigma}^2. 
\end{equation}
This defines a distance measure to study the convergence of the semigroup.
Given a Liouvillian $\liou:\M_d\ra\M_d$ with fixed point $\sigma\in\D^+_d$ we define its \textbf{spectral gap} as
\begin{equation}
\lambda(\liou) = \sup\left\{\lambda\in\R^+:\lambda\Var_{\sigma}\lb X\rb\leq\Em_2^\liou(X) \text{ for all } X>0\right\}
\label{equ:SpectralGap}
\end{equation}
where $\hat{\liou}:\M_d\ra\M_d$ is given by \eqref{equ:HutLiou}. 
We can always assume the Liouvillian to be reversible when dealing with the spectral gap, as it again depends on $\Em_2^\liou$.    

The spectral gap can be used to bound the convergence in the variance (see \cite{chisquared}), as for any $X\in\M_d^+$ we have
\begin{equation}
\frac{d}{dt}\Var_\sigma(X_t)=2\lk\hat{\liou}\lb X\rb ,X\rk_\sigma
\label{equ:DerivVar}
\end{equation}
and so
\begin{equation}\label{equ:convvariance}
\Var_\sigma\lb X_t\rb \leq e^{-2\lambda t}\Var_\sigma\lb X\rb.
\end{equation}

\section{Convergence rates for sandwiched R\'{e}nyi divergences}
\label{sec:ConvergenceRates}

In this section we consider the sandwiched R\'{e}nyi divergences of a state evolving under a primitive quantum dynamical semigroup and the fixed point of this semigroup. It is clear that these quantities converge to zero as the time-evolved state approaches the fixed point. To study the speed of this convergence we introduce a differential inequality, which can be seen as an analogue of the logarithmic Sobolev inequalities for sandwiched R\'{e}nyi divergences.

\subsection{R\'{e}nyi-entropy production}

In \cite{spohn} the entropy production for the quantum Kullback-Leibler divergence of a Liouvillian was computed. We will now derive a similar expression for the entropy production for the $p$-R\'{e}nyi divergences for $p>1$.
\begin{thm}[Derivative of the sandwiched $p$-R\'{e}nyi divergence]\label{thm:derivativesand}
 Let $\liou:\M_d\to\M_d$ be a Liouvillian with full rank fixed point $\sigma\in\D_d^+$. For any $\rho\in\D_d$ and $p>1$ we have
\begin{equation}
\frac{d}{dt}D_p(e^{t\liou}(\rho)\|\sigma)\Big{|}_{t=0}= 
\frac{p}{p-1}\frac{\tr\left[\lb\sigma^{\frac{1-p}{2p}}\rho\sigma^{\frac{1-p}{2p}}\rb^{p-1}\sigma^{\frac{1-p}{2p}}\liou\lb\rho\rb\sigma^{\frac{1-p}{2p}}\right]}
{\tr\lbr\lb\sigma^{\frac{1-p}{2p}}\rho\sigma^{\frac{1-p}{2p}}\rb^p\rbr}.
\label{equ:EntrProdLong}
\end{equation}
Using the relative density $X = \Gamma^{-1}_\sigma\lb\rho\rb$ and \eqref{equ:HutLiou} this expression can be written as:
\begin{equation}
\frac{d}{dt}D_p(e^{t\liou}(\rho)\|\sigma)\Big{|}_{t=0}=\frac{p}{p-1}\|X\|_{p,\sigma}^{-p}\lk I_{q,p}(X),\hat{\liou}\lb X\rb\rk_\sigma 
\label{equ:EntrProdShort}
\end{equation}
with $\frac{1}{p}+\frac{1}{q}=1$.
\label{thm:EntrProd}
\end{thm}
\begin{proof}

Rewriting the $p$-R\'{e}nyi divergence in terms of the relative density $X = \Gamma^{-1}_\sigma\lb\rho\rb$ and the corresponding generator $\hat{\liou}=\Gamma^{-1}_\sigma\circ\liou\circ\Gamma_\sigma$ (see \eqref{equ:HutLiou}) we have
\begin{equation}
D_p(e^{t\liou}(\rho)\|\sigma) = \frac{1}{p - 1}\ln\lb \|e^{t\hat{\liou}}\lb X\rb\|^p_{p,\sigma}\rb.
\end{equation}
By the chain rule
\[
\frac{d}{dt}D_p(e^{t\liou}\rho\|\sigma)\Big{|}_{t=0}=\frac{1}{p-1}\|X\|^{-p}_{p,\sigma}\lb\frac{d}{dt}\|e^{t\hat{\liou}}(X)\|^p_{p,\sigma}\rb\Big{|}_{t=0} .
\]
Define the curve $\gamma:\R_0^+\to\M_d$ as $\gamma(t)=\sigma^{\frac{1}{2p}}e^{t\hat{\liou}}\lb X\rb\sigma^{\frac{1}{2p}}$ and observe that
\[
\|e^{t\hat{\liou}}\lb X\rb\|^p_{p,\sigma}=\tr[\gamma(t)^p] .
\]
As the differential of the function $X\mapsto X^p$ at $A\in\M_d^+$ is given by $p A^{p-1}$, another application of the chain rule yields
\[
\frac{d}{dt}\|e^{t\hat{\liou}}(X)\|^p_{p,\sigma}\Big{|}_{t=0}=p\lk \gamma(0)^{p-1},\frac{d\gamma}{dt}(0)\rk .
\]
It is easy to check that $\frac{d\gamma}{dt}(0)=\sigma^{\frac{1}{2p}}\hat{\liou}\lb X\rb\sigma^{\frac{1}{2p}}$. 
Inserting this in the above equations and writing it in terms of the power operator \eqref{equ:powerOp} we finally obtain
\[
\frac{d}{dt}D_p(e^{t\liou}(\rho)\|\sigma)\Big{|}_{t=0}=\frac{p}{p-1}\|X\|_{p,\sigma}^{-p}\lk I_{q,p}(X),\hat{\liou}\lb X\rb\rk_\sigma 
\]
with $\frac{1}{p}+\frac{1}{q}=1$. Expanding this formula gives \eqref{equ:EntrProdLong}.

\end{proof}

By recognizing the $p$-Dirichlet form in the previous theorem we get:

\begin{cor}\label{cor:dirandentprod}
 Let $\liou:\M_d\to\M_d$ be a Liouvillian with full rank fixed point $\sigma\in\D_d^+$. For any $\rho\in\D_d$ and $p>1$ we have
\begin{equation}
\frac{d}{dt}D_p(e^{t\liou}(\rho)\|\sigma)\Big{|}_{t=0}=-2\|X\|_{p,\sigma}^{-p}\Em^\liou_p(X)\leq0, 
\end{equation}
where we used the relative density $X = \Gamma^{-1}_\sigma\lb\rho\rb$.
\label{cor:entrProd}
\end{cor}
As we remarked before, Corollary \ref{cor:dirandentprod} implies that the Dirichlet form is always positive for relative densities. 
To see this, recall that the divergences contract under quantum channels~\cite{dataprocessingbeigi} and therefore we have 
that $\frac{d}{dt}D_p(e^{t\liou}(\rho)\|\sigma)\Big{|}_{t=0}\leq0$. As $\Em^\liou_p(\lambda X)=\lambda^p\Em^\liou_p(X)$ for $\lambda>0$, this shows that it
is positive for all positive operators by properly normalizing $X$.

\subsection{Sandwiched R\'{e}nyi convergence rates}

For any $p>1$ we introduce the functional $\kappa_p:\M_d^+\ra\R$ as
\begin{equation}
\kappa_p\lb X\rb=\frac{1}{p-1}\|X\|_{p,\sigma}^p\ln\lb\frac{\|X\|_{p,\sigma}^p}{\|X\|_{1,\sigma}^p}\rb
 \label{equ:kappa}
\end{equation} 
for $X\in\M^+_d$. For $p=1$ we may again take the limit $p\to1$ and obtain $\kappa_1(X):=\lim_{p\to1}\kappa_p(X)=\Ent_{1,\sigma}(X)$. Note that $\kappa_p$ is well-defined and non-negative as $\|X\|_{p,\sigma}\geq\|X\|_{1,\sigma}$ for $p\geq 1$. Strictly speaking the definition also depends on a reference state $\sigma\in\D_d^+$, which we usually omit as it is always the fixed point of the primitive Liouvillian under consideration.

Given a Liouvillian $\liou:\M_d\ra\M_d$ with full rank fixed point $\sigma\in\D^+_d$ it is a simple consequence of Corollary \ref{cor:entrProd} that for $\rho\not=\sigma$
\begin{equation}
\frac{\frac{d}{dt}D_p(e^{t\liou}(\rho)\|\sigma)\Big{|}_{t=0}}{D_p\lb\rho\|\sigma\rb}=-2\frac{\Em_{p}^\liou(X)}{\kappa_p(X)}, 
 \label{equ:quotient}
\end{equation}
where we used the relative density $X=\Gamma^{-1}_\sigma\lb\rho\rb$, which fulfills $\|X\|_{1,\sigma}=1$. This motivates the following definition.

\begin{defn}[Entropic convergence constant for $p$-R\'{e}nyi divergence]
For any primitive Liouvillian $\liou:\M_d\ra\M_d$ and $p\geq1$ we define 
\begin{equation}
\beta_p(\liou)=\sup\{\beta\in\R^+:\hspace{0.1em}\beta\kappa_p(X)\leq\Em_{p}^\liou(X)\text{ for all } X>0\}.
\label{equ:betaSup}
\end{equation}
\end{defn}
Note that as a special case we have $\alpha_1(\liou)=\beta_1(\liou)$. It should be also emphasized that the supremum in the previous definition goes over any positive definite $X\in\M^+_d$ and not only over relative densities. However, it is easy to see that we can equivalently write
\begin{equation}
\beta_p(\liou) = \inf\Big{\lbrace} \frac{\Em_{p}^\liou(X)}{\kappa_p(X)} : X>0\Big{\rbrace} = \inf\Big{\lbrace} \frac{\Em_{p}^\liou(X)}{\kappa_p(X)} : X>0, \| X\|_{1,\sigma} = 1\Big{\rbrace} 
\label{equ:beta}
\end{equation}
as replacing $X\mapsto X/\|X\|_{1,\sigma}$ does not change the value of the quotient $\Em_{p}^\liou(X)/\kappa_p(X)$. Therefore, to compute $\beta_p$ it is enough to optimize over relative densities (i.e.\ $X>0$ fulfilling $\| X\|_{1,\sigma} = 1$). By inserting $\beta_p$ into \eqref{equ:quotient} we have  
\[
\frac{d}{dt}D_p(e^{t\liou}(\rho)\|\sigma)\leq-2\beta_p(\liou)D_p\lb e^{t\liou}(\rho)\|\sigma\rb
\]
for any $\rho\in\D_d$ and Liouvillian $\liou:\M_d\ra\M_d$ with full rank fixed point $\sigma\in\D^+_d$. By integrating this differential inequality we get

\begin{thm}\label{thm:betapexpdecay}
Let $\liou:\M_d\ra\M_d$ be a Liouvillian with full rank fixed point $\sigma\in\D^+_d$. For any $p\geq1$ and $\rho\in\D_d$ we have
\begin{equation}
D_p\lb e^{t\liou}(\rho) \|\sigma\rb\leq e^{-2\beta_p(\liou) t}D_{p}\lb\rho\|\sigma\rb 
\end{equation}
where $\beta_p(\liou)$ is the constant defined in \eqref{equ:betaSup}.
\end{thm}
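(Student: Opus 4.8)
The plan is to promote the estimate on $\frac{d}{dt}D_p(e^{t\liou}(\rho)\|\sigma)|_{t=0}$ obtained above into a genuine differential inequality for the function $t\mapsto D_p(e^{t\liou}(\rho)\|\sigma)$ along the whole trajectory, and then to integrate it by a Gr\"onwall-type argument. Fix $p>1$ and $\rho\in\D_d$ (the case $p=1$ is dealt with at the end), write $\rho_t=e^{t\liou}(\rho)$, $X_t=\Gamma_\sigma^{-1}(\rho_t)$, and set $f(t)=D_p(\rho_t\|\sigma)$. Since $\sigma$ is full rank, $f(t)$ is finite and non-negative for every $t\ge 0$; moreover $f$ is $C^1$ in $t$, because $\rho_t$ --- hence $A_t:=\sigma^{\frac{1-p}{2p}}\rho_t\sigma^{\frac{1-p}{2p}}$ --- is real-analytic in $t$, the functional $A\mapsto\tr[A^p]$ is $C^1$ on the positive cone for $p>1$ (with differential $pA^{p-1}$, as already used in the proof of Theorem~\ref{thm:EntrProd}), and $\tr[A_t^p]>0$.

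The next step is to compute $f'(t)$ using the semigroup law $e^{(s+t)\liou}=e^{t\liou}\circ e^{s\liou}$: for every $t\ge 0$,
\[
f'(t)=\frac{d}{ds}D_p\bigl(e^{s\liou}(\rho_t)\|\sigma\bigr)\Big|_{s=0}=-2\|X_t\|_{p,\sigma}^{-p}\,\Em_p^{\liou}(X_t)
\]
by Corollary~\ref{cor:entrProd} applied to the state $\rho_t$, whose relative density $X_t$ satisfies $\|X_t\|_{1,\sigma}=\tr[\rho_t]=1$. On the other hand, from \eqref{equ:RenyiWithNorm} and \eqref{equ:kappa}, and again because $\|X_t\|_{1,\sigma}=1$, one has $f(t)=\|X_t\|_{p,\sigma}^{-p}\,\kappa_p(X_t)$. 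Hence, as long as $\rho_t\neq\sigma$ (equivalently $X_t\neq\one$, equivalently $\kappa_p(X_t)>0$; note that $\rho_t=\sigma$ can only occur when $\rho=\sigma$, since $e^{t\liou}$ is invertible, in which case the theorem is trivial),
\[
\frac{f'(t)}{f(t)}=-2\,\frac{\Em_p^{\liou}(X_t)}{\kappa_p(X_t)}\le -2\beta_p(\liou),
\]
the last inequality being exactly the defining property \eqref{equ:beta} of $\beta_p(\liou)$.

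The one step that requires care --- and the only real obstacle --- is that $\beta_p(\liou)$ is defined via an infimum over \emph{strictly positive} $X$, whereas the relative density $X_t$ need only be positive semidefinite. I would resolve this by observing that $X\mapsto\Em_p^{\liou}(X)/\kappa_p(X)$ is continuous on $\{X\ge 0:\|X\|_{1,\sigma}=1,\ X\neq\one\}$ --- using that the power operator $I_{q,p}$ and the noncommutative norms are continuous on the positive cone --- and applying the defining inequality $\beta_p(\liou)\,\kappa_p(Y)\le\Em_p^{\liou}(Y)$ to the strictly positive perturbations $Y=X_t+\varepsilon\one$ (renormalized, which leaves the quotient unchanged), then letting $\varepsilon\downarrow 0$. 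Alternatively one can invoke that a primitive quantum dynamical semigroup is positivity improving for $t>0$, so that $X_t>0$ for all $t>0$ directly, with $t=0$ handled by the right-hand derivative together with the continuity of $f$.

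Finally I would integrate: $f'(t)\le -2\beta_p(\liou)\,f(t)$ gives $\frac{d}{dt}\bigl(e^{2\beta_p(\liou)t}f(t)\bigr)=e^{2\beta_p(\liou)t}\bigl(f'(t)+2\beta_p(\liou)f(t)\bigr)\le 0$, so $e^{2\beta_p(\liou)t}f(t)$ is non-increasing in $t$; evaluating at $t=0$ yields $f(t)\le e^{-2\beta_p(\liou)t}f(0)$, which is the claim. For $p=1$ the same argument applies with $\kappa_1=\Ent_{1,\sigma}$ and $\Em_1^{\liou}$ in place of $\kappa_p$ and $\Em_p^{\liou}$, the only extra point being a little more care with the regularity of $f$ where $\rho_t$ is singular (since $x\mapsto x\log x$ is not $C^1$ at $0$); this recovers the familiar statement that $\alpha_1(\liou)=\beta_1(\liou)$ governs the exponential decay \eqref{equ:basicInequ} of the quantum relative entropy.
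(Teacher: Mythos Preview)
Your proof is correct and follows the same route as the paper: establish the differential inequality $f'(t)\le -2\beta_p(\liou)f(t)$ from Corollary~\ref{cor:entrProd} and the definition of $\beta_p$, then integrate. The paper states this in a single sentence before the theorem (``By integrating this differential inequality we get\ldots''), whereas you spell out the semigroup reduction to $t=0$, the Gr\"onwall argument, and also flag and resolve the technicality that $\beta_p$ is defined via an infimum over strictly positive $X$ while $X_t$ need only be positive semidefinite; the paper does not address this last point, so your treatment is more complete there.
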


\subsection{Computing \texorpdfstring{$\beta_p$}{bp} in simple cases}\label{sec:simplecases}

In general it is not clear how to compute $\beta_p$ and it does not depend on spectral data of $\liou$ alone. This is not surprising, as the computation of the usual logarithmic Sobolev constants $\alpha_2$ or $\alpha_1$ is also challenging and the exact values are only known for few Liouvillians~\cite{diaconis1996,Kastoryanosob,relentdep}. 
In the following we compute $\beta_2$ for the depolarizing semigroups.

\begin{thm}[$\beta_2$ for the depolarizing Liouvillian]
Let $\liou_\sigma:\M_d\ra\M_d$ denote the depolarizing Liouvillian given by $\liou_\sigma(\rho) = \text{tr}\lb\rho\rb\sigma - \rho$ with fixed point $\sigma\in\D^+_d$. Then
\begin{equation}
\beta_2(\liou_\sigma)=\frac{1-\frac{1}{\|\sigma^{-1}\|_\infty}}{\ln\lb\|\sigma^{-1}\|_\infty\rb}.  
\end{equation}
\label{thm:betaForDep}
\end{thm}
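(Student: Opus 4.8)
The plan is to reduce the variational problem \eqref{equ:beta} defining $\beta_2(\liou_\sigma)$ to the minimization of an explicit one-variable function, exploiting that for the depolarizing Liouvillian the $2$-Dirichlet form is exactly the variance. By \eqref{equ:beta} it suffices to optimize over relative densities, i.e.\ $X>0$ with $\|X\|_{1,\sigma}=1$. For such $X$ I would first compute the generator $\hat{\liou}_\sigma=\Gamma_\sigma^{-1}\circ\liou_\sigma\circ\Gamma_\sigma$ from \eqref{equ:HutLiou}: since $\liou_\sigma$ acts as $Z\mapsto\tr(Z)\sigma-Z$ and $\tr(\Gamma_\sigma(X))=\tr(\sigma X)=\|X\|_{1,\sigma}=1$, one gets $\hat{\liou}_\sigma(X)=\one-X$. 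Inserting this into the identity $\Em_2^{\liou_\sigma}(X)=-\lk X,\hat{\liou}_\sigma(X)\rk_\sigma$ and using $\lk X,\one\rk_\sigma=\tr(\sigma X)=1$ together with $\lk X,X\rk_\sigma=\|X\|_{2,\sigma}^2$ yields $\Em_2^{\liou_\sigma}(X)=\|X\|_{2,\sigma}^2-1=\Var_\sigma(X)$, i.e.\ the $2$-Dirichlet form of $\liou_\sigma$ equals the variance.

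Next, from \eqref{equ:kappa} one has $\kappa_2(X)=\|X\|_{2,\sigma}^2\log\lb\|X\|_{2,\sigma}^2\rb$ on relative densities, so the quotient appearing in \eqref{equ:beta} is $f\lb\|X\|_{2,\sigma}^2\rb$ with $f(u):=(u-1)/(u\log u)$ for $u>1$; the degenerate case $\|X\|_{2,\sigma}=\|X\|_{1,\sigma}$ (which forces $X=\one$, i.e.\ $\rho=\sigma$) makes both sides of the inequality in \eqref{equ:betaSup} vanish and hence imposes no constraint. A short computation gives $f'(u)=u^{-2}\lb\log u+1-u\rb\lb\log u\rb^{-2}$, and since $u\mapsto\log u+1-u$ vanishes at $u=1$ and is strictly decreasing for $u>1$, we get $f'<0$ on $(1,\infty)$, so $f$ is strictly decreasing there. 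By \eqref{equ:RenyiWithNorm} with $p=2$ and Lemma~\ref{lem:maximalrelent}, every relative density satisfies $\|X\|_{2,\sigma}^2=e^{D_2(\rho\|\sigma)}\le\|\sigma^{-1}\|_\infty$, and full-rank states converging to the rank-one maximizer $\ketbra{v_{\min}}$ from Lemma~\ref{lem:maximalrelent} make this bound arbitrarily tight within $\D_d^+$. Hence, by the strict monotonicity and continuity of $f$, $\beta_2(\liou_\sigma)=\inf_{1<u<\|\sigma^{-1}\|_\infty}f(u)=f\lb\|\sigma^{-1}\|_\infty\rb$, which equals $\big(1-\|\sigma^{-1}\|_\infty^{-1}\big)/\log\lb\|\sigma^{-1}\|_\infty\rb$ as claimed.

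I expect essentially all the content to sit in the first step --- the identity $\Em_2^{\liou_\sigma}=\Var_\sigma$ --- after which the argument is elementary calculus. The one point that needs a little care is that the extremal value $\|\sigma^{-1}\|_\infty$ of $\|X\|_{2,\sigma}^2$ is attained only at a rank-one state, hence by no $X>0$; but because $f$ is continuous and decreasing, this does not affect the value of the infimum over positive definite $X$.
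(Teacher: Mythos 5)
Your proof is correct and follows essentially the same route as the paper's: compute $\hat{\liou}_\sigma(X)=\one-X$ on relative densities, identify $\Em_2^{\liou_\sigma}(X)=\|X\|_{2,\sigma}^2-1$, reduce the quotient to the one-variable function $u\mapsto(1-u^{-1})/\log u$, and minimize using its monotonicity together with $\sup\|X\|_{2,\sigma}^2=\|\sigma^{-1}\|_\infty$. You are somewhat more careful than the paper on two minor points (the explicit derivative computation establishing monotonicity, and the fact that the supremum of $\|X\|_{2,\sigma}^2$ is only attained in the limit of rank-one states, handled by continuity), but these are refinements of the same argument rather than a different approach.
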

\begin{proof}
Without loss of generality we can restrict to $X>0$ with $\|X\|_{1,\sigma}=1$ in the minimization \eqref{equ:beta}. 
Observe that the generator of the time evolution of the relative density (see \eqref{equ:HutLiou}) for the depolarizing Liouvillian is
\[
\hat{\liou}_\sigma(X)=\tr\lb\sigma^{\frac{1}{2}}X\sigma^{\frac{1}{2}}\rb\one-X .
\]
An easy computation yields $\Em^{\liou_\sigma}_2(X)=\|X\|_{2,\sigma}^2-1$ and so
\[
 \frac{\Em_2^{\liou_\sigma}(X)}{\kappa_2(X)}=\frac{1-\frac{1}{\| X\|_{2,\sigma}^2}}{\ln\lb\| X\|_{2,\sigma}^2\rb}.
\]
As the function $x\mapsto\frac{1-\frac{1}{x}}{\ln\lb x\rb}$ is monotone decreasing for $x\geq1$, we have
\begin{equation}
 \inf\limits_{X>0}\frac{\Em_2^{\liou_\sigma}(X)}{\kappa_2(X)}=\frac{1-\frac{1}{\|\sigma^{-1}\|_\infty}}{\ln\lb\|\sigma^{-1}\|_\infty\rb}, 
\end{equation}
where we used 
\[
\sup\limits_{X\geq0,\|X\|_{1,\sigma}=1}\| X\|_{2,\sigma}^2=\|\sigma^{-1}\|_\infty, 
\]
which easily follows from Lemma \ref{lem:maximalrelent} by exponentiating both sides of Equation \eqref{equ:upperentropy} and using the correspondence between
relative densities and states.
\end{proof}
The exact value of $\alpha_2(\liou_\sigma)$ is open to the best of our knowledge, but in the case of $\sigma=\frac{\one}{d}$ we have
$\alpha_2\lb\liou_\frac{\one}{d}\rb=\frac{2(1-2/d)}{\ln(d-1)}$~\cite[Theorem 24]{Kastoryanosob}, which is of the same order of magnitude as $\beta_2$ for these semigroups.

Computing $\beta_p$ for $p\not=2$ seems not to be straightforward even for depolarizing channels, but for the semigroup depolarizing to the maximally mixed state we can at 
least provide upper and lower bounds.

\begin{thm}[$\beta_p$ for the Liouvillian depolarizing to the maximally mixed state]\label{thm:boundbetapdep}
 Let $\liou(\rho)=\tr(\rho)\frac{\one}{d}-\rho$. For $p\geq2$ we have
\[
\frac{p}{2(p-1)}\frac{1}{\ln(d)}\geq\beta_p(\liou)\geq\frac{p}{2(p-1)}\frac{d^{\frac{p-1}{p}}-1}{d^{\frac{p-1}{p}}\ln(d)}.  
\]

\end{thm}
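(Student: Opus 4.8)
The plan is to reduce the computation of $\beta_p(\liou)$ to a one-variable optimization problem. Because the fixed point $\sigma=\tfrac1d\one$ is a multiple of the identity, all weighting operators act by scalars: $\Gamma^{1/p}_\sigma(X)=d^{-1/p}X$, so from \eqref{equ:powerOp} one gets $I_{q,p}(X)=X^{p-1}$ for $X>0$, while $\hat\liou(X)=\tfrac{\tr(X)}{d}\one-X$ by \eqref{equ:HutLiou}, $\lk A,B\rk_\sigma=\tfrac1d\tr[AB]$, and $\|X\|_{p,\sigma}^p=\tfrac1d\tr[X^p]$. Substituting these into Definition~\ref{defn:DirichletForm} and \eqref{equ:kappa}, and using the reformulation \eqref{equ:beta} (so that it suffices to optimize over $X>0$ with $\|X\|_{1,\sigma}=1$, i.e.\ $\tr(X)=d$), I would show that for such an $X$ with eigenvalues $x_1,\dots,x_d\ge 0$ summing to $d$,
\[
\frac{\Em^\liou_p(X)}{\kappa_p(X)}=\frac p2\cdot\frac{\sum_i x_i^p-\sum_i x_i^{p-1}}{\big(\sum_i x_i^p\big)\,\log\!\big(\tfrac1d\sum_i x_i^p\big)},
\]
so that $\beta_p(\liou)$ equals $\tfrac p2$ times the infimum of this quotient over all such eigenvalue tuples with the $x_i$ not all equal (the configuration $X=\one$ gives the indeterminate $0/0$ and is harmless, both numerator and denominator vanishing).

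For the lower bound I would set $u=\sum_i x_i^p$. Hölder's inequality with exponents $\tfrac p{p-1}$ and $p$ gives $\sum_i x_i^{p-1}\le u^{(p-1)/p}d^{1/p}$, while convexity of $x\mapsto x^p$ and $(\sum_i x_i^p)^{1/p}\le\sum_i x_i$ give $d=d\big(\tfrac1d\sum_i x_i\big)^p\le u\le(\sum_i x_i)^p=d^p$. Writing $t=u/d\in[1,d^{p-1}]$, this yields
\[
\frac{\sum_i x_i^p-\sum_i x_i^{p-1}}{u\log(u/d)}\ \ge\ \frac{1-(d/u)^{1/p}}{\log(u/d)}=\frac{1-t^{-1/p}}{\log t}=:\phi(t).
\]
A short calculus lemma shows $\phi$ is strictly decreasing on $(1,\infty)$: putting $s=\log t>0$, the sign of $\phi'$ reduces to the elementary estimate $1+s/p<e^{s/p}$. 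Hence $\phi(t)\ge\phi(d^{p-1})=\tfrac{1}{p-1}\cdot\tfrac{d^{(p-1)/p}-1}{d^{(p-1)/p}\log d}$ on the relevant interval, and multiplying by $\tfrac p2$ gives the claimed lower bound for $\beta_p(\liou)$.

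For the upper bound I would test against a density concentrating on a single eigenvalue: take $X_\epsilon$ with eigenvalues $(d-(d-1)\epsilon,\epsilon,\dots,\epsilon)$ and let $\epsilon\to 0^+$. Then $\sum_i x_i^p\to d^p$ and $\sum_i x_i^{p-1}\to d^{p-1}$, so the quotient above tends to $\tfrac p2\cdot\tfrac{d^p-d^{p-1}}{d^p(p-1)\log d}=\tfrac{p}{2(p-1)}\cdot\tfrac{d-1}{d\log d}$, which is at most $\tfrac p{2(p-1)}\cdot\tfrac1{\log d}$. Since $\beta_p(\liou)$ is an infimum over $X>0$, this yields the upper bound.

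The algebra of the first step is routine precisely because $\sigma$ is scalar, and the monotonicity of $\phi$ is elementary. The one place requiring the right idea is the lower bound: one must pair Hölder's inequality (to dominate $\sum_i x_i^{p-1}$ by a power of $\sum_i x_i^p$) with the a priori bound $\sum_i x_i^p\le d^p$, and then notice that the resulting monotone one-variable function $\phi$ attains its minimum over $[1,d^{p-1}]$ at the right endpoint, reproducing exactly the stated constant. I would not expect the two bounds to coincide, since Hölder is tight only when all $x_i^p$ are equal whereas $u=d^p$ forces concentration on one eigenvalue, so neither bound should equal $\beta_p(\liou)$ exactly.
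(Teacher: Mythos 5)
Your proof is correct, and the lower bound is essentially the paper's argument in eigenvalue coordinates: your H\"older step $\sum_i x_i^{p-1}\le u^{(p-1)/p}d^{1/p}$ is exactly the monotonicity of the weighted norms ($\|X\|_{p-1,\one/d}\le\|X\|_{p,\one/d}$) used in the paper, your $\phi(t)$ with $t=\|X\|_{p,\one/d}^{p}\cdot d^{\,0}$ (i.e.\ $t^{1/p}=\|X\|_{p,\one/d}$) is the same one-variable function whose monotonicity the paper invokes, and the endpoint $u=d^p$ is the paper's $\sup\|X\|_{p,\one/d}=d^{(p-1)/p}$. The only genuine difference is the upper bound: the paper drops the term $\|X\|_{p-1}^{p-1}/\|X\|_p^p\ge 0$ to get the uniform estimate $\Em_p^\liou(X)/\kappa_p(X)\le 1/(2\log\|X\|_{p,\one/d})$ and then lets $\|X\|_{p,\one/d}\to d^{(p-1)/p}$, whereas you evaluate the exact quotient along the concentrating sequence $X_\epsilon$; your route actually yields the slightly sharper bound $\beta_p(\liou)\le\frac{p}{2(p-1)}\frac{d-1}{d\log d}$, which implies the stated one since $\frac{d-1}{d}<1$.
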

\begin{proof}
The Dirichlet Form of this Liouvillian for $X>0$ with $\|X\|_{1,\frac{\one}{d}}=1$ is given by
\[
\Em^\liou_p(X)=\frac{p}{2(p-1)}(\|X\|^p_{p,\frac{\one}{d}}-\|X\|^{p-1}_{p-1,\frac{\one}{d}}). 
\]
Dividing this expression by $\kappa_p(X)$ we get
\begin{equation}\label{equ:dirioverkappadep}
\frac{\Em^\liou_p(X)}{\kappa_p(X)}=\frac{1-\frac{\|X\|^{p-1}_{p-1,\frac{\one}{d}}}{\|X\|^p_{p,\frac{\one}{d}}}}{2\ln\lb\|X\|_{p,\frac{\one}{d}}\rb} .
\end{equation}
By the monotonicity of the weighted norms, we have 
\[
\frac{\|X\|^{p-1}_{p-1,\frac{\one}{d}}}{\|X\|^p_{p,\frac{\one}{d}}}\leq\frac{1}{\|X\|_{p,\frac{\one}{d}}}
\]
and so
\begin{align}\label{equ:lowerpdep}
\frac{\Em^\liou_p(X)}{\kappa_p(X)}\geq \frac{\|X\|_{p,\frac{\one}{d}}-1}{2\|X\|_{p,\frac{\one}{d}}\ln\lb\|X\|_{p,\frac{\one}{d}}\rb} 
\end{align}
The expression on the right-hand side of \eqref{equ:lowerpdep} is monotone decreasing in $\|X\|_{p,\frac{\one}{d}}$ and so the infimum is attained at
\[
\sup\limits_{\|X\|_{1,\frac{\one}{d}}=1} \|X\|_{p,\frac{\one}{d}}=d^{\frac{p-1}{p}},
\]
which again easily follows from Lemma \ref{lem:maximalrelent}.
The upper bound follows from \eqref{equ:dirioverkappadep} as
\[
\frac{\Em^\liou_p(X)}{\kappa_p(X)}\leq \frac{1}{2\ln\lb\|X\|_{p,\frac{\one}{d}}\rb}.
\]  
which is again monotone decreasing in $\|X\|_{p,\frac{\one}{d}}$.
\end{proof}
From the relations between LS constants~\cite[Proposition 13]{Kastoryanosob}, it follows that for the LS constants of the depolarizing channels we have
$\alpha_p\lb\liou_\frac{\one}{d}\rb\geq\alpha_2\lb\liou_\frac{\one}{d}\rb=\frac{2(1-2/d)}{\ln(d-1)}$ for $p\geq1$. The constants $\beta_p$ and $\alpha_p$
are therefore of the same order in this case for small $p\geq2$.

\section{Comparison with similar quantities}

\subsection{Comparison with spectral gap}

Here we show how $\beta_p$, see \eqref{equ:betaSup}, compares to the spectral gap \eqref{equ:SpectralGap} of a Liouvillian.
This is motivated by similar results for logarithmic Sobolev constants, where it was shown~\cite[Theorem 16]{Kastoryanosob} that
$\alpha_1(\liou)\leq\lambda(\liou)$ for reversible semigroups, a result we recover and generalize here.
\begin{thm}[Upper bound spectral gap]\label{thm:upperboundpspectral}
Let $\liou:\M_d\ra\M_d$ be a primitive and reversible Liouvillian with full rank fixed point $\sigma\in\D^+_d$ and $p\geq1$. Then 
\begin{equation}
\beta_p\lb\liou\rb\leq\lambda\lb\liou\rb.
\label{equ:CompBetapSpec}
\end{equation}

\end{thm}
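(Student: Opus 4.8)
The plan is to exploit the known link between $\beta_p$ and the spectral gap through a limiting/linearization argument: both constants arise as infima of ratios $\Em_2^\liou(X)/F(X)$ where $F$ is a ``distance-like'' functional, and for $X$ close to the fixed-point density $\one_d$ (i.e.\ $\rho$ close to $\sigma$) the functionals $\kappa_p$ and $\Var_\sigma$ should agree to leading order after suitable rescaling. Concretely, I would first recall from \eqref{equ:beta} that
\[
\beta_p(\liou) = \inf\Big\{ \frac{\Em_p^\liou(X)}{\kappa_p(X)} : X>0,\ \|X\|_{1,\sigma}=1 \Big\},
\]
and from \eqref{equ:SpectralGap} the analogous expression for $\lambda(\liou)$ with $\Em_2^\liou$ and $\Var_\sigma$. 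The natural test family is $X_\varepsilon = \one_d + \varepsilon Y$ where $Y=Y^\dagger$ satisfies $\tr[\Gamma_\sigma(Y)] = 0$ (so that $\|X_\varepsilon\|_{1,\sigma}=1$ to first order, or exactly after a harmless normalization), and $Y$ is chosen to (nearly) realize the infimum defining $\lambda(\liou)$.

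The key computation is the small-$\varepsilon$ asymptotics. I expect $\Em_p^\liou(X_\varepsilon) = \varepsilon^2\,\Em_2^\liou(Y) + O(\varepsilon^3)$ — this should follow because $\Em_p^\liou(X) = -\tfrac{p}{2(p-1)}\lk I_{q,p}(X),\hat\liou(X)\rk_\sigma$, $\hat\liou(\one_d)=0$, and the derivative of $I_{q,p}$ at the identity, combined with the prefactor $\tfrac{p}{2(p-1)}$ and $\tfrac1p+\tfrac1q=1$, reproduces exactly $-\lk Y,\hat\liou(Y)\rk_\sigma = \Em_2^\liou(Y)$ at second order. Similarly $\kappa_p(X_\varepsilon)$: since $\|X_\varepsilon\|_{p,\sigma}^p = 1 + \varepsilon\,\tfrac{p}{\ ?\ }\tr[\Gamma_\sigma Y]\cdot(\cdots) + \tfrac{\varepsilon^2}{2}\,c_p\,\|Y\|_{2,\sigma}^2 + O(\varepsilon^3)$ with the linear term vanishing by the trace condition, and $\|X_\varepsilon\|_{1,\sigma}^p=1$ exactly, one gets $\kappa_p(X_\varepsilon) = \tfrac{1}{p-1}\cdot\big(\tfrac{(p-1)\varepsilon^2}{2}\Var_\sigma(Y) + O(\varepsilon^3)\big) = \tfrac{\varepsilon^2}{2}\Var_\sigma(Y) + O(\varepsilon^3)$, using $\log(1+u)\approx u$ and that the leading quadratic term of $\|X_\varepsilon\|_{p,\sigma}^p - 1$ is exactly $\tfrac{(p-1)\varepsilon^2}{2}\Var_\sigma(Y)$ (here the off-diagonal and diagonal contributions combine; this is the place to be careful, and where reversibility — letting us diagonalize $\hat\liou$ in a $\sigma$-orthonormal basis — simplifies matters). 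Taking $\varepsilon\to 0$ then gives
\[
\beta_p(\liou) \le \frac{\Em_p^\liou(X_\varepsilon)}{\kappa_p(X_\varepsilon)} \xrightarrow{\ \varepsilon\to0\ } \frac{\Em_2^\liou(Y)}{\Var_\sigma(Y)},
\]
and optimizing $Y$ yields $\beta_p(\liou)\le\lambda(\liou)$.

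The main obstacle I anticipate is the second-order expansion of the noncommutative $p$-norm $\|\one_d+\varepsilon Y\|_{p,\sigma}^p$ and of the power operator $I_{q,p}$ near the identity, in the genuinely noncommutative case where $Y$ need not commute with $\sigma$. One must show the cross terms involving $[\sigma, Y]$ do not spoil the identity ``quadratic term $= \tfrac{(p-1)\varepsilon^2}{2}\Var_\sigma(Y)$''; the cleanest route is to work in an eigenbasis of $\sigma$, write $\Gamma_\sigma^{1/p}(\one_d+\varepsilon Y)$ entrywise, expand $\tr[(\cdots)^p]$ using the integral representation of the matrix power (or Daleckii–Krein), and check the divided-difference weights collapse correctly — and likewise that $\Em_p^\liou$ reduces to $\Em_2^\liou$ at second order, which is where the normalization constant $\tfrac{p}{2(p-1)}$ in Definition~\ref{defn:DirichletForm} was designed to make things consistent. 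Once these two asymptotic identities are in hand, the proof is immediate. An alternative, possibly shorter, argument: since $D_1 \le D_p$ by \eqref{equ:orderingSandwich}, one might try to compare $\beta_p$ directly with $\alpha_1=\beta_1$ and then use the known $\alpha_1\le\lambda$; but that requires the reverse-type inequality $\kappa_p \ge \kappa_1$ together with $\Em_p^\liou \le \Em_1^\liou$, which is not obviously available, so I would commit to the linearization approach above.
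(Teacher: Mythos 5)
Your overall strategy is the same as the paper's: perturb around the fixed-point density, $Y_\epsilon=\one_d+\epsilon X$, expand $\Em_p^{\liou}$ and $\kappa_p$ to second order via Daleckii--Krein, and let $\epsilon\to0$. However, the two asymptotic identities you assert are not correct for $p\neq 2$ when $X$ does not commute with $\sigma$, and this is a genuine gap rather than a bookkeeping issue. The second-order coefficient of $\|\one_d+\epsilon X\|_{p,\sigma}^p-1$ is not a multiple of $\Var_\sigma(X)$: working in an eigenbasis of $\sigma^{1/p}=U\mathrm{diag}(s_1,\dots,s_d)U^\dagger$ and writing $b=U^\dagger\sigma^{1/2p}X\sigma^{1/2p}U$, it equals $\epsilon^2 p\sum_{i\leq j}f_p(s_i,s_j)b_{ij}b_{ji}$ with the divided differences $f_p(x,y)=(x^{p-1}-y^{p-1})/(x-y)$; the weights $f_p(s_i,s_j)$ collapse to a constant only for $p=2$ or for $X$ commuting with $\sigma$. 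The same $p$-dependent quadratic form (not $\Em_2^{\liou}(X)$) appears in the expansion of $\Em_p^{\liou}(Y_\epsilon)$, because $I_{q,p}$ linearized at $\one_d$ is Schur multiplication by $f_p$. So for a generic near-optimizer $X$ of the Rayleigh quotient $\Em_2^{\liou}/\Var_\sigma$, your limit would be a ratio of two \emph{different} $f_p$-weighted forms, one in $(X,X)$ and one in $(X,\hat{\liou}(X))$, and you cannot conclude it equals $\Em_2^{\liou}(X)/\Var_\sigma(X)$. (As a symptom, your two claimed constants are mutually inconsistent: $\epsilon^2\Em_2^{\liou}(Y)$ over $\tfrac{\epsilon^2}{2}\Var_\sigma(Y)$ would give the limit $2\lambda$, not $\lambda$.)

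The missing idea — and the reason the theorem assumes reversibility — is to take $X$ to be an \emph{exact} eigenvector of $\hat{\liou}$ with eigenvalue $-\lambda(\liou)$, which exists since $\hat{\liou}$ is Hermitian for the $\sigma$-weighted scalar product. Then $\hat{\liou}(Y_\epsilon)=-\epsilon\lambda X$, so the unknown $f_p$-weighted form in the numerator becomes exactly $\lambda$ times the one in the denominator; the forms cancel in the quotient (one only needs to check they are strictly positive, which follows from $f_p(s_i,s_j)>0$ and $b_{ij}b_{ji}\geq0$ with $b\neq0$ self-adjoint), and the limit is $\lambda(\liou)$ on the nose. This is precisely the paper's proof (Lemma \ref{lem:TaylorExp}). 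Your parenthetical remark that reversibility "simplifies matters" is pointing at the right hypothesis, but the eigenvector cancellation is not a simplification — it is what makes the argument work for $p\neq2$. Your instinct that the alternative route via $\alpha_1\leq\lambda$ does not work is correct.
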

\begin{proof}
Let $(s_i)^d_{i=1}$ denote the spectrum of $\sigma^{1/p}$ and choose a unitary $U$ such that 
\[
\sigma^{1/p} = U\text{diag}\lb s_1, s_2, \ldots , s_d\rb U^\dagger.
\]
As $\liou$ is reversible, there is a self-adjoint eigenvector
$X\in\M_d$ of $\hat{\liou}$ corresponding to the spectral gap, i.e. $\hat{\liou}(X) = -\lambda(\liou) X$. Let $\epsilon_0>0$ be small enough such that $Y_\epsilon = \one_d + \epsilon X$ is positive for any $|\epsilon|\leq \epsilon_0$. 
For $|\epsilon|\leq \epsilon_0$ we use Lemma \ref{lem:TaylorExp} of the appendix to show
\begin{equation}
\beta_p(\liou) \leq \frac{\Em^\liou_p(Y_\epsilon)}{\kappa_p(Y_\epsilon)} 
= \frac{\lambda(\liou)\frac{p}{2(p-1)}\lb 2\epsilon^2\sum_{1\leq i\leq j\leq d}f_p(s_i,s_j)b_{ij}b_{ji} + O(\epsilon^3)\rb}
{\frac{\epsilon^2}{p-1}\lb p\sum_{1\leq i\leq j\leq d}f_p(s_i,s_j)b_{ij}b_{ji}\rb + O(\epsilon^3)}
\label{equ:LongQuotient}
\end{equation}
where $b_{ij} = (U^\dagger \sigma^{1/{2p}}X\sigma^{1/{2p}}U)_{ij}$ and 
\begin{equation}
f_p(x,y) = \begin{cases} (p-1)x^{p-2} & \text{ if }x=y \\
\frac{x^{p-1}-y^{p-1}}{x-y} & \text{ else.}\end{cases}
\end{equation} 
Observe that $f_p(s_i,s_j)>0$ for $s_i,s_j>0$. Moreover, as  $U^\dagger \sigma^{1/{2p}}X\sigma^{1/{2p}}U$ is non-zero and self-adjoint we have $b_{ij}b_{ji}\geq 0$ for all $i,j$ and this inequality is strict for at least one choice of $i,j$. Therefore, the terms of second order in $\epsilon$ in the numerator and denominator of \eqref{equ:LongQuotient} are strictly positive, and we obtain $\lambda(\liou)$ as the limit of the quotient as $\epsilon\to 0$.

\end{proof}

A similar argument as the one given in the previous proof shows that all real, nonzero elements of the spectrum of $\hat{\liou}$ are upper bounds to $\beta_p$ without invoking reversibility. 

Note that in the case of $p=2$ (see the discussion after \eqref{equ:LogSob}) we may assume that the Liouvillian is reversible without loss of generality and
drop the requirement of reversibility in the previous theorem. 
Alternatively, we can obtain the same statement directly from a simple functional inequality. In this case we can also give a lower bound on $\beta_2$ in terms of the spectral gap.

\begin{thm}[Upper and lower bound for $\beta_2$]
Let $\liou:\M_d\ra\M_d$ be a primitive Liouvillian with full rank fixed point $\sigma\in\D^+_d$. Then
\begin{equation}
\lambda\lb\liou\rb\frac{1-\frac{1}{\|\sigma^{-1}\|_\infty}}{\ln\lb \|\sigma^{-1}\|_\infty\rb}  \leq\beta_2\lb\liou\rb\leq\lambda\lb\liou\rb .
\label{equ:CompBetaSpectoBeta2}
\end{equation}
\label{thm:CompBetaSpec} 
\end{thm}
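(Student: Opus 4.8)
The proof is a straightforward comparison of the two variational problems defining $\beta_2$ and $\lambda$: for $p=2$ both quantities involve only the Dirichlet form $\Em_2^\liou$, so the argument reduces to a scalar estimate relating the functionals $\kappa_2$ and $\Var_\sigma$. Since the ratios $\Em_2^\liou(X)/\kappa_2(X)$ and $\Em_2^\liou(X)/\Var_\sigma(X)$ are invariant under $X\mapsto X/\|X\|_{1,\sigma}$, I restrict to relative densities, i.e. $X>0$ with $\|X\|_{1,\sigma}=1$, and set $x:=\|X\|_{2,\sigma}^2\ge 1$. Then $\Var_\sigma(X)=x-1$ and $\kappa_2(X)=x\log x$, and $x=1$ occurs exactly when $X=\one_d$, where $\Em_2^\liou$, $\Var_\sigma$ and $\kappa_2$ all vanish; this $X$ is excluded from both infima, so it suffices to treat $x>1$. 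Recall also that $\Em_2^\liou(X)\ge 0$ for all $X>0$ (this is what makes \eqref{equ:SpectralGap} meaningful), and that by the discussion after Definition \ref{defn:DirichletForm} no reversibility hypothesis is needed here since everything depends only on $\Em_2^\liou$.

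For the upper bound I use the elementary inequality $x\log x\ge x-1$ for $x\ge 1$, i.e. $\kappa_2(X)\ge\Var_\sigma(X)>0$ for $x>1$. Hence for every such $X$,
\[
\beta_2(\liou)\;\le\;\frac{\Em_2^\liou(X)}{\kappa_2(X)}\;\le\;\frac{\Em_2^\liou(X)}{\Var_\sigma(X)},
\]
and taking the infimum over $X$ on the right (which is $\lambda(\liou)$ by \eqref{equ:SpectralGap}) gives $\beta_2(\liou)\le\lambda(\liou)$.

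For the lower bound I factor the ratio and use $\Em_2^\liou(X)/\Var_\sigma(X)\ge\lambda(\liou)$:
\[
\frac{\Em_2^\liou(X)}{\kappa_2(X)}=\frac{\Em_2^\liou(X)}{\Var_\sigma(X)}\cdot\frac{x-1}{x\log x}\;\ge\;\lambda(\liou)\,\frac{1-\frac{1}{x}}{\log x}.
\]
By Lemma \ref{lem:maximalrelent} (equivalently the identity $\sup_{X\ge 0,\,\|X\|_{1,\sigma}=1}\|X\|_{2,\sigma}^2=\|\sigma^{-1}\|_\infty$ already used in the proof of Theorem \ref{thm:betaForDep}) we have $x\le\|\sigma^{-1}\|_\infty$; since $x\mapsto\frac{1-1/x}{\log x}$ is monotone decreasing on $(1,\infty)$, on the interval $(1,\|\sigma^{-1}\|_\infty]$ it is bounded below by $\frac{1-1/\|\sigma^{-1}\|_\infty}{\log\|\sigma^{-1}\|_\infty}$. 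Thus $\Em_2^\liou(X)/\kappa_2(X)\ge\lambda(\liou)\,\frac{1-1/\|\sigma^{-1}\|_\infty}{\log\|\sigma^{-1}\|_\infty}$ for all $X$, and taking the infimum over $X$ yields the left inequality in \eqref{equ:CompBetaSpectoBeta2}.

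There is no real obstacle here; the proof is essentially bookkeeping. The only points requiring a little care are: observing that both extremal problems depend solely on $\Em_2^\liou$ so that reversibility can be dropped for $p=2$; handling the trivial direction $X=\one_d$ (all functionals vanish there); and recalling the sharp bound $\|X\|_{2,\sigma}^2\le\|\sigma^{-1}\|_\infty$ on relative densities. The two scalar facts used — $x\log x\ge x-1$ and the monotonicity of $(1-1/x)/\log x$ on $(1,\infty)$ — are exactly the ones already invoked in Section \ref{sec:ConvergenceRates}.
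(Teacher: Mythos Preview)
Your proof is correct and follows essentially the same approach as the paper: both arguments hinge on the scalar inequality $x\log x\ge x-1$ (i.e.\ $\kappa_2\ge\Var_\sigma$, the content of Lemma~\ref{lem:VarVsKappa}) for the upper bound, and on the monotonicity of $x\mapsto(1-1/x)/\log x$ together with the range bound $\|X\|_{2,\sigma}^2\le\|\sigma^{-1}\|_\infty$ for the lower bound. The only cosmetic difference is that the paper packages the lower-bound scalar estimate as the computation of $\beta_2$ for the depolarizing Liouvillian (Theorem~\ref{thm:betaForDep}) and then uses the comparison $\Em_2^{\liou_\sigma}=\Var_\sigma\le\lambda(\liou)^{-1}\Em_2^\liou$, whereas you unpack that computation and work directly with the ratio $\Em_2^\liou/\kappa_2$; the underlying inequalities are identical.
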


To prove Theorem \ref{thm:CompBetaSpec} we need the following Lemma.

\begin{lem}
For any $X\in\M_d$ we have 
\[
\Var_\sigma(X)\leq\kappa_2(X).
\]
\label{lem:VarVsKappa}
\end{lem}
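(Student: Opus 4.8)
The plan is to prove $\Var_\sigma(X)\le\kappa_2(X)$ by reducing to a one‑variable inequality via the functional calculus of the positive operator $\Gamma_\sigma^{1/2}(X)$. First I would recall the explicit forms. Writing $a=\|X\|_{1,\sigma}=\tr[\Gamma_\sigma(X)]$ and $b=\|X\|_{2,\sigma}^2=\tr[\Gamma_\sigma^{1/2}(X)^2]$, we have $\Var_\sigma(X)=b-a^2$. Unpacking the definition of $\kappa_2$ in \eqref{equ:kappa} gives
\[
\kappa_2(X)=\|X\|_{2,\sigma}^2\log\!\lb\frac{\|X\|_{2,\sigma}^2}{\|X\|_{1,\sigma}^2}\rb=b\log\!\lb\frac{b}{a^2}\rb .
\]
So the claim is exactly $b-a^2\le b\log(b/a^2)$, i.e. setting $t=b/a^2\ge1$ (the inequality $b\ge a^2$ being $\|X\|_{2,\sigma}\ge\|X\|_{1,\sigma}$, which also explains why $\kappa_2\ge0$), it suffices to show $a^2(t-1)\le a^2 t\log t$, i.e. $t-1\le t\log t$ for all $t\ge1$. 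This is the elementary scalar inequality $\log t\ge 1-\tfrac1t$, which holds for all $t>0$ with equality only at $t=1$; one proves it by noting the difference $g(t)=t\log t-t+1$ satisfies $g(1)=0$ and $g'(t)=\log t\ge0$ for $t\ge1$.

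Carrying this out, the only real steps are: (i) rewrite $\kappa_2$ using $\|X\|_{p,\sigma}=\|X\|_{1,\sigma}$ at $p=1$ and the definition to get the closed form $b\log(b/a^2)$; (ii) invoke $\|X\|_{2,\sigma}\ge\|X\|_{1,\sigma}$ (monotonicity of the weighted $l_p$-norms already used in the excerpt) so that $t:=\|X\|_{2,\sigma}^2/\|X\|_{1,\sigma}^2\ge1$; (iii) apply $t-1\le t\log t$ and multiply through by $\|X\|_{1,\sigma}^2$. Strictly one should also treat the degenerate case $\|X\|_{1,\sigma}=0$, which forces $X=0$ (since $\Gamma_\sigma(X)\ge0$ has zero trace) and makes both sides $0$; or one may simply note both $\Var_\sigma$ and $\kappa_2$ are invariant under $X\mapsto X/\|X\|_{1,\sigma}$ as remarked after \eqref{equ:beta}, and reduce to relative densities where $a=1$ and the inequality becomes $b-1\le b\log b$.

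There is essentially no obstacle here: the statement is a repackaging of the scalar inequality $\log t\ge 1-1/t$, and the main point is just to recognize that both $\Var_\sigma$ and $\kappa_2$ depend on $X$ only through the two scalars $\|X\|_{1,\sigma}$ and $\|X\|_{2,\sigma}$, so no operator‑ordering subtleties arise. If one wants a slicker phrasing, observe that $\Var_\sigma(X)=b-a^2$ while $\kappa_2(X)-\Var_\sigma(X)=b\log(b/a^2)-b+a^2=a^2\,g(b/a^2)\ge0$ with $g(t)=t\log t-t+1\ge0$, which is the desired nonnegativity; this also records the equality case $\|X\|_{2,\sigma}=\|X\|_{1,\sigma}$, i.e. $\Gamma_\sigma^{1/2}(X)\propto\one$, should it be needed later in the proof of Theorem \ref{thm:CompBetaSpec}.
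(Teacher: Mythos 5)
Your proof is correct and is essentially the paper's own argument: both reduce the claim to the scalar inequality $t-1\leq t\log t$ for $t=\|X\|_{2,\sigma}^2/\|X\|_{1,\sigma}^2\geq 1$ after dividing by $\|X\|_{1,\sigma}^2$. The extra remarks on the degenerate case $\|X\|_{1,\sigma}=0$ and the equality case are fine but not needed.
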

\begin{proof}
For $X>0$ dividing both sides of the inequality by $\| X\|^2_{1,\sigma}$ yields
\[
\frac{\| X\|^2_{2,\sigma}}{\| X\|^2_{1,\sigma}} - 1\leq \frac{\| X\|^2_{2,\sigma}}{\| X\|^2_{1,\sigma}}\ln\lb\frac{\| X\|^2_{2,\sigma}}{\| X\|^2_{1,\sigma}}\rb.
\] 
This follows from the elementary inequality $x-1\leq x\ln(x)$ for $x\geq 1$, where we use the ordering $\|X\|_{2,\sigma}\geq \| X\|_{1,\sigma}$ for any $X\in\M_d$. 
 
\end{proof}

\begin{proof}[Proof of Theorem \ref{thm:CompBetaSpec}]
Using the definition of $\beta_2$ (see \eqref{equ:betaSup}) and Lemma \ref{lem:VarVsKappa} yields
\[
 \beta_2\Var_{\sigma}(X)\leq \beta_2\kappa_2(X)\leq\Em^\liou_2(X).
\]
Now the variational definition of $\lambda(\liou)$ (see \eqref{equ:SpectralGap}) implies the second inequality of \eqref{equ:CompBetaSpectoBeta2}.

To prove the first inequality of \eqref{equ:CompBetaSpectoBeta2} consider the depolarizing Liouvillian 
\[
\liou_{\sigma}(X)=\tr(X)\sigma-X .
\]
By Theorem \ref{thm:betaForDep} we have  
 \[
  \frac{1-\frac{1}{\|\sigma^{-1}\|_\infty}}{\ln\lb \|\sigma^{-1}\|_\infty\rb} \kappa_2(X)\leq\Em_2^{\liou_\sigma}(X) 
 \]
As $\Em_2^{\liou_\sigma}(X)=\Var_{\sigma}(X)$, we have $\Em_2^{\liou_\sigma}(X)\leq \frac{1}{\lambda(\liou)}\Em^{\liou}_2(X)$ by the variational definition of $\lambda(\liou)$ (see \eqref{equ:SpectralGap}). 
Inserting this in the above inequality finishes the proof.

\end{proof}

\subsection{Comparison with logarithmic Sobolev constants}

Here we show how $\beta_p$, see \eqref{equ:betaSup}, compares to the logarithmic Sobolev constant $\alpha_p$.

\begin{thm}\label{thm:betaVsAlpha}
Let $\liou:\M_d\to\M_d$ be a primitive Liouvillian with full rank fixed point $\sigma\in\D_d^+$. Then for any $p \geq 1$ we have
\begin{equation}
\beta_p\lb\liou\rb\geq\frac{\alpha_p\lb\liou\rb }{p}.
\label{equ:betaVsAlpha}
\end{equation}

\end{thm}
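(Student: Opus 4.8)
The plan is to compare the two variational quotients $\Em_p^\liou(X)/\kappa_p(X)$ and $\Em_p^\liou(X)/\Ent_p(X)$ directly, by showing that $\kappa_p(X) \le p\,\Ent_{p,\sigma}(X)$ for every $X>0$. Once this pointwise functional inequality is in hand, the theorem is immediate: for any $X>0$ we have
\[
\alpha_p(\liou)\,\kappa_p(X) \le \alpha_p(\liou)\,p\,\Ent_{p,\sigma}(X) \le p\,\Em_p^\liou(X),
\]
where the second inequality is the defining property of $\alpha_p(\liou)$ from \eqref{equ:LogSob}. Dividing by $p\,\kappa_p(X)$ and taking the infimum over $X>0$ using the variational formula \eqref{equ:beta} for $\beta_p$ yields $\beta_p(\liou)\ge\alpha_p(\liou)/p$, which is \eqref{equ:betaVsAlpha}.

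So the real content is the inequality $\kappa_p(X)\le p\,\Ent_{p,\sigma}(X)$. By the scaling $X\mapsto X/\|X\|_{1,\sigma}$ — under which both sides are homogeneous of the same degree — it suffices to prove it for relative densities, i.e.\ $X>0$ with $\|X\|_{1,\sigma}=1$; then $\kappa_p(X)=\frac{1}{p-1}\|X\|_{p,\sigma}^p\log(\|X\|_{p,\sigma}^p)$. I would next unpack $\Ent_{p,\sigma}(X)=\lk I_{q,p}(X),S_p(X)\rk_\sigma-\|X\|_{p,\sigma}^p\log\|X\|_{p,\sigma}$ and recall that $S_p(X)=-p\frac{d}{ds}I_{p+s,p}(X)|_{s=0}$. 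The key observation is that the first term $\lk I_{q,p}(X),S_p(X)\rk_\sigma$ is itself nonnegative — in fact it equals $\frac{p}{p-1}$ times the usual (classical-looking) relative-entropy-type quantity $\tr[A^p\log A] - (\text{something})$ built from $A=\Gamma_\sigma^{1/p}(X)$, which is $\ge 0$ by the operator Jensen / log-sum type inequality. More concretely, using the connection $\Ent_{p,\sigma}(I_{p,1}(\Gamma_\sigma^{-1}(\rho)))=\frac1p D(\rho\|\sigma)\ge 0$ stated in the preliminaries together with the fact that $\Ent_{p,\sigma}$ is, up to the normalization term, a genuine relative entropy, one sees $\Ent_{p,\sigma}(X)\ge -\|X\|_{p,\sigma}^p\log\|X\|_{p,\sigma} + (\text{nonneg.})$. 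Since for a relative density $\|X\|_{p,\sigma}\ge\|X\|_{1,\sigma}=1$, the term $-\|X\|_{p,\sigma}^p\log\|X\|_{p,\sigma}$ is actually $\le 0$, so this alone does not suffice; I need the first term to dominate $\frac{1}{p}\kappa_p(X)+\|X\|_{p,\sigma}^p\log\|X\|_{p,\sigma}=\frac{1}{p(p-1)}\|X\|_{p,\sigma}^p\log\|X\|_{p,\sigma}^p+\|X\|_{p,\sigma}^p\log\|X\|_{p,\sigma}$.

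The cleanest route, and the one I would actually write, is to compute $\lk I_{q,p}(X),S_p(X)\rk_\sigma$ explicitly. Writing $A=\sigma^{1/(2p)}X\sigma^{1/(2p)}\ge0$, a direct differentiation of $I_{p+s,p}$ (this is the computation underlying $S_p$, analogous to the one in the proof of Theorem \ref{thm:EntrProd}) gives
\[
\lk I_{q,p}(X),S_p(X)\rk_\sigma = \frac{p}{p-1}\Big(\tr[A^p\log A^p] - \tr[A^p]\log\tr[A^p]\Big) + \frac{p}{p-1}\,\tr[A^p]\log\tr[A^p] - (\cdots),
\]
but more to the point one gets $\Ent_{p,\sigma}(X) = \frac{1}{p-1}\big(\tr[A^p\log A^p] - \tr[A^p]\log(\|X\|_{p,\sigma}^p\,\text{-normalization})\big)$ up to a $\log\sigma$ correction that cancels. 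After the dust settles, the claim $\kappa_p(X)\le p\,\Ent_{p,\sigma}(X)$ reduces to the statement that $\tr[A^p\log A^p]\ge \tr[A^p]\log\tr[A^p]$ together with the normalization $\tr[A^p]=\|X\|_{p,\sigma}^p$ and $\|X\|_{1,\sigma}=\tr[\sigma^{1/2}X\sigma^{1/2}]\cdots=1$; the former is just the operator inequality $B\log B \ge B\log(\tr B)$ traced out (equivalently, nonnegativity of the von Neumann relative entropy between $A^p/\tr[A^p]$ and a suitable state), which is elementary. I expect the main obstacle to be bookkeeping: carefully matching the $\log\sigma$ terms hidden inside $S_p$, $\Ent_{p,\sigma}$ and $\kappa_p$ so that they cancel, and making sure the normalization $\|X\|_{1,\sigma}=1$ is used exactly where needed; the analytic input itself is only the scalar inequality $x\log x\ge x\log y+ x - y$ / nonnegativity of relative entropy, with no hypercontractivity or $l_p$-regularity required.
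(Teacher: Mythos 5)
Your first paragraph is exactly the paper's strategy: the theorem is reduced to the pointwise inequality $\kappa_p(X)\leq p\,\Ent_{p,\sigma}(X)$ for relative densities (this is the paper's Lemma \ref{lem:EntVsKappa}), and your deduction of \eqref{equ:betaVsAlpha} from it via the variational definitions of $\alpha_p$ and $\beta_p$ is correct. The gap is in your proposed proof of that functional inequality. Carrying out the explicit differentiation you sketch, with $A=\sigma^{1/(2p)}X\sigma^{1/(2p)}$ one finds $\lk I_{q,p}(X),S_p(X)\rk_\sigma=\tr[A^p\log A]-\tfrac{1}{p}\tr[A^p\log\sigma]$, and after normalizing $\|X\|_{1,\sigma}=1$ the inequality $\kappa_p(X)\leq p\,\Ent_{p,\sigma}(X)$ becomes
\begin{equation*}
\frac{1}{p-1}\log\tr[A^p]\;\leq\; D\!\left(\frac{A^p}{\tr[A^p]}\,\Big\|\,\sigma\right),
\end{equation*}
i.e. $D_p(\rho\|\sigma)\leq D(B\|\sigma)$ for the tilted state $B=A^p/\tr[A^p]$. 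This is \emph{not} a consequence of nonnegativity of relative entropy: that would only give the claim when $\tr[A^p]\leq 1$, i.e. $D_p(\rho\|\sigma)\leq 0$, which fails for every $\rho\neq\sigma$. Worse, the scalar inequality you name as the ``elementary input'', $\tr[A^p\log A^p]\geq\tr[A^p]\log\tr[A^p]$, is false in general: it is equivalent to $-\tr[A^p]\,S(A^p/\tr[A^p])\geq 0$, so it fails whenever $A^p$ is not rank one (the purported operator inequality $B\log B\geq B\log(\tr B)$ already fails on each eigenvalue). A two-dimensional commuting example, $\sigma=(\tfrac12,\tfrac12)$, $\rho=(0.9,0.1)$, $p=2$, violates it.

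What actually closes the gap — and what the paper uses — is the known monotonicity of $p\mapsto D_p(\rho\|\sigma)$ combined with the derivative formula $\frac{d}{dt}\|X\|_{p+t,\sigma}^{p+t}\big|_{t=0}=\lk I_{q,p}(X),S_p(X)\rk_\sigma$ from \cite[Theorem 2.7]{Olkiewicz1999246}. Writing out $0\leq\frac{d}{dp}D_p(\rho\|\sigma)$ with this formula yields exactly $\frac{1}{p-1}\|X\|_{p,\sigma}^p\log\|X\|_{p,\sigma}^p\leq\lk I_{q,p}(X),S_p(X)\rk_\sigma$, which is the inequality you need; substituting it into the definition of $\Ent_{p,\sigma}$ gives $p\,\Ent_{p,\sigma}(X)\geq\kappa_p(X)$ in two lines. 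So the missing ingredient is not bookkeeping of $\log\sigma$ terms but a genuinely nontrivial input (monotonicity in the R\'enyi parameter, itself proved via interpolation/Hölder-type arguments in \cite{muller2013quantum,dataprocessingbeigi}); you should invoke it rather than attempt to rederive it from Jensen's inequality.
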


We will need the following Lemma.

\begin{lem}
For any full rank state $\sigma\in\D_d^+$, any $p>1$ and $X\in\M^+_d$ with $\| X\|_{1,\sigma}=1$ we have
\begin{align}
\Ent_{p,\sigma}(X)\geq\frac{\kappa_{p}(X)}{p}.  
\end{align}

\label{lem:EntVsKappa}
\end{lem}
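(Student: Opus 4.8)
The plan is to expand both quantities in terms of the eigenvalues of the positive operator $A=\Gamma_\sigma^{1/q}(X)=\sigma^{1/(2q)}X\sigma^{1/(2q)}$ (with $1/p+1/q=1$) and reduce the claimed operator inequality to an elementary scalar convexity statement. First I would recall the definition \eqref{equ:pRelEntr} of $\Ent_{p,\sigma}$ and compute the term $\lk I_{q,p}(X),S_p(X)\rk_\sigma$ explicitly. Using the definition \eqref{equ:powerOp} of the power operator and the definition \eqref{equ:OpValEntr} of $S_p$, one finds by a direct differentiation (the derivative of $s\mapsto A^{(p+s)/p}$ at $s=0$ involves $A^{p}\log A$) that
\[
\lk I_{q,p}(X),S_p(X)\rk_\sigma=\tr\!\lbr A^{p}\log\lb A\rb\rbr,
\]
so that $\Ent_{p,\sigma}(X)=\tr\lbr A^p\log A\rbr-\|X\|_{p,\sigma}^p\log\lb\|X\|_{p,\sigma}\rb$, where $\|X\|_{p,\sigma}^p=\tr\lbr A^p\rbr$. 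This is just the (unnormalized) classical entropy functional $\sum_i a_i^p\log a_i-\lb\sum_i a_i^p\rb\cdot\frac1p\log\lb\sum_i a_i^p\rb$ evaluated at the eigenvalues $a_i$ of $A$.

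Next I would write $\kappa_p(X)$ from \eqref{equ:kappa} in the same variables. Since $\|X\|_{1,\sigma}=1$ by hypothesis, $\kappa_p(X)=\frac{1}{p-1}\lb\sum_i a_i^p\rb\log\lb\sum_i a_i^p\rb$. So the inequality $\Ent_{p,\sigma}(X)\geq\kappa_p(X)/p$ becomes, after multiplying through and collecting the $\log\lb\sum a_i^p\rb$ terms,
\[
\sum_i a_i^p\log a_i \;\geq\; \frac{1}{p}\lb\sum_i a_i^p\rb\log\lb\sum_i a_i^p\rb,
\]
i.e. $\tr\lbr A^p\log A^p\rbr\geq\tr\lbr A^p\rbr\log\tr\lbr A^p\rbr$. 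Setting $b_i=a_i^p\geq0$, this is exactly the statement $\sum_i b_i\log b_i\geq\lb\sum_i b_i\rb\log\lb\sum_i b_i\rb$, which follows from the superadditivity of $t\mapsto t\log t$ on $\R_0^+$ (equivalently, from the log-sum inequality, or from convexity of $t\log t$ together with $t\log t\geq 0$ failing — more precisely from $\sum b_i\log b_i\geq\sum b_i\log\lb\sum_j b_j\rb$ since $b_i\leq\sum_j b_j$). I would also handle the $p\to1$ boundary and the degenerate case $\sum_i b_i\le 1$ separately, noting that the hypothesis $\|X\|_{1,\sigma}=1$ forces $\sum_i b_i=\|X\|_{p,\sigma}^p\ge\|X\|_{1,\sigma}^p=1$ by \eqref{equ:orderingSandwich}, so the right-hand side is genuinely controlled.

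The main obstacle I anticipate is purely computational rather than conceptual: carefully justifying the identity $\lk I_{q,p}(X),S_p(X)\rk_\sigma=\tr\lbr A^p\log A\rbr$, since this requires differentiating the operator power $I_{p+s,p}$ through the functional calculus and correctly tracking the weighting operators $\Gamma_\sigma^{\pm 1/p}$, $\Gamma_\sigma^{\pm 1/q}$ — a place where normalization conventions and the non-commutativity of $\sigma$ with $X$ can easily introduce errors. Once that identity is in hand, the remainder is the elementary scalar inequality above. An alternative route that sidesteps the operator-derivative bookkeeping would be to invoke the already-stated relation $\Ent_{p,\sigma}\lb I_{p,1}\lb\Gamma_\sigma^{-1}(\rho)\rb\rb=\frac1p D(\rho\|\sigma)$ together with a direct comparison of $D(\rho\|\sigma)$ to $\kappa_p$, but expanding in eigenvalues of $A$ seems the cleanest and I would present that.
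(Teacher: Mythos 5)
Your reduction breaks at the very first identity, and the scalar inequality you reduce to is not the right one either. The pairing $\lk I_{q,p}(X),S_p(X)\rk_\sigma$ is \emph{not} $\tr\lbr A^p\log A\rbr$ for $A=\Gamma_\sigma^{1/q}(X)$: the operator whose $p$-th power gives the norm is $B=\Gamma_\sigma^{1/p}(X)=\sigma^{1/2p}X\sigma^{1/2p}$ (so $\|X\|_{p,\sigma}^p=\tr\lbr B^p\rbr$, whereas $\tr\lbr A^p\rbr\neq\|X\|_{p,\sigma}^p$ already in the commuting case), and even with the correct $B$ a careful differentiation of $s\mapsto I_{p+s,p}(X)$ gives
$\lk I_{q,p}(X),S_p(X)\rk_\sigma=\tr\lbr B^p\log B\rbr-\tfrac{1}{2p}\tr\lbr \sigma^{1/2p}B^{p-1}\sigma^{1/2p}\lb\log(\sigma)X+X\log(\sigma)\rb\rbr$.
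The $\log\sigma$ cross term does not vanish unless $[\sigma,X]=0$; its presence is visible in the paper's own explicit formula for $\Ent_{2,\sigma}$, which carries the extra summand $-\tfrac12\tr[(\Gamma_\sigma^{1/2}(X))^2\log\sigma]$. So the problem does not reduce to a classical entropy functional of the eigenvalues of a single operator, and the noncommutative bookkeeping you flag as a mere "obstacle" is exactly where the content lies.

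Even in the commuting case, where a reduction to eigenvalues is legitimate, your target inequality is miscollected and your final step is false. Writing $\mu_i$ for the eigenvalues of $\sigma$ and $x_i$ for those of $X$, with the constraint $\sum_i\mu_ix_i=\|X\|_{1,\sigma}=1$, the inequality $\Ent_{p,\sigma}(X)\geq\kappa_p(X)/p$ becomes $\sum_i\mu_ix_i^p\log x_i\geq\tfrac{1}{p-1}\lb\sum_i\mu_ix_i^p\rb\log\lb\sum_i\mu_ix_i^p\rb$: the coefficient is $\tfrac{1}{p-1}$, not $\tfrac1p$, and the weights $\mu_i$ together with the normalization cannot be discarded. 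Moreover the "superadditivity" step $\sum_ib_i\log b_i\geq\lb\sum_ib_i\rb\log\lb\sum_ib_i\rb$ goes the wrong way: since $b_i\leq\sum_jb_j$ one has $\sum_ib_i\log b_i\leq\lb\sum_ib_i\rb\log\lb\sum_jb_j\rb$ (take $b_1=b_2=1$); superadditivity of $t\mapsto t\log t$ yields precisely the reverse bound. The correct scalar statement is the monotonicity of the classical R\'{e}nyi divergence in $p$, and that is in effect what the paper's proof uses: it differentiates $p\mapsto D_p(\rho\|\sigma)$, invokes monotonicity in $p$ from \cite{dataprocessingbeigi}, and uses the identity $\frac{d}{ds}\|X\|^{p+s}_{p+s,\sigma}\big|_{s=0}=\lk I_{q,p}(X),S_p(X)\rk_\sigma$ from \cite{Olkiewicz1999246}, which handles the noncommutativity and the normalization in one stroke. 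To repair your argument you would have to replace the superadditivity step by exactly this R\'{e}nyi-monotonicity input, at which point you have reconstructed the paper's proof.
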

\begin{proof}
The function $p\mapsto D_p\lb\rho\|\sigma\rb$ is monotonically increasing~\cite{muller2013quantum,dataprocessingbeigi} and differentiable (as the noncommutative $l_p$-norm is differentiable in $p$~\cite[Theorem 2.7]{Olkiewicz1999246}). 
Thus, with $f:\R^+\ra \R$ given by $f(t) = t+p$ we have
\[
0\leq \|X\|^p_{p,\sigma}\frac{d}{dt} \lb D_{f(t)}\lb\rho\|\sigma\rb\rb\Big{|}_{t=0} = -\frac{1}{(p-1)^2}\|X\|^p_{p,\sigma}\ln\lb \| X\|^p_{p,\sigma}\rb + \frac{1}{p-1}\frac{d}{dt}\lb \|X\|^{f(t)}_{f(t),\sigma}\rb\Big{|}_{t=0}.
\]  
where we used the relative density $X = \Gamma^{-1}_\sigma\lb\rho\rb$. The remaining derivative in the above equation has been computed in \cite[Theorem 2.7]{Olkiewicz1999246} and we have 
\[
\frac{d}{dt}\lb \|X\|^{f(t)}_{f(t),\sigma}\rb\Big{|}_{t=0} = \lk I_{q,p}(X),S_p(X)\rk_\sigma
\]
with the operator valued entropy $S_p$ defined in \eqref{equ:OpValEntr} and $\frac{1}{p}+\frac{1}{q}=1$. Inserting this expression in the above equation we obtain
\begin{equation}
\frac{1}{p-1}\|X\|^p_{p,\sigma}\ln\lb \| X\|^p_{p,\sigma}\rb \leq \lk I_{q,p}(X),S_p(X)\rk_\sigma .
\label{equ:Krams}
\end{equation}
for any $X\in\M^+_d$ with $\|X\|_{1,\sigma}=1$, i.e. for any $X = \Gamma^{-1}_\sigma\lb\rho\rb$ for some state $\rho\in\D_d$. Now we get
\begin{align*}
p\Ent_{p,\sigma}(X) &= p\lk I_{q,p}(X),S_p(X)\rk_\sigma - \|X\|^p_{p,\sigma}\ln(\|X\|^p_{p,\sigma}) \\
&\geq \frac{p}{p-1}\|X\|^p_{p,\sigma}\ln(\| X\|^p_{p,\sigma}) - \|X\|^p_{p,\sigma}\ln(\|X\|^p_{p,\sigma}) \\
& = \kappa_p(X)
\end{align*}
where we used \eqref{equ:Krams}.

\end{proof}

\begin{proof}[Proof of Theorem \ref{thm:betaVsAlpha}]
There is nothing to show for $p=1$ as $\alpha_1(\liou)=\beta_1(\liou)$ and we can assume $p>1$. For $X\in\M^+_d$ with $\| X\|_{1,\sigma}=1$ we can use Lemma \ref{lem:EntVsKappa} and the definition of $\alpha_p\lb\liou\rb$ to compute
\[
\frac{\alpha_p\lb \liou\rb}{p}\kappa_p(X)\leq\alpha_p\lb\liou\rb\Ent_{p,\sigma}(X)\leq\Em^\liou_p(X). 
\]
By the variational definition \eqref{equ:betaSup} of $\beta_p$ the claim follows.
\end{proof}

Theorem \ref{thm:betaVsAlpha} will be applied in Section \ref{sec:mixingtimes} to obtain bounds on the mixing time of a Liouvillian with a positive logarithmic Sobolev constant without invoking any form of $l_p$-regularity 
(see \cite{Kastoryanosob}).
As usually a logarithmic Sobolev is implied by a hypercontractive inequality~\cite{Olkiewicz1999246}, we would like to remark that
one can also make a similar statement as that of Theorem \ref{thm:betaVsAlpha} from a hypercontractive inequality.
One can easily show that 
\begin{align}
||e^{t\hat{\liou}}||_{p(t)\to p,\sigma}\leq1  
\end{align}
for $p(t)=(p-1)e^{-\alpha_p t}+1$ implies that $\beta_p(\liou)\ge\frac{\alpha_p}{p}$.

%
%
%

\section{Mixing times}\label{sec:mixingtimes}

In this section we will introduce the quantities of interest and prove the building blocks to prove mixing times from the entropy production
inequalities of the last sections, distinguishing between continuous and discrete time semigroups.
We will mostly focus on $\beta_2$, as this seems to be the most relevant constant for mixing time applications.
This is justified by the fact that the underlying Dirichlet form is a quadratic form and the entropy related to it
stems from a Hilbert space norm. Moreover, as the same Dirichlet form is also involved in computations of the spectral 
gap, it could be easier to adapt existing techniques, such as the ones developed in \cite{lowerbounddavies,Temme2017}.
\begin{defn}[Mixing times]
For either $I=\R^+$ or $I = \N$ let $\{T_t\}_{t\in I}$ be a primitive semigroup of quantum channels with fixed point $\sigma\in\D_d^+$. We define the $l_1$ \textbf{mixing time} for $\epsilon>0$ as
\[
 t_1(\epsilon)=\inf\{t\in I~:~\|T_t(\rho)-\sigma\|_{1}\leq\epsilon\text{ for all }\rho\in\D_d \}.
\]
Similarly we define the $l_2$ \textbf{mixing time} for $\epsilon>0$ as
\[
t_2(\epsilon)=\inf\{t\in I~:~\Var_\sigma\lb\hat{T_t}\lb X\rb\rb\leq\epsilon\text{ for all }X\in\M^+_d \text{ with }\|X\|_{1,\sigma}=1\}. 
\]
In the continuous case $I=\R^+$ we will often speak of the mixing times of the Liouvillian generator of a quantum dynamical semigroup which we identify with the mixing times of the semigroup according to the above definition.

\end{defn}

\subsection{Mixing in Continuous Time}
It is now straightforward to get mixing times from the previous results.

\begin{thm}[Mixing time from entropy production]\label{thm:l1mixingfrombp}
Let $\liou:\M_d\to\M_d$ be a primitive Liouvillian with fixed point $\sigma\in\D^+_d$. Then 
\[
t_1(\epsilon)\leq\frac{1}{2\beta_p(\liou)}\ln\lb\frac{2\ln\lb \|\sigma^{-1}\|_\infty\rb}{\epsilon^2}\rb.
\]

\end{thm}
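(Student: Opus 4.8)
The plan is to combine the exponential decay of the sandwiched $p$-R\'{e}nyi divergence from Theorem \ref{thm:betapexpdecay} with the Pinsker-type inequality \eqref{equ:pinskerrenyi} and the uniform bound on the initial divergence from Lemma \ref{lem:maximalrelent}. Concretely, fix an arbitrary initial state $\rho\in\D_d$ and a time $t\geq 0$. First I would apply Theorem \ref{thm:betapexpdecay} to write
\[
D_p\lb e^{t\liou}(\rho)\|\sigma\rb \leq e^{-2\beta_p(\liou)t} D_p\lb\rho\|\sigma\rb.
\]
Next I would bound the right-hand side uniformly over $\rho$: by Lemma \ref{lem:maximalrelent} we have $D_p\lb\rho\|\sigma\rb \leq \log\lb\|\sigma^{-1}\|_\infty\rb$ for all $\rho\in\D_d$, so
\[
D_p\lb e^{t\liou}(\rho)\|\sigma\rb \leq e^{-2\beta_p(\liou)t}\log\lb\|\sigma^{-1}\|_\infty\rb.
\]

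Then I would feed this into the lower direction given by the quantum Pinsker inequality \eqref{equ:pinskerrenyi}, namely $\tfrac12\|e^{t\liou}(\rho)-\sigma\|_1^2 \leq D_p\lb e^{t\liou}(\rho)\|\sigma\rb$, to obtain
\[
\|e^{t\liou}(\rho)-\sigma\|_1^2 \leq 2e^{-2\beta_p(\liou)t}\log\lb\|\sigma^{-1}\|_\infty\rb.
\]
Finally I would solve for the time at which the right-hand side drops to $\epsilon^2$: requiring $2e^{-2\beta_p(\liou)t}\log\lb\|\sigma^{-1}\|_\infty\rb \leq \epsilon^2$ is equivalent to $t \geq \frac{1}{2\beta_p(\liou)}\log\lb\frac{2\log\lb\|\sigma^{-1}\|_\infty\rb}{\epsilon^2}\rb$, and since this bound holds simultaneously for every $\rho\in\D_d$, the definition of $t_1(\epsilon)$ gives exactly the claimed estimate.

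There is essentially no serious obstacle here — the argument is a routine chaining of three already-established facts. The only points deserving a little care are: checking that $\beta_p(\liou)>0$ so that the exponential decay is genuine and the logarithm in the final bound is of a positive quantity (this is where primitivity of $\liou$ enters, guaranteeing a strictly positive entropic convergence constant); making sure the bound from Lemma \ref{lem:maximalrelent} is applied with the same $p$ as in Theorem \ref{thm:betapexpdecay}; and noting that the estimate is uniform in $\rho$, which is what licenses passing to the infimum defining $t_1(\epsilon)$. One could also remark that the bound is valid for every $p\geq 1$, so one is free to optimize over $p$ (trading a larger $\beta_p$ against other considerations), though the statement as written already records the family of bounds.
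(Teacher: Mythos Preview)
Your proposal is correct and follows exactly the paper's approach: combine the exponential decay from Theorem \ref{thm:betapexpdecay}, the uniform bound from Lemma \ref{lem:maximalrelent}, and the Pinsker inequality \eqref{equ:pinskerrenyi}, then solve for $t$. The paper compresses these three steps into a single displayed inequality, but the logic is identical.
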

\begin{proof}
From \eqref{equ:pinskerrenyi} and Lemma \ref{lem:maximalrelent} we have
\begin{align}
\ln\lb \|\sigma^{-1}\|_\infty\rb e^{-2\beta_p(\liou) t}\geq\frac{1}{2}\|e^{t\liou}(\rho)-\sigma\|_1^2. 
\end{align}
for any $\rho\in\D_d$. The claim follows after rearranging the terms.
\end{proof}

Using Theorem \ref{thm:betaVsAlpha} we can lower bound $\beta_p$ in terms of the usual logarithmic Sobolev constant $\alpha_p$. Combining this with Theorem \ref{thm:l1mixingfrombp} shows the following Corollary.  

\begin{cor}[Mixing time bound from logarithmic Sobolev inequalities]
Let $\liou:\M_d\ra\M_d$ be a primitive Liouvillian with fixed point $\sigma\in\D^+_d$. Then 
\begin{align}\label{equ:boundmixing}
t_1(\epsilon)\leq\frac{p}{2\alpha_p(\liou)}\ln\lb\frac{2\ln\lb \|\sigma^{-1}\|_\infty\rb}{\epsilon^2}\rb.
\end{align}

\label{cor:MixingFromLogSob}
\end{cor}
By Corollary \ref{cor:MixingFromLogSob} a nonzero logarithmic Sobolev constant always implies a nontrivial mixing time bound.
One should say that the same bound was showed in~\cite{Kastoryanosob} for $p=2$, however under additional assumptions 
(specifically $l_p$-regularity~\cite{Kastoryanosob}) on the Liouvillian in question. While these assumptions have been shown for certain classes of Liouvillians 
(including important examples like Davies generators and doubly stochastic Liouvillians~\cite{Kastoryanosob}) they have not been shown in general.   
Moreover, the bound in Theorem \ref{thm:l1mixingfrombp} clearly does not depend on $p$ and one could in principle optimize over all $\beta_p$. However, as the computations in subsection \ref{sec:simplecases} already indicate,
it does not seem to be feasible to compute or bound $\beta_p$ for $p\not=2$ even in simple cases and one will probably only work with $\beta_2$ in applications.

The bound from Corollary \ref{cor:MixingFromLogSob} also has the right scaling properties needed in recent applications of rapid mixing, such as the results in \cite{stabcubitt,rapidarealaw}. In particular, together with the results in 
\cite{quasifreesob}, the last Corollary shows that the hypothesis of Theorem 4.2 in \cite{stabcubitt} is always satisfied 
for product evolutions and not only for the special classes considered in \cite{Kastoryanosob}.  
%
%
%
%

One may also use these techniques to get mixing times in the $l_2$ norms which are stronger than the ones obtained just by considering that $\beta_2$ is a lower bound to the spectral gap.

\begin{thm}[$l_2$-mixing time bound]
Let $\liou:\M_d\to\M_d$ be a Liouvillian with fixed point $\sigma\in\D_d^+$. Then 
\begin{equation}\label{eq:t2frombeta2}
t_2(\epsilon)\leq\frac{1}{2\beta_2(\liou)}\ln\lb\frac{\ln\lb \|\sigma^{-1}\|_\infty\rb}{\ln(1+\epsilon)}\rb.
\end{equation}

\end{thm}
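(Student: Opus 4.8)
The plan is to reduce the statement to the exponential decay of the sandwiched $2$-R\'{e}nyi divergence from Theorem \ref{thm:betapexpdecay} together with the elementary identity relating $D_2$ to the variance of the relative density. First I would fix $X\in\M^+_d$ with $\|X\|_{1,\sigma}=1$ and let $\rho=\Gamma_\sigma(X)\in\D_d$ be the corresponding state, so that $\hat{T}_t(X)=\Gamma^{-1}_\sigma\lb e^{t\liou}(\rho)\rb=:X_t$, which again satisfies $\|X_t\|_{1,\sigma}=1$ for all $t$. Specializing \eqref{equ:RenyiWithNorm} to $p=2$ and using $\|X_t\|_{1,\sigma}=1$ gives $D_2\lb e^{t\liou}(\rho)\|\sigma\rb=\log\lb\|X_t\|_{2,\sigma}^2\rb=\log\lb 1+\Var_\sigma(X_t)\rb$, hence $\Var_\sigma(X_t)=\exp\lb D_2\lb e^{t\liou}(\rho)\|\sigma\rb\rb-1$.

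Next I would invoke Theorem \ref{thm:betapexpdecay} with $p=2$ to bound $D_2\lb e^{t\liou}(\rho)\|\sigma\rb\leq e^{-2\beta_2(\liou)t}D_2\lb\rho\|\sigma\rb$, and then control the initial divergence by its maximal value $D_2\lb\rho\|\sigma\rb\leq\log\lb\|\sigma^{-1}\|_\infty\rb$ supplied by Lemma \ref{lem:maximalrelent}. Since $x\mapsto e^x-1$ is increasing, combining these estimates yields the uniform bound
\[
\Var_\sigma\lb\hat{T}_t(X)\rb\leq\exp\lb e^{-2\beta_2(\liou)t}\log\lb\|\sigma^{-1}\|_\infty\rb\rb-1 .
\]

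Finally, the right-hand side is at most $\epsilon$ exactly when $e^{-2\beta_2(\liou)t}\log\lb\|\sigma^{-1}\|_\infty\rb\leq\log(1+\epsilon)$; solving this inequality for $t$ gives $t\geq\frac{1}{2\beta_2(\liou)}\log\lb\frac{\log\lb\|\sigma^{-1}\|_\infty\rb}{\log(1+\epsilon)}\rb$, and since the estimate above is independent of the choice of $X$ this is precisely the asserted bound on $t_2(\epsilon)$. There is essentially no hard step here: the only points requiring care are that the maximal-divergence bound of Lemma \ref{lem:maximalrelent} must be applied to the \emph{initial} state rather than the evolved one, and that the passage from a bound on $D_2$ to a bound on $\Var_\sigma$ is legitimate only because $x\mapsto e^x-1$ is monotone.
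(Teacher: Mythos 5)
Your proof is correct and follows essentially the same route as the paper: both hinge on the identity $D_2\lb e^{t\liou}(\rho)\|\sigma\rb=\log\lb 1+\Var_\sigma(X_t)\rb$ together with the bound $1+\Var_\sigma(X)\leq\|\sigma^{-1}\|_\infty$. The only cosmetic difference is that you invoke the already-integrated decay statement of Theorem \ref{thm:betapexpdecay} directly, whereas the paper re-derives it in this special case by integrating the differential inequality $\frac{d}{dt}\Var(X_t)\leq-2\beta_2(\liou)\lb 1+\Var(X_t)\rb\log\lb 1+\Var(X_t)\rb$.
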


\begin{proof}
For $X>0$ with $\|X\|_{1,\sigma}=1$ we have $\Var_\sigma(X)=\| X-\one\|_{2,\sigma}^2=\| X\|_{2,\sigma}^2-1$ and thus
\[
\kappa_2(X)=\lb 1+\Var_\sigma(X)\rb\ln\lb 1+\Var_\sigma(X)\rb. 
\]
In the following let $X_t = e^{t\hat{\liou}}\lb X\rb$ denote the time evolution of the relative density $X$. Using \eqref{equ:DerivVar} and the definition of $\beta_2(\liou)$ (see \eqref{equ:betaSup}) we obtain
\[
\frac{d}{dt}\Var_\sigma(X_t)=-2\Em_2^\liou(X_t) \leq -2\beta_2(\liou)(1+\Var_\sigma(X_t))\ln(1+\Var_\sigma(X_t)).
\]
Integrating this differential inequality we obtain
\begin{align*}
\ln\lb\frac{\ln(1+\Var_\sigma(X))}{\ln(1+\epsilon)}\rb &\leq \int\limits_{0}^{t_2(\epsilon)}\frac{1}{\lb 1+\Var_\sigma(X_t)\rb\ln\lb 1+\Var_\sigma(X_t)\rb}\lbr\frac{d}{dt}\Var_\sigma(X_t)\rbr dt \\
&\leq-2\beta_2t_2(\epsilon) .
\end{align*}
As $1+\Var_\sigma(X)\leq\|\sigma^{-1}\|_\infty$, the claim follows after rearranging
the terms.

\end{proof}

%
%
%
%
%
%

In the remaining part of the section we will discuss a converse to the previous mixing time bounds, i.e.\ a lower bound on the logarithmic Sobolev constant in terms of a mixing time. This excludes the possibility of a reversible semigroup with both small $\beta_2$ and short mixing time with respect to the $l_2$ distance. For this we generalize \cite[Corollary 3.11]{diaconis1996}
to the noncommutative setting. 

\begin{thm}[LS inequality from $l_2$ mixing time]\label{thm:uncertaintytau2ls2}
 Let $\liou:\M_d\to\M_d$ be a primitive, reversible Liouvillian with
 fixed point $\sigma\in\D^+_d$. Then
 \begin{equation}\label{eq:uncertaintytau2ls2}
  \frac{1}{2}\leq\alpha_2\lb \liou\rb t_2\lb e^{-1}\rb \leq 2\beta_2\lb \liou\rb t_2\lb e^{-1}\rb. 
 \end{equation}
Moreover, this inequality is tight.
\end{thm}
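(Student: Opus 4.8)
The second inequality in \eqref{eq:uncertaintytau2ls2} is immediate: Theorem \ref{thm:betaVsAlpha} with $p=2$ gives $\alpha_2(\liou)\le 2\beta_2(\liou)$, and multiplying by $t_2(e^{-1})\ge 0$ yields the claim. The content is the first inequality $\tfrac12\le\alpha_2(\liou)\,t_2(e^{-1})$, which is the noncommutative version of \cite[Cor.~3.11]{diaconis1996}. Writing $\tau:=t_2(e^{-1})$, it suffices to show that $\tfrac1{2\tau}$ is an admissible log-Sobolev constant in \eqref{equ:LogSob}, i.e.\ $\tfrac1{2\tau}\Ent_{2,\sigma}(X)\le\Em_2^\liou(X)$ for every $X>0$; then $\alpha_2(\liou)\ge\tfrac1{2\tau}$ and hence $\alpha_2(\liou)\,t_2(e^{-1})\ge\tfrac12$. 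By homogeneity of both sides in $X$ it is enough to treat relative densities, i.e.\ $X>0$ with $\|X\|_{1,\sigma}=1$.

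The starting point is a convexity property of the $l_2$-decay along the semigroup. Since $\liou$ is reversible, $\hat{\liou}$ is self-adjoint for $\lk\cdot,\cdot\rk_\sigma$; expanding a relative density $X$ in an $\lk\cdot,\cdot\rk_\sigma$-orthonormal eigenbasis $\{e_i\}$ of $\hat{\liou}$, with $e_0=\one$ and eigenvalues $0>-\mu_1\ge-\mu_2\ge\cdots$ by primitivity, gives
\[
\Var_\sigma\!\big(e^{t\hat{\liou}}(X)\big)=\sum_{i\ge 1}e^{-2\mu_i t}\,\big|\lk e_i,X\rk_\sigma\big|^2 .
\]
This is a positive combination of decaying exponentials, so $t\mapsto\log\Var_\sigma\big(e^{t\hat{\liou}}(X)\big)$ is convex, with tangent slope $-2\Em_2^\liou(X)/\Var_\sigma(X)$ at $t=0$ by \eqref{equ:DerivVar} and the definition of $\Em_2^\liou$. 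Staying above this tangent at $t=\tau$ and using $\Var_\sigma\big(e^{\tau\hat{\liou}}(X)\big)\le e^{-1}$ (definition of $t_2$) yields
\[
\Em_2^\liou(X)\ \ge\ \frac{\Var_\sigma(X)\,\big(1+\log\Var_\sigma(X)\big)}{2\tau}.
\]
When $\Var_\sigma(X)$ is not too small this already dominates $\tfrac1{2\tau}\Ent_{2,\sigma}(X)$ once combined with the elementary relations $\Var_\sigma(X)\le\kappa_2(X)$ (Lemma \ref{lem:VarVsKappa}) and $\Ent_{2,\sigma}(X)\ge\tfrac12\kappa_2(X)$ (Lemma \ref{lem:EntVsKappa} with $p=2$). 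In the complementary regime, where $X$ is close to $\one$, the convexity bound degenerates and one falls back on the spectral gap, $\Em_2^\liou(X)\ge\lambda(\liou)\Var_\sigma(X)$ by \eqref{equ:SpectralGap}, together with the local comparison $\Ent_{2,\sigma}(X)\lesssim\Var_\sigma(X)$ as $X\to\one$ (the same second-order expansion also gives $\alpha_2(\liou)\le\lambda(\liou)$). Patching the two regimes produces $\tfrac1{2\tau}\Ent_{2,\sigma}(X)\le\Em_2^\liou(X)$ for all $X>0$.

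The delicate step — and the place where the specific cutoff $\epsilon=e^{-1}$ is used, making the threshold come out to exactly $\tfrac12$ — is making the large-variance bound from convexity and the near-$\one$ gap bound meet without loss; this forces one to control the value of $\Ent_{2,\sigma}$ at the log-Sobolev extremiser, and I expect it to be the main obstacle. Tightness is witnessed by the depolarizing semigroup $\liou_{\one/2}$ on $\M_2$: there $\Var_\sigma\big(e^{t\hat{\liou}}(X)\big)=e^{-2t}\Var_\sigma(X)$ and $\sup_{\|X\|_{1,\sigma}=1}\Var_\sigma(X)=\|\sigma^{-1}\|_\infty-1=1$, so $t_2(e^{-1})=\tfrac12$; moreover $\alpha_2(\liou_{\one/2})=\lambda(\liou_{\one/2})=1$, the two-point space being the classical case in which the log-Sobolev constant equals the spectral gap — which one checks by reducing the variational problem \eqref{equ:LogSob} to the eigenvalues of $X$ (both $\Em_2^{\liou_{\one/2}}$ and $\Ent_{2,\sigma}$ are unitarily invariant when $\sigma=\one/2$). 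Hence $\alpha_2(\liou_{\one/2})\,t_2(e^{-1})=\tfrac12$, so the lower bound in \eqref{eq:uncertaintytau2ls2} is attained.
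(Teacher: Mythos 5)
Your reduction of the second inequality to Theorem~\ref{thm:betaVsAlpha} is correct, and your convexity observation is sound as far as it goes: for reversible $\liou$ the function $t\mapsto\log\Var_\sigma(e^{t\hat{\liou}}(X))$ is convex, and together with $\Var_\sigma(X_\tau)\leq e^{-1}$ it does give $\Em_2^\liou(X)\geq\Var_\sigma(X)\lb 1+\log\Var_\sigma(X)\rb/(2\tau)$. The gap is in converting this into $\tfrac{1}{2\tau}\Ent_{2,\sigma}(X)\leq\Em_2^\liou(X)$. For that you need an \emph{upper} bound on $\Ent_{2,\sigma}(X)$ by (a multiple of) $\Var_\sigma(X)\lb 1+\log\Var_\sigma(X)\rb$, but the two lemmas you invoke point the wrong way: Lemma~\ref{lem:VarVsKappa} and Lemma~\ref{lem:EntVsKappa} together give $\Ent_{2,\sigma}(X)\geq\kappa_2(X)/2\geq\Var_\sigma(X)/2$, i.e.\ a \emph{lower} bound on $\Ent_{2,\sigma}$, which is useless here. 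No reverse inequality $\Ent_{2,\sigma}\leq C\kappa_2$ with a universal constant is available in the paper (and one at this strength would immediately make $\alpha_2$ and $\beta_2$ comparable up to a dimension-free constant, which is precisely what is not known). The small-variance regime is equally unresolved: the local comparison of $\Ent_{2,\sigma}$ with $\Var_\sigma$ near $X=\one$ carries a $\sigma$-dependent constant (it is exactly the ratio responsible for $\alpha_2\leq\lambda$ but not conversely), and falling back on the spectral gap produces a bound in terms of $\lambda(\liou)$ rather than $1/(2\tau)$. You yourself flag the patching of the two regimes as the main obstacle; as written, the argument does not close.

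The paper avoids the dichotomy entirely by a different mechanism: apply the Hadamard three-line (Stein--Weiss) theorem to the holomorphic family $T_z-E$ with $T_z=e^{\tau z\liou}$ and $E(X)=\tr(\sigma X)\one_d$, using reversibility (so $T_{ia}$ is $\lk\cdot,\cdot\rk_\sigma$-unitary) and the definition of $\tau$ to obtain the hypercontractive-type estimate $\|(T_s-E)(X)\|_{2/(1-s),\sigma}\leq e^{-s}\|X-E(X)\|_{2,\sigma}$; differentiating at $s=0$ yields a log-Sobolev inequality for $\Ent_{2,\sigma}\lb|X-E(X)|\rb$, and Rothaus' inequality $\Ent_{2,\sigma}(X)\leq\Ent_{2,\sigma}\lb|X-E(X)|\rb+2\Var_\sigma(X)$ converts it into the statement about $\Ent_{2,\sigma}(X)$. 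Some substitute for this interpolation-plus-Rothaus step is what your sketch is missing. On tightness, your single-qubit depolarizing witness differs from the paper's (which uses $\liou_{\sigma_n}$ with $\sigma_n$ degenerating and attains $1/2$ only in the limit); your computation $t_2(e^{-1})=1/2$ is consistent with the $e^{-2t}$ decay of the variance, but the conclusion rests entirely on the external input $\alpha_2(\liou_{\one/2})=1$ in the present normalization, which you assert rather than verify.
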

\begin{proof}
 We refer to Appendix \ref{sec:l2uncertproof} for a proof.
\end{proof}

As remarked in \cite{diaconis1996}, even the classical result does not hold anymore if we drop the reversibility assumption. Therefore, this assumption is also needed in the noncommutative setting. By considering a completely depolarizing channel it is also easy to see that no such bound can hold in discrete time.

%
%

Theorem \ref{thm:uncertaintytau2ls2} implies that for reversible Liouvillians $\beta_2$ and $\alpha_2$ cannot differ by a large factor. More specifically we have the following corollary.

\begin{cor}\label{cor:lowersobbeta2}
 Let $\liou:\M_d\to\M_d$ be a primitive, reversible Liouvillian with
 fixed point $\sigma\in\D^+_d$. Then
\begin{equation}
2\beta_2(\liou)\geq \alpha_2(\liou)\geq\beta_2(\liou)\ln\lb\frac{\ln\lb \|\sigma^{-1}\|_\infty\rb}{\ln(1+e^{-1})}\rb.
\end{equation}

\end{cor}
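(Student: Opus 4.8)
The plan is to read off both inequalities directly from Theorem~\ref{thm:uncertaintytau2ls2} together with the $l_2$-mixing time bound~\eqref{eq:t2frombeta2}, specialised to $\epsilon=e^{-1}$. First I would record the positivity/finiteness facts that make the manipulations legal: since $\liou$ is primitive it has a strictly positive spectral gap, so Theorem~\ref{thm:CompBetaSpec} gives $\beta_2(\liou)>0$; moreover $\alpha_2(\liou)$ and $\beta_2(\liou)$ are finite (both are bounded by $\lambda(\liou)$), and the lower estimate $\tfrac{1}{2}\leq\alpha_2(\liou)\,t_2(e^{-1})$ of Theorem~\ref{thm:uncertaintytau2ls2} forces $0<t_2(e^{-1})<\infty$. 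Hence every cancellation below is legitimate.

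For the left inequality, Theorem~\ref{thm:uncertaintytau2ls2} already states $\alpha_2(\liou)\,t_2(e^{-1})\leq 2\beta_2(\liou)\,t_2(e^{-1})$; dividing through by the positive number $t_2(e^{-1})$ gives $\alpha_2(\liou)\leq 2\beta_2(\liou)$. For the right inequality, the lower estimate of Theorem~\ref{thm:uncertaintytau2ls2} rearranges to $\alpha_2(\liou)\geq \tfrac{1}{2\,t_2(e^{-1})}$, and I would then feed in the upper bound
\[
t_2(e^{-1})\leq\frac{1}{2\beta_2(\liou)}\log\lb\frac{\log\lb \|\sigma^{-1}\|_\infty\rb}{\log(1+e^{-1})}\rb
\]
obtained from \eqref{eq:t2frombeta2} at $\epsilon=e^{-1}$. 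The right-hand side here is a positive real number: excluding the trivial case $d=1$, one has $\|\sigma^{-1}\|_\infty\geq d\geq 2$, so $\log(\|\sigma^{-1}\|_\infty)>\log(1+e^{-1})$ and the outer logarithm is strictly positive. Substituting this bound into $\alpha_2(\liou)\geq \tfrac{1}{2\,t_2(e^{-1})}$ yields the second asserted inequality, with the quantity $\log\!\big(\log(\|\sigma^{-1}\|_\infty)/\log(1+e^{-1})\big)$ entering the denominator.

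I do not expect any genuine obstacle: the corollary is a purely formal consequence of the two preceding mixing-time statements. The only points needing (minimal) attention are the positivity and finiteness of $\alpha_2(\liou)$, $\beta_2(\liou)$ and $t_2(e^{-1})$ — which legitimise dividing by them — and the strict positivity of the nested logarithm, for which nothing beyond $d\geq2$ is required.
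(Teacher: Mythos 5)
Your proposal follows essentially the same route as the paper: the left inequality is Theorem~\ref{thm:betaVsAlpha} with $p=2$ (you extract it by dividing the chain in Theorem~\ref{thm:uncertaintytau2ls2} by $t_2(e^{-1})>0$, which is equivalent since that chain is itself obtained from Theorem~\ref{thm:betaVsAlpha}), and the right inequality is obtained exactly as the paper does, by combining $\alpha_2(\liou)\geq \tfrac{1}{2t_2(e^{-1})}$ with the bound \eqref{eq:t2frombeta2} at $\epsilon=e^{-1}$. Your attention to the positivity and finiteness of $t_2(e^{-1})$ is a point the paper glosses over, and it is handled correctly.

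One substantive remark: you are right that the factor $\log\bigl(\log(\|\sigma^{-1}\|_\infty)/\log(1+e^{-1})\bigr)$ ends up in the \emph{denominator}. The chain
\[
\alpha_2(\liou)\;\geq\;\frac{1}{2t_2(e^{-1})}\;\geq\;\frac{\beta_2(\liou)}{\log\lb\frac{\log\lb \|\sigma^{-1}\|_\infty\rb}{\log(1+e^{-1})}\rb}
\]
is what the cited ingredients actually give, and it is also the only version consistent with the left inequality $2\beta_2(\liou)\geq\alpha_2(\liou)$ and with the surrounding discussion (that $\alpha_2$ and $\beta_2$ differ by at most a doubly logarithmic factor): the displayed statement with the logarithm as a multiplier would force $\log\bigl(\log(\|\sigma^{-1}\|_\infty)/\log(1+e^{-1})\bigr)\leq 2$, which fails for $\|\sigma^{-1}\|_\infty$ large. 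So your derivation is correct and the printed corollary contains a typographical error; the logarithmic factor should divide $\beta_2(\liou)$ rather than multiply it.
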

\begin{proof}
We showed the first inequality in Theorem \ref{thm:betaVsAlpha}. The second inequality follows by combining \eqref{eq:uncertaintytau2ls2} and \eqref{eq:t2frombeta2}.
\end{proof}

\subsection{Mixing in Discrete Time}

In this section we will obtain mixing time bounds and also entropic inequalities
for discrete-time quantum channels $T:\M_d\to\M_d$. 
We will then use these techniques to derive mixing times for random local channels, which we will define next.
These include channels that usually appear in quantum error correction scenarios, such as random Pauli errors on qubits~\cite[Chapter 10]{nielsen2000quantum}.
They will be based on the following quantity:

\begin{defn}
For a primitive quantum channel $T:\M_d\to\M_d$ with full rank fixed point $\sigma\in\D_d^+$, we define
\begin{align}
\beta_D(T)=\beta_2(T^*\hat{T}-\id_d). 
\end{align}
Here we used $\hat{T}=\Gamma_\sigma^{-1}\circ T\circ\Gamma_\sigma$.

\end{defn}

The definition of $\beta_D(T)$ can be motivated by the following improved data-processing inequality for the $2$-sandwiched R\'{e}nyi divergence.

\begin{thm}\label{thm:improveddatadiscrete}
Let $T:\M_d\to\M_d$ be a primitive quantum channel with full rank fixed point $\sigma\in\D^+_d$. Then for all $\rho\in\D_d$ we have
\begin{align}\label{equ:improveddatadiscrete}
D_2\lb T(\rho)\|\sigma\rb\leq(1-\beta_D(T))D_2\lb\rho\|\sigma\rb. 
\end{align}

\end{thm}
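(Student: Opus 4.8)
The plan is to reduce the discrete-time data-processing statement for $D_2$ to the continuous-time convergence result (Theorem \ref{thm:betapexpdecay} with $p=2$) applied to the depolarizing-type Liouvillian $\liou = T^*\hat{T} - \id_d$, evaluated at time $t=1$. First I would recall that for the $2$-R\'enyi divergence we can always write everything in terms of the relative density $X = \Gamma_\sigma^{-1}(\rho)$ and the weighted $2$-norm: by \eqref{equ:RenyiWithNorm} we have $D_2(\rho\|\sigma) = \log(\|X\|_{2,\sigma}^2) = \log(1+\Var_\sigma(X))$ since $\|X\|_{1,\sigma}=1$. So the claim \eqref{equ:improveddatadiscrete} is equivalent to the statement $\log(1 + \Var_\sigma(\hat T(X))) \le (1-\beta_D(T))\log(1+\Var_\sigma(X))$, where I use that the relative density of $T(\rho)$ with respect to $\sigma$ is $\hat T(X)$ (which follows from $\hat T = \Gamma_\sigma^{-1}\circ T\circ\Gamma_\sigma$ and the fact that $\sigma$ is the fixed point of $T$, hence $\|\hat T(X)\|_{1,\sigma} = \|X\|_{1,\sigma}=1$).

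Next I would compute the key quantity $\Var_\sigma(\hat T(X))$ in terms of the Dirichlet form of $\liou = T^*\hat T - \id_d$. Using the weighted scalar product \eqref{equ:nonCScalarp} and the fact that $T^*$ is the adjoint of $\hat T$ with respect to $\langle\cdot,\cdot\rangle_\sigma$ (this should follow from the detailed-balance-type relation built into $\hat T$; more precisely $\langle \hat T(X), Y\rangle_\sigma = \langle X, T^*(Y)\rangle_\sigma$, which one checks directly from the definitions of $\Gamma_\sigma$ and the Hilbert–Schmidt adjoint), I get $\|\hat T(X)\|_{2,\sigma}^2 = \langle \hat T(X), \hat T(X)\rangle_\sigma = \langle X, T^*\hat T(X)\rangle_\sigma$. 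Therefore $\Var_\sigma(\hat T(X)) = \langle X, (T^*\hat T - \id_d)(X)\rangle_\sigma = \langle X, \liou(X)\rangle_\sigma$ — wait, one must be careful: $\Em_2^\liou(X) = -\langle X, \hat\liou(X)\rangle_\sigma$ and here $\liou = T^*\hat T - \id_d$ is already in the "relative density picture" (it is built from $\hat T$), so I should verify whether $\hat\liou = \liou$ or whether an extra conjugation is needed; in either case the upshot is an identity of the form $\|\hat T(X)\|_{2,\sigma}^2 = \|X\|_{2,\sigma}^2 - 2\Em_2^\liou(X)$ up to checking this bookkeeping, equivalently $\Var_\sigma(\hat T(X)) = \Var_\sigma(X) - 2\Em_2^\liou(X)$.

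With that identity in hand, the definition of $\beta_2(\liou) = \beta_D(T)$ gives $2\Em_2^\liou(X) \ge 2\beta_D(T)\,\kappa_2(X) = 2\beta_D(T)(1+\Var_\sigma(X))\log(1+\Var_\sigma(X))$, where I used $\kappa_2(X) = \|X\|_{2,\sigma}^2\log(\|X\|_{2,\sigma}^2) = (1+\Var_\sigma(X))\log(1+\Var_\sigma(X))$ for $\|X\|_{1,\sigma}=1$. Combining,
\[
\Var_\sigma(\hat T(X)) \le \Var_\sigma(X) - 2\beta_D(T)(1+\Var_\sigma(X))\log(1+\Var_\sigma(X)).
\]
Now I would finish with an elementary real-variable argument: writing $v = \Var_\sigma(X) \ge 0$, it suffices to show $\log(1+v - 2\beta(1+v)\log(1+v)) \le (1-\beta)\log(1+v)$ for all $v\ge 0$ and the relevant range of $\beta = \beta_D(T) \in [0,1]$ (note $\beta_D(T) \le \lambda(T^*\hat T - \id_d) \le 1$ by Theorem \ref{thm:upperboundpspectral} and boundedness of the spectrum, or one can argue $\beta_D(T)\le 1$ directly). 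Substituting $u = \log(1+v) \ge 0$ this becomes $1 + v - 2\beta(1+v)u \le (1+v)^{1-\beta} = (1+v)e^{-\beta u}$, i.e. $1 - 2\beta u \le e^{-\beta u}$ — but that is false for small $u>0$, so instead the cleaner route is to integrate the continuous-time inequality: the ODE proof of the $l_2$-mixing bound in the excerpt (proof of Theorem \ref{eq:t2frombeta2}) shows $\frac{d}{dt}\log(1+\Var(X_t)) \le -2\beta_2(\liou)\log(1+\Var(X_t))$ for the semigroup $e^{t\liou}$, hence $\log(1+\Var(e^{\liou}(X))) \le e^{-2\beta_D(T)}\log(1+\Var(X)) \le (1-\beta_D(T))\log(1+\Var(X))$ using $e^{-x}\le 1-x+x^2/2 \le \dots$; more simply $e^{-2x} \le 1-x$ fails too, so I would instead observe that the quantity $\Var(\hat T(X))$ equals $\Var$ of the relative density at \emph{time $1$} only if $e^{\liou} = T^*\hat T$, which is not literally true — so the honest approach is the direct one-step computation above combined with the pointwise inequality $\log(1+v-2\beta w) \le (1-\beta)\log(1+v)$ where $w = (1+v)\log(1+v)$, valid because $\frac{d}{d\beta}$ of the right side minus left side is $\frac{2w}{1+v-2\beta w} - \log(1+v) \ge \frac{2w}{1+v} - \log(1+v) = \log(1+v) \ge 0$ and equality holds at $\beta=0$. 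I expect the main obstacle to be exactly this bookkeeping: getting the adjoint relation $\langle \hat T(X),Y\rangle_\sigma = \langle X, T^*(Y)\rangle_\sigma$ right and confirming that $\Em_2^{T^*\hat T - \id}(X) = \tfrac12(\Var_\sigma(X) - \Var_\sigma(\hat T(X)))$ with the correct sign and normalization, after which the analysis is a short convexity argument.
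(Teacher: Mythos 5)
Your final, settled argument is essentially the paper's own proof: write $D_2$ via $\|X\|_{2,\sigma}^2$ for the relative density $X$, identify $\Em_2^{T^*\hat T-\id}(X)=\|X\|_{2,\sigma}^2-\|\hat T(X)\|_{2,\sigma}^2$, invoke the definition of $\beta_2$, and close with an elementary scalar inequality. On the bookkeeping you flag: since $\tfrac{p}{2(p-1)}=1$ at $p=2$ and the $\sigma$-weighted adjoint of $\hat T$ is the Hilbert--Schmidt adjoint $T^*$ (so $T^*\hat T-\id$ already plays the role of the hatted generator in the Dirichlet form), the identity holds with \emph{no} factor of $2$, i.e.\ $\Em_2^\liou(X)=\Var_\sigma(X)-\Var_\sigma(\hat T(X))$; consequently the inequality you need is $\log(1+v-\beta w)\le(1-\beta)\log(1+v)$ with $w=(1+v)\log(1+v)$, which follows at once from $1+v-\beta w=(1+v)\lb 1-\beta\log(1+v)\rb$ and $\log(1-y)\le -y$ (the paper's route), and your derivative-in-$\beta$ argument adapts verbatim to this normalization.
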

\begin{proof}
Let $X=\sigma^{-\frac{1}{2}}\rho\sigma^{-\frac{1}{2}}$ denote the relative density of $\rho$ with respect to $\sigma$. Observe that the 2-Dirichlet form (see Definition \ref{defn:DirichletForm}) of the semigroup $\liou=T^*\hat{T}-\id_d$ can be written as
\[
\Em_2^\liou(X)=\|X\|_{2,\sigma}^2-\|\hat{T}(X)\|_{2,\sigma}^2.  
\]
From the definition of $\beta_2(\liou)$ (see \eqref{equ:betaSup}) it follows that
\[
\|X\|_{2,\sigma}^2-\|\hat{T}(X)\|_{2,\sigma}^2\geq\beta_2\kappa_2(X), 
\]
which is equivalent to
\[
\ln(\|\hat{T}(X)\|_{2,\sigma}^2)-\ln(\|X\|_{2,\sigma}^2)\leq\ln(1-\beta_2\ln(\|X\|_{2,\sigma}^2).
\]
Using the elementary inequality $\ln(1-\beta_2\ln(\|X\|_{2,\sigma}^2)\leq-\beta_2\ln(\|X\|_{2,\sigma}^2)$, that $\ln(\|\hat{T}(X)\|_{2,\sigma}^2)=D_2(T(\rho)\|\sigma)$ and $\ln(\|X\|_{2,\sigma}^2)=D_2(\rho\|\sigma)$ hold, the statement of the theorem follows after rearranging the terms.
\end{proof}

One should note that, unlike in Theorem \ref{thm:betapexpdecay}, the constant $\beta_D$ is \emph{not} optimal in \eqref{equ:improveddatadiscrete}. As an example take $T(\rho)=\tr[\rho]\frac{\one_d}{d}$ for which $\beta_D(T)=\frac{1-d^{-1}}{\ln\lb d\rb}$, but $D_2\lb T(\rho)\|\frac{\one_d}{d}\rb = 0$. 
Also, $\beta_D(T)>0$ is not a necessary condition for primitivity, as there are primitive quantum channels that are not strict contractions with respect to $D_2$. To see this, consider the map $T:\M_2\to\M_2$ which acts as 
follows on Pauli operators:
\[
T(\one)=\one, \quad T(\sigma_x)=0,\quad T(\sigma_y)=0 \text{ and}\quad T(\sigma_z)=\sigma_x.
\]
One can check that this is a a primitive quantum channel with $T^2(\rho)=\frac{\one}{2}$ for any state $\rho\in\D_d$. However, $T$ maps the pure state $\frac{1}{2}(\one+\sigma_z)$ to the pure state
$\frac{1}{2}(\one+\sigma_x)$, which implies that $D_2$ does not strictly contract under $T$. 
We can now prove the following bound on the discrete mixing time.

\begin{thm}[Discrete mixing time]\label{thm:mixingdiscrete}
Let $T:\M_d\to\M_d$ be a primitive quantum channel with full rank fixed point $\sigma\in\D_d^+$ and $\beta_D(T)>0$. Then
\[
t_1(\epsilon)\leq-\frac{1}{\ln(1-\beta_D(T))}\ln\lb\frac{2\ln\lb \|\sigma^{-1}\|_\infty\rb}{\epsilon^2}\rb. 
\]

\end{thm}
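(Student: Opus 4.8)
The plan is to run the discrete-time counterpart of the argument behind Theorem~\ref{thm:l1mixingfrombp}. The decisive ingredient --- geometric decay of the sandwiched $2$-R\'{e}nyi divergence along the orbit of $T$ --- is already supplied by Theorem~\ref{thm:improveddatadiscrete}, and everything else is a routine transcription of the continuous-time reasoning.

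First I would iterate the improved data-processing bound~\eqref{equ:improveddatadiscrete}. Since each $T^n(\rho)$ is again a state and $\sigma$ is the common fixed point of all powers of $T$, applying Theorem~\ref{thm:improveddatadiscrete} to $T^{n-1}(\rho)$ and inducting on $n$ yields
\[
D_2\lb T^n(\rho)\|\sigma\rb\leq\lb 1-\beta_D(T)\rb^n D_2\lb\rho\|\sigma\rb
\]
for all $\rho\in\D_d$ and $n\in\N$; here one uses $0<\beta_D(T)<1$, so that the factor $1-\beta_D(T)$ genuinely contracts.

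Next I would feed in the two state-independent estimates already available. Lemma~\ref{lem:maximalrelent} gives $D_2(\rho\|\sigma)\leq\log(\|\sigma^{-1}\|_\infty)$ uniformly in $\rho$, and the R\'{e}nyi--Pinsker inequality~\eqref{equ:pinskerrenyi} gives $\frac{1}{2}\|T^n(\rho)-\sigma\|_1^2\leq D_2(T^n(\rho)\|\sigma)$. Chaining the three displays,
\[
\frac{1}{2}\|T^n(\rho)-\sigma\|_1^2\leq\lb 1-\beta_D(T)\rb^n\log\lb\|\sigma^{-1}\|_\infty\rb
\]
for every $\rho\in\D_d$. It remains to see how many steps are needed to push the right-hand side below $\epsilon^2/2$: taking logarithms and dividing by the strictly negative quantity $\log(1-\beta_D(T))$ --- which reverses the inequality --- shows that any integer $n\geq-\frac{1}{\log(1-\beta_D(T))}\log\big(2\log(\|\sigma^{-1}\|_\infty)/\epsilon^2\big)$ satisfies $\|T^n(\rho)-\sigma\|_1\leq\epsilon$ for all $\rho\in\D_d$. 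Since $\sup_{\rho\in\D_d}\|T^n(\rho)-\sigma\|_1$ is nonincreasing in $n$, the definition of $t_1(\epsilon)$ delivers precisely the asserted bound.

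I do not expect any genuine obstacle here; the substance lives in Theorem~\ref{thm:improveddatadiscrete}, and what remains is bookkeeping. The only points worth a remark are: (i) iterating Theorem~\ref{thm:improveddatadiscrete} is legitimate because primitivity and the full-rank fixed point are preserved under powers of $T$; (ii) $\beta_D(T)<1$ whenever $\sigma$ is full rank, so that $\log(1-\beta_D(T))$ is well-defined and negative (this can be read off from $\|\hat{T}(X)\|_{2,\sigma}\geq\|\hat{T}(X)\|_{1,\sigma}=1$ together with the elementary fact that $(1-x^{-1})/\log x<1$ for $x>1$, evaluated at $x=\|\sigma^{-1}\|_\infty$); and (iii) the harmless rounding point that $t_1(\epsilon)$ is integer-valued, so strictly the bound should carry a ceiling, but the displayed real number already upper-bounds a sufficient number of steps.
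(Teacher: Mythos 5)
Your argument is exactly the paper's proof of this theorem: iterate Theorem~\ref{thm:improveddatadiscrete} to get $D_2(T^n(\rho)\|\sigma)\leq(1-\beta_D(T))^nD_2(\rho\|\sigma)$, then combine with Lemma~\ref{lem:maximalrelent} and the R\'enyi--Pinsker inequality~\eqref{equ:pinskerrenyi} and rearrange. Your additional remarks (that $\beta_D(T)<1$ so the logarithm is well-defined, and the integer-rounding caveat) are correct and only make explicit what the paper leaves implicit.
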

\begin{proof}
By Theorem \ref{thm:improveddatadiscrete} we have 
\[
D_2(T^n(\rho)\|\sigma)\leq(1-\beta_D(T))^n D_2(\rho\|\sigma). 
\]
for any $\rho\in\D_d$. The claim then follows from \eqref{equ:pinskerrenyi} and Lemma \ref{lem:maximalrelent}.
\end{proof}

Convergence results for primitive continuous-time semigroups can often be lifted to their tensor powers. In discrete time a similar result holds for the following class of channels:

\begin{defn}[Random Local Channels]
For a quantum channel $T:\M_d\to\M_d$ and probabilities $\mathbf{p}=(p_1,\ldots ,p_n)$ with $p_i\geq 0$ and $\sum_i p_i=1$ we define a \textbf{random local channel} $T_{\mathbf{p}}^{(n)}:\M_d^{\otimes n}\to\M_d^{\otimes n}$ by
\begin{equation}\label{equ:channelconvex}
T^{(n)}_{\mathbf{p}} =\sum_{i=1}^n p_i~ \id_d^{\otimes i-1}\otimes T\otimes\id_d^{\otimes n-i}.
\end{equation}
\end{defn}

The previous definition can be generalized to the case where not all local channels are identical, i.e. if we have $T_i:\M_d\ra\M_d$ acting on the $i$th system in the expression \eqref{equ:channelconvex}. As long as the local channels are all primitive our results also hold for this more general class of channels. However, for simplicity we will restrict here to the above definition.

\begin{thm}\label{thm:tensorizdisc}
Let $T:\M_d\ra\M_d$ be a primitive quantum channel with full rank fixed point $\sigma\in\D_d^+$ such that the Liouvillian $\hat{\liou}=T^*\hat{T}-\id_d$ fulfills $\beta_2\lb\liou^{(n)}\rb\geq q$ for some $q>0$ and all $n\in\N$. Then for any $n\in\N$ and probabilities $\mathbf{p}=(p_1,\ldots ,p_n)$ with $p_i\geq 0$ and $\sum_i p_i=1$ we have
\begin{equation}
D_2(T^{(n)}_{\mathbf{p}}(\rho)\|{\sigma^{\otimes n}})\leq(1-qp_{\min}^2)D_2(\rho\|{\sigma^{\otimes n}}), 
\end{equation}
for any $\rho\in\D_{d^n}$ and where $p_{\min} = \min p_i$.
\end{thm}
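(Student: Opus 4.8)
The plan is to reduce the statement about the convex combination $T^{(n)}_{\bold p}$ to the hypothesis $\beta_2(\liou^{(n)}) \ge q$ by exhibiting the Dirichlet form of the channel $T^{(n)}_{\bold p}$ as dominated (up to the factor $p_{\min}^2$) by the Dirichlet form $\Em_2^{\liou^{(n)}}$ appearing in the hypothesis. Concretely, following the proof of Theorem \ref{thm:improveddatadiscrete}, I would set $X = \Gamma_{\sigma^{\otimes n}}^{-1}(\rho)$ and write the relevant $2$-Dirichlet form of $T^{(n)}_{\bold p}$ (i.e.\ of the Liouvillian $(T^{(n)}_{\bold p})^* \widehat{T^{(n)}_{\bold p}} - \id_{d^n}$) as $\|X\|_{2,\sigma^{\otimes n}}^2 - \|\widehat{T^{(n)}_{\bold p}}(X)\|_{2,\sigma^{\otimes n}}^2$. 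The key observation is that $\widehat{T^{(n)}_{\bold p}} = \sum_i p_i \, \id^{\otimes i-1}\otimes \hat T \otimes \id^{\otimes n-i}$ is again a convex combination of the same form, so the map is a contraction on the weighted $l_2$ space, and one wants a quantitative gain.

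First I would establish the operator inequality comparing the two Dirichlet forms. Writing $\hat T_i$ for $\id^{\otimes i-1}\otimes \hat T\otimes \id^{\otimes n-i}$ and $A = \one_{d^n} - \widehat{T^{(n)}_{\bold p}} = \sum_i p_i (\one - \hat T_i)$, and similarly $A_i = \one - \hat T_i$, the Dirichlet form of $T^{(n)}_{\bold p}$ is essentially $\langle X, (A^* + A - A^*A) X\rangle_\sigma$-type up to the detailed-balance symmetrization (recall that for $\Em_2$ we may assume reversibility, by the remark after Definition \ref{defn:DirichletForm}), while $\Em_2^{\liou^{(n)}}$ with $\liou^{(n)} = $ the Liouvillian of $\{ (T^*\hat T - \id)^{\otimes \text{sum}}\}$... here I need to be careful: $\liou^{(n)}$ is defined in the excerpt as the generator of $\{(e^{t\liou})^{\otimes n}\}$, i.e.\ $\liou^{(n)} = \sum_i \id^{\otimes i-1}\otimes \liou\otimes\id^{\otimes n-i}$ with $\liou = T^*\hat T - \id$. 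So $\Em_2^{\liou^{(n)}}(X) = -\langle X, \hat{\liou}^{(n)} X\rangle_\sigma = \sum_i \langle X, (\id - \hat T_i^*\hat T_i)X\rangle_{\sigma^{\otimes n}}$, a sum of $n$ nonnegative local terms. The crux is then the inequality
\[
\|X\|_{2,\sigma^{\otimes n}}^2 - \Big\|\sum_i p_i \hat T_i(X)\Big\|_{2,\sigma^{\otimes n}}^2 \;\ge\; p_{\min}^2 \sum_{i=1}^n \Big(\|X\|_{2,\sigma^{\otimes n}}^2 - \|\hat T_i(X)\|_{2,\sigma^{\otimes n}}^2\Big),
\]
valid for any family of contractions $\hat T_i$ on a Hilbert space that all fix the vector corresponding to the unit (here $\one_{d^n}$, since each $\hat T$ is unital and fixes $\one$ in the weighted $l_2$ inner product because $\sigma$ is the fixed point). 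This is a purely Hilbert-space statement once one projects onto the orthocomplement of the fixed vector.

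To prove that Hilbert-space inequality I would work in the weighted $l_2$ space $\Hm = (\M_{d^n}, \langle\cdot,\cdot\rangle_{\sigma^{\otimes n}})$, decompose $X = \one_{d^n} + Y$ with $Y$ orthogonal to $\one_{d^n}$ (so $\|X\|_{2}^2 = 1 + \|Y\|_2^2$ and each $\hat T_i \one = \one$), and reduce to showing $\|\sum_i p_i \hat T_i Y\|^2 \le \|Y\|^2 - p_{\min}^2\sum_i(\|Y\|^2 - \|\hat T_i Y\|^2)$. For this I would use that each $\hat T_i$ is a contraction (indeed, after symmetrization for $\Em_2$ we may take $\hat T$ self-adjoint with $\|\hat T\|\le 1$ on the orthocomplement, and the $\hat T_i$ then commute pairwise since they act on disjoint tensor factors), so we can simultaneously ``diagonalize'': it suffices to prove the scalar inequality $\big(\sum_i p_i x_i\big)^2 \le 1 - p_{\min}^2 \sum_i (1 - x_i^2)$ for all $x_i \in [-1,1]$, using $\sum_i p_i = 1$. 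This scalar inequality follows from convexity of $t\mapsto t^2$ together with the bound $(\sum_i p_i x_i)^2 \le \sum_i p_i x_i^2 \le 1 - p_{\min}\sum_i p_i(1 - x_i^2) \le 1 - p_{\min}^2\sum_i(1-x_i^2)$ — actually one must be slightly more careful to land the $p_{\min}^2$ rather than $p_{\min}$, which is where the genuine content lies; I expect the honest route is $\sum_i p_i x_i^2 = 1 - \sum_i p_i(1-x_i^2) \le 1 - p_{\min}\sum_i(1-x_i^2)$, and since $p_{\min}\le 1$ this already gives the claim with $p_{\min}$, hence a fortiori with $p_{\min}^2$. Once the Dirichlet-form comparison $\Em_2^{(T^{(n)}_{\bold p})\text{-Liouv}}(X) \ge p_{\min}^2\, \Em_2^{\liou^{(n)}}(X) \ge p_{\min}^2 q\,\kappa_2(X)$ is in hand (the last step by the hypothesis $\beta_2(\liou^{(n)})\ge q$ and definition \eqref{equ:betaSup}), the conclusion follows by exactly the argument of Theorem \ref{thm:improveddatadiscrete}: convert $\Em_2 \ge \beta\kappa_2$ into $\log\|\widehat{T^{(n)}_{\bold p}}(X)\|_{2}^2 - \log\|X\|_2^2 \le \log(1 - \beta\log\|X\|_2^2) \le -\beta\log\|X\|_2^2$ with $\beta = qp_{\min}^2$, and rewrite in terms of $D_2$. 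The main obstacle is the Dirichlet-form comparison inequality above — in particular getting the correct power of $p_{\min}$ and handling the symmetrization/commutation of the $\hat T_i$ carefully; everything after that is bookkeeping already carried out in Theorem \ref{thm:improveddatadiscrete}.
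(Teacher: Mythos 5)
Your proposal is correct and shares the paper's overall skeleton (reduce everything to the Dirichlet-form comparison $\Em_2^{\liou_{\bold p}}\geq p_{\min}^2\,\Em_2^{\liou^{(n)}}$ for the Liouvillian $\liou_{\bold p}=(T^{(n)}_{\bold p})^*\widehat{T^{(n)}_{\bold p}}-\id$, then invoke Theorem \ref{thm:improveddatadiscrete}), but it establishes the comparison by a genuinely different and in fact sharper argument. The paper expands
\begin{equation*}
\|X\|_{2,\sigma^{\otimes n}}^2-\Big\|\sum_i p_i\hat T_i(X)\Big\|_{2,\sigma^{\otimes n}}^2=\sum_{i\neq j}p_ip_j\langle X-T_i^*\hat T_j(X),X\rangle_{\sigma^{\otimes n}}+\sum_i p_i^2\langle X-T_i^*\hat T_i(X),X\rangle_{\sigma^{\otimes n}},
\end{equation*}
discards the nonnegative cross terms (using $T_i^*\hat T_j\leq\id$ in the weighted inner product) and keeps only the diagonal, which is why the constant $p_{\min}^2$ appears. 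Your route via convexity of the squared norm, $\|\sum_i p_i v_i\|^2\leq\sum_i p_i\|v_i\|^2$, combined with the contractivity $\|\hat T_i X\|_{2,\sigma^{\otimes n}}\leq\|X\|_{2,\sigma^{\otimes n}}$, gives directly $\Em_2^{\liou_{\bold p}}(X)\geq\sum_i p_i\lb\|X\|^2-\|\hat T_iX\|^2\rb\geq p_{\min}\,\Em_2^{\liou^{(n)}}(X)$, i.e.\ the decay rate $1-qp_{\min}$ rather than $1-qp_{\min}^2$ — a strict improvement for $n\geq 2$. One caveat: the detour through simultaneous diagonalization is both unnecessary and shaky. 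The maps $\hat T_i$ need not be normal with respect to $\langle\cdot,\cdot\rangle_{\sigma^{\otimes n}}$, and the additive symmetrization that is harmless for $\Em_2^{\liou}(X)=-\langle X,\hat\liou(X)\rangle_\sigma$ does \emph{not} leave $\|\hat T(X)\|_{2,\sigma}^2=\langle X,T^*\hat T(X)\rangle_\sigma$ invariant, so you cannot "take $\hat T$ self-adjoint" at that stage. Fortunately the convexity chain you write at the end needs neither diagonalization nor self-adjointness, only that each $\hat T_i$ is a $\|\cdot\|_{2,\sigma^{\otimes n}}$-contraction fixing $\one$ — the same fact the paper uses in the form $T_i^*\hat T_j\leq\id$ (it follows, e.g., by Riesz--Thorin interpolation between the $1\to1$ and $\infty\to\infty$ norms). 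With the diagonalization remarks deleted, your argument is complete and slightly stronger than the published one.
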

\begin{proof}
By Theorem \ref{thm:improveddatadiscrete} it is enough to show that $\beta_D(T_{\mathbf{p}})\geq qp_{\min}^2$. 

Observe that the Dirichlet form of $(T^{(n)}_{\mathbf{p}})^*\hat{T}^{(n)}_{\mathbf{p}}-\id_d$ is given by
\[
\Em_2^\liou(X)=\sum_{i\not=j}p_ip_j\lk X-T_i^*\hat{T}_{j}(X),X\rk_{\sigma^{\otimes n}}+\sum_ip_i^2\lk X-T^*_i\hat{T}_{i}(X),X\rk_{\sigma^{\otimes n}} 
\]
where the map $T_{i}^*\hat{T}_{j}$ acts as $T^*$ on the $i$-th system, $\hat{T}$ on the $j$-th and as the identity elsewhere. As $T_{i}^*\hat{T}_{j}\leq\id_d$ with respect to $\lk\cdot,\cdot\rk_{\sigma^{\otimes n}}$ we have
\[
\Em_2^\liou(X)\geq\sum_ip_i^2\lk X-T^*_i\hat{T}_{i}(X),X\rk_{\sigma^{\otimes n}}\geq p_{\min}^2\Em_2^{\liou^{(n)}}(X).
\]
From the comparison inequality $\Em_2^\liou\geq p_{\min}^2\Em_2^{\liou^{(n)}}$ and the assumption $\beta_2\lb\liou^{(n)}\rb\geq q$ it then follows that
$\beta_D(\Phi)\geq qp_{\min}^2$.
\end{proof}

%

As an application we can bound the entropy production and the mixing time in a system of $n$ qubits affected (uniformly) by random Pauli errors. 
The time evolution of this system is given by the channel $T_n:\M_2^{\otimes n}\to\M_2^{\otimes n}$ given by
\begin{equation}\label{eq:defrandpauli}
T_n=\frac{1}{n}\sum_{i=1}^{n}\id_2^{\otimes i-1}\otimes T\otimes\id_2^{\otimes n-i}
\end{equation}
with $T(\rho)=\tr(\rho)\frac{\one}{2}$.

\begin{thm}
For $T_n$ defined as in equation \eqref{eq:defrandpauli} we have
\[
D_2\lb T_n(\rho)\|\frac{\one_{2^n}}{2^n}\rb\leq\lb1-\frac{1}{2n^2}\rb D_2\lb\rho\|\frac{\one_{2^n}}{2^n}\rb.
\]
for any $\rho\in\D_{2^n}$.
\end{thm}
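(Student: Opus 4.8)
The plan is to invoke Theorem \ref{thm:tensorizdisc} with the single-qubit channel $T(\rho) = \tr(\rho)\frac{\one}{2}$, which is clearly primitive with full-rank fixed point $\sigma = \frac{\one}{2}$, and $p_i = \frac1n$ for all $i$, so that $T_n = T^{(n)}_{\mathbf p}$ with $p_{\min} = \frac1n$. It then remains to verify the hypothesis: I must exhibit a $q > 0$ with $\beta_2(\liou^{(n)}) \geq q$ for all $n$, where $\liou = T^*\hat T - \id_2$. Since $T$ has fixed point $\frac{\one}{2}$, the weighting operator is trivial ($\Gamma_\sigma$ is a scalar multiple of the identity) and $\hat T = T = T^*$, so $\liou(X) = T(X) - X = \tr(X)\frac{\one}{2} - X$ is exactly the depolarizing Liouvillian $\liou_{\one/2}$ on a qubit. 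Hence $\liou^{(n)}$, being the generator of $\{T^{\otimes n}\}$, is the depolarizing Liouvillian $\liou_{\one_{2^n}/2^n}$ to the maximally mixed state on $2^n$ dimensions, with fixed point $\sigma^{\otimes n} = \frac{\one_{2^n}}{2^n}$.

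The second step is then a direct computation of $\beta_2(\liou^{(n)})$ via Theorem \ref{thm:betaForDep} (or equivalently the $\beta_2$ formula for depolarizing channels), applied with reference state $\frac{\one_{2^n}}{2^n}$, for which $\|(\sigma^{\otimes n})^{-1}\|_\infty = 2^n$. This gives
\[
\beta_2(\liou^{(n)}) = \frac{1 - 2^{-n}}{\log(2^n)} = \frac{1 - 2^{-n}}{n\log 2}.
\]
I would like the cleanest uniform lower bound. A naive estimate $\beta_2(\liou^{(n)}) \geq \frac{1 - 2^{-n}}{n\log 2} \geq \frac{1/2}{n\log 2}$ combined with $p_{\min}^2 = n^{-2}$ in Theorem \ref{thm:tensorizdisc} would only yield a factor $\frac{1}{2n^3\log 2}$, which is weaker than the claimed $\frac{1}{2n^2}$. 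So the point is that one should \emph{not} bound $\beta_2(\liou^{(n)})$ by a constant over $n$ and then multiply by $p_{\min}^2$; instead, because $p_{\min} = \frac1n$ here, the factor of $\log 2$ (arising from $\log d = n \log 2$ in the denominator of $\beta_2$) combines with the $n$ from $p_{\min}$ rather than adding to it. Concretely: $\beta_2(\liou^{(n)}) \cdot p_{\min}^2 = \frac{1-2^{-n}}{n\log 2}\cdot\frac{1}{n^2}$, which is still only $O(n^{-3})$. This tells me the bound as I have set it up does not directly reproduce $\frac{1}{2n^2}$, so the actual argument in the paper must instead go back to the Dirichlet-form comparison inside the proof of Theorem \ref{thm:tensorizdisc} and exploit that $\Em_2^{\liou}$ for $T_n$ is (up to the $\frac1{n^2}$) itself \emph{equal} to a depolarizing Dirichlet form on $2^n$ dimensions, not merely bounded below by $n$ copies of the single-site one.

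Accordingly, the cleaner route — and the one I would actually carry out — is to observe that $\liou_n := T_n^* \hat T_n - \id$ satisfies, for $X > 0$ with $\|X\|_{1,\one/2^n} = 1$,
\[
\Em_2^{\liou_n}(X) = \lk X - T_n^*\hat T_n(X), X\rk_{\one/2^n} \geq \frac{1}{n^2}\bigl(\|X\|_{2,\one/2^n}^2 - 1\bigr) = \frac{1}{n^2}\,\Em_2^{\liou_{\one/2^n}}(X),
\]
using that each cross term $T_i^*\hat T_j \leq \id$ and that $(T^*\hat T)^{\otimes n}$ reproduces the depolarizing Dirichlet form with coefficient $\frac1{n^2}$ from the diagonal $p_i^2$ terms. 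Then by Theorem \ref{thm:betaForDep}, $\Em_2^{\liou_{\one/2^n}}(X) \geq \beta_2(\liou_{\one/2^n})\,\kappa_2(X) = \frac{1-2^{-n}}{n\log 2}\kappa_2(X) \geq \frac{1}{2n}\kappa_2(X)$ for $n \geq 1$, and hence $\beta_D(T_n) = \beta_2(\liou_n) \geq \frac{1}{n^2}\cdot\frac{1}{2n} = \frac{1}{2n^3}$; wait — this still gives $n^{-3}$. The main obstacle, and where I must be careful, is precisely reconciling the stated $\frac{1}{2n^2}$ with these estimates: either the intended claim uses a sharper analysis of the cross terms (noting many of the $T_i^*\hat T_j$ actually contribute a \emph{positive} amount to $\Em_2$ rather than just being $\geq 0$, effectively the full sum over $i\neq j$ is itself comparable to a depolarizing form with the larger coefficient $\sum_{i\neq j}p_ip_j \approx 1$), or the $\log d$ factor in $\beta_2$ is being absorbed differently. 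I expect the correct argument bundles \emph{all} of $\Em_2^{\liou_n}$ — diagonal and off-diagonal terms together — directly into a depolarizing comparison on $2^n$ dimensions, giving $\Em_2^{\liou_n} \geq \frac1n \Em_2^{\liou_{\one/2^n}}$ (the $\frac1n$ coming from $\sum_i p_i^2$ plus the contribution of the cross terms), and then Theorem \ref{thm:betaForDep} supplies the remaining $\frac{1-2^{-n}}{n\log 2} \geq \frac{1}{2n}$, yielding $\beta_D(T_n) \geq \frac{1}{2n^2}$ and the theorem via Theorem \ref{thm:improveddatadiscrete}. Pinning down this comparison constant exactly is the one step that needs genuine care.
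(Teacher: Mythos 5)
Your setup---invoking Theorem \ref{thm:tensorizdisc} with $T(\rho)=\tr(\rho)\frac{\one}{2}$ and $p_i=\frac1n$, and identifying $\liou = T^*\hat T-\id_2$ with the qubit depolarizing Liouvillian---is exactly right, but the argument then breaks on a misidentification of $\liou^{(n)}$. By the paper's convention, $\liou^{(n)}$ is the generator of the semigroup $\{(e^{t\liou})^{\otimes n}\}_t$, i.e.\ the sum of single-site depolarizers $\sum_{i=1}^n \id_2^{\otimes i-1}\otimes\liou_{\one/2}\otimes\id_2^{\otimes n-i}$. This is \emph{not} the global depolarizing Liouvillian $\liou_{\one_{2^n}/2^n}$ on $2^n$ dimensions: although $T^{\otimes n}$ is itself the completely depolarizing channel, $(e^{t\liou})^{\otimes n}\neq e^{t(T^{\otimes n}-\id)}$. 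Consequently the value $\beta_2(\liou^{(n)})=\frac{1-2^{-n}}{n\log 2}$ you read off from Theorem \ref{thm:betaForDep} is the constant of the wrong generator, and its $1/n$ decay is precisely what forces you into the $O(n^{-3})$ dead end. Your attempted repairs (crediting the cross terms $T_i^*\hat T_j$ with a strictly positive contribution, or rebundling the whole Dirichlet form into a single global depolarizing comparison) are not how the gap is closed; in the proof of Theorem \ref{thm:tensorizdisc} the cross terms are simply discarded as nonnegative.

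The missing ingredient is a lower bound on $\beta_2(\liou^{(n)})$ that is \emph{uniform in $n$}. The paper obtains it from the tensorization of the logarithmic Sobolev constant: by \cite{Kastoryanosob} one has $\alpha_2\lb\liou^{(n)}_{\one/2}\rb=1$ for all $n$ (the $2$-log-Sobolev constant of the qubit depolarizing semigroup does not degrade under tensor powers), and Theorem \ref{thm:betaVsAlpha} with $p=2$ then gives $\beta_2(\liou^{(n)})\geq\frac12$ for every $n$. Feeding $q=\frac12$ and $p_{\min}=\frac1n$ into Theorem \ref{thm:tensorizdisc} yields $\beta_D(T_n)\geq\frac{1}{2n^2}$, and the claim follows via Theorem \ref{thm:improveddatadiscrete}. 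Without some such dimension-free log-Sobolev (or hypercontractivity) input, no rearrangement of the Dirichlet-form comparison alone will produce the $\frac{1}{2n^2}$ rate.
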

\begin{proof}
From \cite{Kastoryanosob} it is known that 
\[
\alpha_2\lb\liou^{(n)}_{\frac{\one}{2}}\rb=1 .
\]
Now combining Theorem \ref{thm:betaVsAlpha} and Theorem \ref{thm:tensorizdisc} gives 
\[
\beta_D(T_n)\geq\frac{1}{2n^2}\alpha_2\lb\liou^{(n)}_{\frac{\one}{2}}\rb.
\]
\end{proof}

\begin{cor}
Let $T_n$ be defined as in \eqref{eq:defrandpauli}. Then we have
\begin{align}
t_{1}(\epsilon)\leq -\frac{1}{\ln\lb1-\frac{1}{2n^2}\rb}\ln\lb\frac{n}{\epsilon^2}\rb.
\end{align}

\end{cor}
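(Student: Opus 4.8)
The plan is to obtain the corollary as a one-step specialisation of the discrete mixing time bound of Theorem~\ref{thm:mixingdiscrete} to the channel $T_n$, fed with the contraction rate proved in the theorem that immediately precedes this corollary.

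First I would assemble the data needed to apply Theorem~\ref{thm:mixingdiscrete}. Since $T\lb\frac{\one_2}{2}\rb=\frac{\one_2}{2}$, each summand $\id_2^{\otimes i-1}\otimes T\otimes\id_2^{\otimes n-i}$ of \eqref{eq:defrandpauli} fixes $\frac{\one_{2^n}}{2^n}$, and hence so does $T_n$; thus $T_n$ has the full-rank fixed point $\sigma=\frac{\one_{2^n}}{2^n}$, so that $\|\sigma^{-1}\|_\infty=2^n$ and $\log\lb\|\sigma^{-1}\|_\infty\rb=n\log 2$. The proof of the preceding theorem shows $\beta_D(T_n)\geq\frac{1}{2n^2}$; in particular $\beta_D(T_n)>0$, which via the strict $D_2$-contraction of Theorem~\ref{thm:improveddatadiscrete} also guarantees that $T_n$ is primitive, so the hypotheses of Theorem~\ref{thm:mixingdiscrete} are met.

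Then I would invoke Theorem~\ref{thm:mixingdiscrete} for $T_n$, which gives
\[
t_1(\epsilon)\leq-\frac{1}{\log\lb 1-\beta_D(T_n)\rb}\,\log\lb\frac{2\log\lb\|\sigma^{-1}\|_\infty\rb}{\epsilon^2}\rb,
\]
and substitute the quantities just computed. The one point worth checking is the direction of the two substitutions: the map $x\mapsto-\frac{1}{\log(1-x)}$ is positive and decreasing on $(0,1)$, so replacing $\beta_D(T_n)$ by the smaller value $\frac{1}{2n^2}$ enlarges the prefactor to $-\frac{1}{\log\lb 1-\frac{1}{2n^2}\rb}$; and since the outer logarithm is increasing and the prefactor is positive, replacing $\log\lb\|\sigma^{-1}\|_\infty\rb$ by $n\log 2$ inside it only enlarges the right-hand side. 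Carrying out both replacements and simplifying the argument of the logarithm gives the asserted bound.

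This is essentially a substitution, so there is no genuine obstacle. The only items requiring (minor) care are the elementary observation that $T_n$ fixes the maximally mixed state (which is what makes Lemma~\ref{lem:maximalrelent} yield $\sup_\rho D_2(\rho\|\sigma)=n\log 2$, the universal bound underlying Theorem~\ref{thm:mixingdiscrete}), and the bookkeeping of the inequality directions when one replaces $\beta_D(T_n)$ by its lower bound $\frac{1}{2n^2}$ (using the monotonicity of $x\mapsto-1/\log(1-x)$).
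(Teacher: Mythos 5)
Your proof is correct and is exactly the paper's argument: the paper's proof of this corollary consists of the single sentence that it follows from the preceding theorem (which gives $\beta_D(T_n)\geq\frac{1}{2n^2}$) together with Theorem~\ref{thm:mixingdiscrete}. The only caveat is that the literal substitution yields $\log\lb 2n\log 2/\epsilon^2\rb$ rather than $\log\lb n/\epsilon^2\rb$ in the final factor, a discrepancy by the constant $2\log 2>1$ (so the corollary as printed is very slightly stronger than what the cited results give); this imprecision is in the paper's own statement and not a flaw in your reasoning.
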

\begin{proof}
This follows directly from the previous theorem and Theorem \ref{thm:mixingdiscrete}. 
\end{proof}

\section{Strong converse bounds for the classical capacity}\label{sec:boundscap}

When classical information is sent via a quantum channel, the classical capacity is the supremum of transmission rates such that the probability for a decoding error vanishes in the limit of infinite channel uses. In general it is not possible to retrieve the information perfectly 
when it is sent over a finite number of uses of the channel, and the probability for successful decoding will be smaller than $1$. Here we want to derive bounds on this probability for quantum dynamical semigroups. More specifically we are interested in strong converse bounds on the classical capacity. An upper bound on the capacity is called a strong converse bound if whenever a transmission rate exceeds the bound the probability of successful decoding goes to zero in the limit of infinite channel uses. 

We refer to \cite[Chapter 12]{nielsen2000quantum} for the exact definition of the classical capacity and to \cite{796385,796386,strongconversedep,strongconvrenyi,tomamichel2015strong} for more details on strong converses and strong converse bounds. 


In \cite{strongconvrenyi} the following quantity was used to study strong converses.

\begin{defn}[$p$-information radius]
Let $T:\M_d\to\M_d$ be a quantum channel. The $p$-\textbf{information radius} $T$ is defined as\footnote{The $\ln(2)$ factor is due to our different choice of normalization for the divergences.}
\[
K_p(T)=\frac{1}{\ln(2)}\min\limits_{\sigma\in\D_d}\max\limits_{\rho\in\D_d}D_p(T(\rho)\|\sigma) .
\]

\end{defn}

We will often refer to a $(m,n,p)$-coding scheme for classical communication using a quantum channel $T$. By this we mean a coding-scheme for the transmission of $m$ classical bits via $n$ uses of the channel $T$ for which the probability of successful decoding is $p$ (see again \cite[Chapter 12]{nielsen2000quantum} for an exact definition). 
The following theorem shown in \cite[Section 6]{strongconvrenyi} relates the information radius and the probability of successful decoding. 

\begin{thm}[Bound on the success probability in terms of information radius\label{thm:boundpsucc}]
Let $T:\M_d\to\M_d$ be a quantum channel, $n\in\N$ and $R\geq 0$. For any $(nR,n,p_{\text{succ}})$-coding scheme for classical communication via $T$ we have
\begin{equation}\label{uppersuccess}
p_{\text{succ}}\leq 2^{-n\lb\frac{p-1}{p}\rb\lb R-\frac{1}{n}K_p\lb T^{\otimes n}\rb\rb}. 
\end{equation}

\end{thm}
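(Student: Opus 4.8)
The plan is to run the standard ``meta-converse'' argument for strong converses, using the data-processing inequality for the sandwiched $p$-R\'enyi divergence (valid for all $p>1$, the only range in which the claimed bound has content). Fix $n$ and $R$, set $M=2^{nR}$, and let $\rho_1,\dots,\rho_M\in\D_{d^n}$ be the (possibly entangled) codewords of the scheme and $\{E_j\}$ the decoding POVM, which we complete to a genuine POVM by adjoining a failure outcome $E_0=\one-\sum_j E_j$ if needed. Write $\omega_i=T^{\otimes n}(\rho_i)$, and let $\sigma^*\in\D_{d^n}$ achieve the minimum defining $K_p(T^{\otimes n})$, so that $D_p(\omega_i\|\sigma^*)\le\log(2)\,K_p(T^{\otimes n})$ for every $i$; if this minimum is $+\infty$ there is nothing to prove.

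The first move is to compare the codeword ensemble against a product reference state. Introduce the two classical--quantum states on (message register)$\,\otimes\,$(channel output),
\[
\Omega=\frac1M\sum_{i=1}^M\ketbra{i}\otimes\omega_i,\qquad
\Omega'=\frac1M\sum_{i=1}^M\ketbra{i}\otimes\sigma^* .
\]
Since $\Omega$ and $\Omega'$ are block-diagonal in the message register with equal weights $1/M$, a direct computation from \eqref{equ:Renyi} gives $\tr[((\Omega')^{\frac{1-p}{2p}}\Omega\,(\Omega')^{\frac{1-p}{2p}})^p]=\frac1M\sum_i\tr[((\sigma^*)^{\frac{1-p}{2p}}\omega_i(\sigma^*)^{\frac{1-p}{2p}})^p]$, and since $p-1>0$ this yields $D_p(\Omega\|\Omega')\le\max_i D_p(\omega_i\|\sigma^*)\le\log(2)\,K_p(T^{\otimes n})$. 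Now apply the CPTP map that measures the channel-output register with $\{E_j\}_{j\ge0}$ and records the outcome in a fresh classical register; by data processing for $D_p$ one obtains $D_p(\tilde\Omega\|\tilde\Omega')\le\log(2)\,K_p(T^{\otimes n})$, where $\tilde\Omega,\tilde\Omega'$ are the resulting classical distributions on pairs $(i,j)$ with $\tilde\Omega_{ij}=\frac1M\tr[E_j\omega_i]$ and $\tilde\Omega'_{ij}=\frac1M\tr[E_j\sigma^*]$.

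It remains to lower bound $D_p(\tilde\Omega\|\tilde\Omega')$ in terms of $p_{\text{succ}}=\frac1M\sum_i\tr[E_i\omega_i]$. Keeping only the diagonal terms $j=i$ with $1\le i\le M$ and writing $a_i=\tr[E_i\omega_i]$, $b_i=\tr[E_i\sigma^*]$ (so $\sum_i b_i\le1$), the classical R\'enyi sum satisfies $\sum_{i,j}\tilde\Omega_{ij}^{p}(\tilde\Omega'_{ij})^{1-p}\ge\frac1M\sum_i a_i^{p}b_i^{1-p}$. Applying H\"older's inequality with conjugate exponents $p$ and $p/(p-1)$ to the identity $\sum_i a_i=\sum_i(a_i^{p}b_i^{1-p})^{1/p}\,b_i^{(p-1)/p}$ gives $\sum_i a_i^{p}b_i^{1-p}\ge(\sum_i a_i)^p(\sum_i b_i)^{-(p-1)}\ge(M\,p_{\text{succ}})^p$. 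Hence $D_p(\tilde\Omega\|\tilde\Omega')\ge\frac1{p-1}\log(M^{p-1}p_{\text{succ}}^{p})=\log M+\frac{p}{p-1}\log p_{\text{succ}}$. Combining with the upper bound, substituting $\log M=nR\log2$, and solving for $p_{\text{succ}}$ yields exactly \eqref{uppersuccess}.

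I expect the only delicate points to be bookkeeping ones: checking that the completed decoding measurement is a legitimate quantum channel so that data processing applies; verifying the support conditions that keep the divergences finite (and noting the bound is vacuous otherwise); and keeping the direction of H\"older's inequality straight, since the exponent $1-p$ is negative. The conceptual content lies entirely in the two steps ``replace the codeword ensemble by the product reference state, then process through the decoder''.
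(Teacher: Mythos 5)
Your proof is correct: the paper does not prove this theorem itself but cites it from \cite[Section 6]{strongconvrenyi}, and your argument is exactly the standard meta-converse given there --- compare the codeword ensemble to the minimizing reference state via the block-diagonal computation, push through the decoding measurement by data processing for $D_p$ ($p>1$), and lower-bound the resulting classical R\'enyi divergence by H\"older to extract $\log M+\frac{p}{p-1}\log p_{\text{succ}}$. The delicate points you flag (completing the POVM, support/finiteness conditions, the sign of the exponent $1-p$) are handled correctly as written.
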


We will now apply the methods developed in the last sections to obtain strong converse bounds on the capacity of quantum dynamical semigroups.\hfill\\
\begin{thm}\label{thm:renyitocapacity}
Let $\liou:\M_d\to\M_d$ be a primitive Liouvillian with full rank fixed point $\sigma\in\D^+_d$ such that for some $p\in(1,\infty)$ there exists $c>0$ fulfilling $\beta_p(\liou^{(n)})\geq c$ for all $n\in\N$. Then for any $(nR,n,p_{\text{such}})$-coding scheme for classical communication via the quantum dynamical semigroup $T_t = e^{t\liou}$ we have
\[
p_{succ}\leq 2^{-n\lb\frac{p-1}{p}\rb\lb R-e^{-2ct}\log(\|\sigma^{-1}\|_{\infty})\rb} .
\]
\end{thm}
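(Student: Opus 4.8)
The plan is to bound the $p$-information radius $K_p(T_t^{\otimes n})$ of the $n$-fold product of the channel $T_t = e^{t\liou}$, and then feed this bound into Theorem~\ref{thm:boundpsucc}. The key observation is that $T_t^{\otimes n} = e^{t\liou^{(n)}}$ is itself a primitive quantum dynamical semigroup with full-rank fixed point $\sigma^{\otimes n}$, and by hypothesis $\beta_p(\liou^{(n)}) \geq c$. Therefore Theorem~\ref{thm:betapexpdecay} applies to $\liou^{(n)}$, giving
\[
D_p\lb e^{t\liou^{(n)}}(\rho) \| \sigma^{\otimes n}\rb \leq e^{-2\beta_p(\liou^{(n)})t} D_p\lb\rho\|\sigma^{\otimes n}\rb \leq e^{-2ct} D_p\lb\rho\|\sigma^{\otimes n}\rb
\]
for all $\rho\in\D_{d^n}$, where the second inequality uses $\beta_p(\liou^{(n)})\geq c$ together with $t\geq 0$ and the nonnegativity of $D_p$.

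Next I would take the supremum over $\rho$ and use Lemma~\ref{lem:maximalrelent} applied to the reference state $\sigma^{\otimes n}\in\D_{d^n}^+$: since $\sup_{\rho}D_p(\rho\|\sigma^{\otimes n}) = \log(\|(\sigma^{\otimes n})^{-1}\|_\infty) = \log(\|\sigma^{-1}\|_\infty^n) = n\log(\|\sigma^{-1}\|_\infty)$, we get
\[
\max_{\rho\in\D_{d^n}} D_p\lb e^{t\liou^{(n)}}(\rho)\|\sigma^{\otimes n}\rb \leq e^{-2ct}\, n\log(\|\sigma^{-1}\|_\infty).
\]
Choosing $\sigma^{\otimes n}$ as the (not necessarily optimal) reference state in the definition of the $p$-information radius then yields
\[
K_p(T_t^{\otimes n}) = \frac{1}{\log 2}\min_{\omega\in\D_{d^n}}\max_{\rho}D_p(T_t^{\otimes n}(\rho)\|\omega) \leq \frac{1}{\log 2}\, e^{-2ct}\, n\log(\|\sigma^{-1}\|_\infty) = n\, e^{-2ct}\log_2(\|\sigma^{-1}\|_\infty).
\]

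Finally, substitute $\frac{1}{n}K_p(T_t^{\otimes n}) \leq e^{-2ct}\log_2(\|\sigma^{-1}\|_\infty)$ into the bound \eqref{uppersuccess} of Theorem~\ref{thm:boundpsucc}; since $\frac{p-1}{p}>0$, decreasing the subtracted term $\frac{1}{n}K_p(T^{\otimes n}_t)$ only increases the exponent and hence weakens (enlarges) the upper bound, so
\[
p_{\text{succ}} \leq 2^{-n\lb\frac{p-1}{p}\rb\lb R - \frac{1}{n}K_p(T_t^{\otimes n})\rb} \leq 2^{-n\lb\frac{p-1}{p}\rb\lb R - e^{-2ct}\log_2(\|\sigma^{-1}\|_\infty)\rb},
\]
which is the claim. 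I do not expect any serious obstacle here: the argument is essentially a concatenation of Theorem~\ref{thm:betapexpdecay}, Lemma~\ref{lem:maximalrelent}, and Theorem~\ref{thm:boundpsucc}. The only point requiring a little care is the bookkeeping of the $\log 2$ normalization factor and confirming that monotonicity of the exponent in Theorem~\ref{thm:boundpsucc} lets us replace the true information radius by the (larger) bound without reversing the inequality.
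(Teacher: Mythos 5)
Your proposal is correct and follows essentially the same route as the paper's proof: apply Theorem~\ref{thm:betapexpdecay} to $\liou^{(n)}$, bound the resulting divergence via Lemma~\ref{lem:maximalrelent} with $\|(\sigma^{\otimes n})^{-1}\|_\infty=\|\sigma^{-1}\|_\infty^n$, and insert the estimate on $K_p(T_t^{\otimes n})$ into Theorem~\ref{thm:boundpsucc}. Your bookkeeping of the $\log 2$ normalization and of the monotonicity of the exponent is careful and accurate.
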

\begin{proof}
Using Theorem \ref{thm:betapexpdecay} and Lemma \ref{lem:maximalrelent} we have
\[
K_p(T^{\otimes n}_t)\leq \frac{1}{\ln(2)}\max\limits_{\rho\in\D_{d^n}}D_p(T_t^{\otimes n}(\rho)\|\sigma)\leq ne^{-2\beta_p\lb\liou^{(n)}\rb t}\log(\|\sigma^{-1}\|_{\infty}).
\]
Now Theorem \ref{thm:boundpsucc} together with the assumption $\beta_p(\liou^{(n)})\geq c$ finishes the proof. 
\end{proof}

Together with Theorem \ref{thm:betaVsAlpha} the previous theorem shows that a quantum memory can only reliably store classical information for small times when it is subject to noise described by a quantum dynamical semigroup with ``large'' logarithmic Sobolev constant, as we will see more
explicitly later in Section \ref{sec:boundssemi}. 
Moreover, we can use the results from \cite{quasifreesob} to give a universal lower bound to the decay of the capacity in terms of the spectral gap and the fixed point.

\begin{cor}\label{cor:decaycap}
Let $\liou:\M_d\to\M_d$ be a primitive Liouvillian with full rank fixed point $\sigma\in\D^+_d$ and spectral gap $\lambda(\liou)$. Then for any $(nR,n,p_{\text{such}})$-coding scheme for classical communication via the quantum dynamical semigroup $T_t = e^{t\liou}$ we have
\[
p_{\text{such}}\leq 2^{-\frac{n}{2}\lb R-e^{-k(\lambda,\sigma)t}\log(\|\sigma^{-1}\|_{\infty})\rb} 
\]
where  $k(\lambda,\sigma):=\frac{\lambda}{2\lb\ln\lb d^4 \|\sigma^{-1}\|_{\infty}\rb+11\rb}$.
\end{cor}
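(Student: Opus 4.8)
The plan is to combine Theorem \ref{thm:renyitocapacity} with a suitable lower bound on $\beta_p(\liou^{(n)})$ that holds uniformly in $n$ and depends only on the spectral gap $\lambda$ and the fixed point $\sigma$. Since Theorem \ref{thm:renyitocapacity} requires a constant $c>0$ with $\beta_p(\liou^{(n)})\geq c$ for all $n$, the whole task reduces to producing such a $c$ and then optimizing the resulting exponent over the free parameter $p$.

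First I would recall the chain of inequalities already available in the excerpt. By Theorem \ref{thm:betaVsAlpha} we have $\beta_p(\liou^{(n)})\geq \alpha_p(\liou^{(n)})/p$, and since the logarithmic Sobolev constants are ordered (the relevant monotonicity $\alpha_p \geq \alpha_2 \cdot(\text{something})$ or rather a comparison of $\alpha_p$ with $\alpha_2$), it suffices to lower bound $\alpha_2(\liou^{(n)})$. Here is where the cited result from \cite{quasifreesob} enters: it provides a universal bound of the form $\alpha_2(\liou^{(n)}) \geq \lambda(\liou^{(n)})/(\text{polylog factor in } d^n, \|\sigma^{\otimes n}{}^{-1}\|_\infty)$, and crucially the spectral gap tensorizes, $\lambda(\liou^{(n)}) = \lambda(\liou)$, while $\|(\sigma^{\otimes n})^{-1}\|_\infty = \|\sigma^{-1}\|_\infty^n$ and $\dim = d^n$. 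Plugging these in, the denominator $\log(d^{4n}\|\sigma^{-1}\|_\infty^n) + 11$ scales like $n(\log(d^4\|\sigma^{-1}\|_\infty)) + 11$, so that $\alpha_2(\liou^{(n)})/n \gtrsim \lambda/(2(\log(d^4\|\sigma^{-1}\|_\infty)+11))=: k(\lambda,\sigma)/2$ up to the handling of the additive $11$, which only helps for small $n$.

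Next I would choose the parameter $p$. In the bound of Theorem \ref{thm:renyitocapacity} the prefactor is $(p-1)/p$ and the decay rate in the exponent is $2c$ where $c$ is the uniform lower bound on $\beta_p(\liou^{(n)})$. Taking $p=2$ gives prefactor $1/2$ (matching the $n/2$ in the target) and $\beta_2(\liou^{(n)}) \geq \alpha_2(\liou^{(n)})/2 \geq n\,k(\lambda,\sigma)/4$; but we need $c$ to be an $n$-independent constant, whereas the natural lower bound on $\beta_2(\liou^{(n)})$ grows with $n$. The resolution is that in the capacity bound only $2\beta_p(\liou^{(n)}) t$ matters through $K_p(T_t^{\otimes n})\leq \frac{n}{\log 2} e^{-2\beta_p(\liou^{(n)})t}\log_2\|\sigma^{-1}\|_\infty$; to get the claimed exponent $e^{-k(\lambda,\sigma)t}$ we actually want $2\beta_2(\liou^{(n)})\geq k(\lambda,\sigma)$ — wait, that is $n$-independent and \emph{weaker} than $n\,k/4$ for $n\geq 2$, so it holds. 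So I would not literally invoke Theorem \ref{thm:renyitocapacity} as a black box with a global $c$, but rather re-run its one-line proof with the $n$-dependent bound $\beta_2(\liou^{(n)})\geq \alpha_2(\liou^{(n)})/4 \geq \lambda/(4(\log(d^{4n}\|\sigma^{-1}\|_\infty^n)+11))\cdot n$, simplify $\log(d^{4n}\|\sigma^{-1}\|_\infty^n)+11 = n\log(d^4\|\sigma^{-1}\|_\infty)+11 \leq n(\log(d^4\|\sigma^{-1}\|_\infty)+11)$, cancel the $n$, obtain $2\beta_2(\liou^{(n)})\geq k(\lambda,\sigma)$, and feed this into Theorem \ref{thm:boundpsucc} with $p=2$.

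The main obstacle I anticipate is purely bookkeeping: making sure the constants from \cite{quasifreesob} are quoted in exactly the normalization used here (the factor of $2$ from base-$e$ versus base-$2$ logarithms, the factor $p$ from Theorem \ref{thm:betaVsAlpha}, and the factor $\tfrac12$ relating $\beta_2$ to $\alpha_2$) so that the final denominator comes out as $2(\log(d^4\|\sigma^{-1}\|_\infty)+11)$ and not some other multiple. There is no conceptual difficulty: the spectral gap tensorizes exactly, the universal log-Sobolev estimate is assumed as an external input, and Pinsker plus Lemma \ref{lem:maximalrelent} control $K_p$; everything else is arithmetic in the exponent, followed by invoking Theorem \ref{thm:boundpsucc}.
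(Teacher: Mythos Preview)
Your overall architecture is right and matches the paper: bound $\alpha_2(\liou^{(n)})$ via \cite{quasifreesob}, pass to $\beta_2(\liou^{(n)})$ through Theorem~\ref{thm:betaVsAlpha}, then plug into Theorem~\ref{thm:renyitocapacity} with $p=2$. The problem is in how you interpret the input from \cite{quasifreesob}.

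You treat \cite{quasifreesob} as a single-system universal bound $\alpha_2(\mathcal{M})\geq \lambda(\mathcal{M})/\bigl(\log(\dim^4\|\sigma_{\mathcal{M}}^{-1}\|_\infty)+11\bigr)$ and then apply it with $\mathcal{M}=\liou^{(n)}$. That route gives
\[
\alpha_2(\liou^{(n)})\ \geq\ \frac{\lambda}{\,n\log(d^4\|\sigma^{-1}\|_\infty)+11\,}\ \geq\ \frac{\lambda}{\,n\bigl(\log(d^4\|\sigma^{-1}\|_\infty)+11\bigr)\,}=\frac{2k(\lambda,\sigma)}{n},
\]
which \emph{decays} like $1/n$; there is no factor of $n$ in the numerator. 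Your line ``$\alpha_2(\liou^{(n)})/4 \geq \lambda/(4(\ldots))\cdot n$'' inserts an unjustified $n$, and without it you only get $2\beta_2(\liou^{(n)})\geq 2k(\lambda,\sigma)/n$, yielding an exponent $e^{-2k t/n}\to 1$, which is useless. (Also, Theorem~\ref{thm:betaVsAlpha} with $p=2$ gives $\beta_2\geq \alpha_2/2$, not $\alpha_2/4$.)

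What \cite[Theorem~9]{quasifreesob} actually supplies, and what the paper uses, is a \emph{tensorization} statement: $\alpha_2(\liou^{(n)})\geq 2k(\lambda(\liou),\sigma)$ for all $n\in\N$, with $k$ depending only on the single-copy data $d,\sigma,\lambda$. Once you quote it in this form, the proof is immediate: Theorem~\ref{thm:betaVsAlpha} gives $\beta_2(\liou^{(n)})\geq k(\lambda,\sigma)$ uniformly in $n$, and Theorem~\ref{thm:renyitocapacity} with $p=2$ and $c=k(\lambda,\sigma)$ yields the claimed bound. No re-running of the proof of Theorem~\ref{thm:renyitocapacity} or $n$-dependent bookkeeping is needed.
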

\begin{proof}
It was shown in \cite[Theorem 9]{quasifreesob} that $\alpha_2(\liou^{(n)})\geq 2k(\lambda(\liou),\sigma)$ for all $n\in\N$. Using Theorem \ref{thm:betaVsAlpha} we have $\beta_2(\liou^{(n)})\geq k(\lambda(\liou),\sigma)$ for all $n\in\N$. Together with Theorem \ref{thm:renyitocapacity} this gives the claim.
\end{proof}

For unital semigroups, i.e. for $\sigma=\frac{\one_d}{d}$, one can improve the bound from the previous theorem slightly using (see \cite[Theorem 3.3]{entproddoubly}) 
\begin{equation}\label{equ:boundlsunital}
k\lb\lambda,\frac{\one_d}{d}\rb=\frac{\lambda(1-2d^{-2})}{2(\ln(3)\ln(d^2-1)+2(1-2d^{-2})}
\end{equation}
For $d=2$ we even have $k(\lambda,\frac{\one_d}{2})=\frac{\lambda}{2}$ (see \cite{Kastoryanosob}). 

\section{Examples of bounds for the classical capacity of Semigroups}\label{sec:boundssemi}
We will now apply the estimate on the capacity given by Corollary \ref{cor:decaycap} to some examples of semigroups. Here $C(T)$ will denote the classical
capacity of a quantum channel $T$.

\subsection{Depolarizing Channels}
In \cite{king2003capacity} it is shown that for $\liou_{\frac{\one}{d}}(X)=\tr(X)\frac{\one}{d}-X$ we have 
\begin{align}\label{equ:capdep}
C\lb e^{t\liou_{\frac{\one}{d}}}\rb=\log(d)+\lb e^{-t}+c(t,d)\rb\log\lb e^{-t}+c(t,d)\rb+(d-1)c(t,d)\log\lb c(t,d)\rb 
\end{align}
with $c(t,d)=(1-e^{-t})d^{-1}$.
In \cite{strongconversedep} the strong converse property was established. The semigroup generated by $\liou_{\frac{\one}{d}}$ is
therefore a natural candidate to evaluate the quality of our bounds, as determining its classical capacity can be considered a solved problem.
As $\liou_{\frac{\one}{d}}$ is just the difference of a projection and the identity, it is easy to see that the spectral gap of $\liou_{\frac{\one}{d}}$ is $1$, which gives us the upper bound
\begin{align}\label{equ:capdepbound}
C\lb e^{t\liou_{\frac{\one}{d}}}\rb\leq \log(d)e^{-\frac{(1-2d^{-2})}{2(\ln(3)\ln(d^2-1)+2(1-2d^{-2})}t}
\end{align}
for $d>2$ and
\begin{align}
C\lb e^{t\liou_{\frac{\one}{2}}}\rb \leq e^{-\frac{t}{2}} 
\end{align}
for $d=2$.
\begin{figure}\label{fig:capdep}
\begin{center}
\includegraphics[width=.4\textwidth]{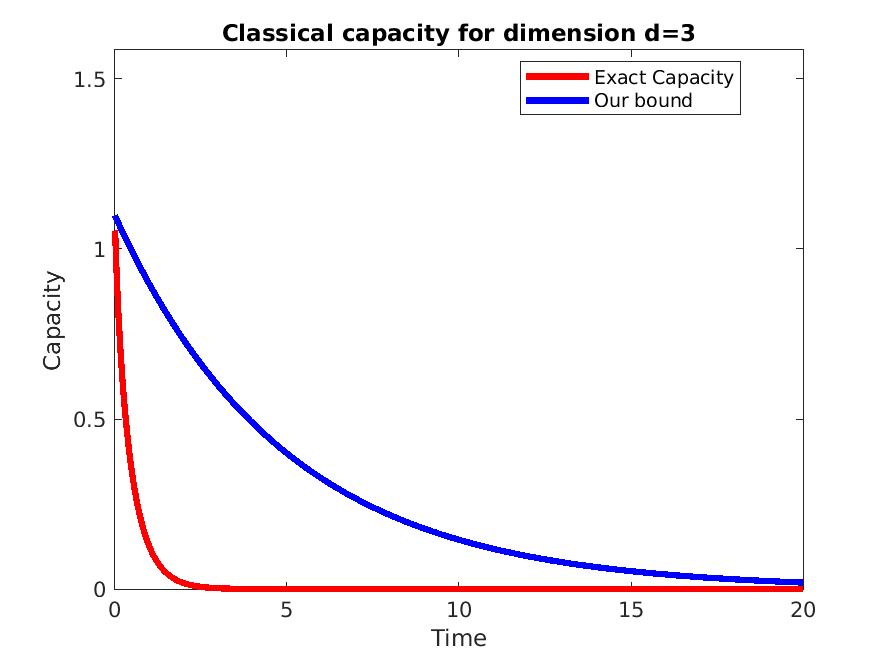}
\includegraphics[width=.4\textwidth]{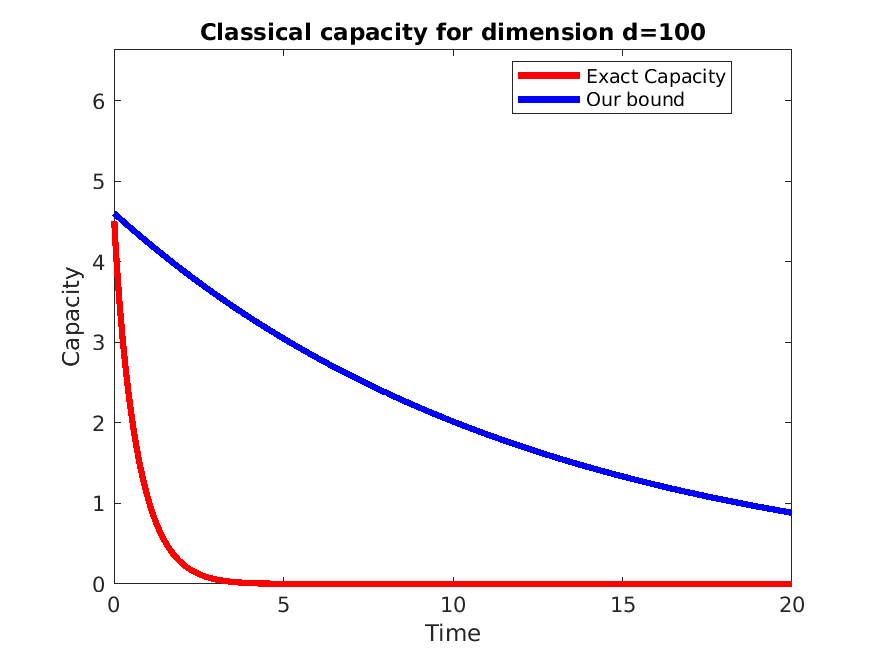}
\end{center}
\caption{Comparison of the capacity of the depolarizing channel, given in Equation \eqref{equ:capdep},  and the bound obtained by our methods, given in Equation \eqref{equ:capdepbound}. }
\end{figure}

\subsection{Stabilizer Hamiltonians}\label{sec:stabilizerhamilt}
Estimates on the spectral gap of Davies generators of stabilizer Hamiltonians were obtained in~\cite{Temme2017}. 
In the following we will make the same assumptions as in \cite{Temme2017} on the coupling of the system to the bath. That is, we assume that 
the operators $S^\alpha$ (see \eqref{equ:couplingdavies}) are given by single qubit 
Pauli operators $\sigma_x,\sigma_y$ or $\sigma_z$. For the transition rates $G^\alpha(\omega)$ we only assume that they satisfy the KMS condition~\cite{Kossakowski1977}, that is,
 $G^\alpha(-\omega)=G^\alpha(\omega)e^{-\beta\omega}$. This condition implies that the semigroup is reversible. Recall that for Davies
generators at inverse temperature $\beta>0$, which we will denote by $\liou_\beta$, the stationary state is always given by the thermal state $\frac{e^{-\beta H}}{\tr\lb e^{-\beta H}\rb}$. 

We will not discuss stabilizer Hamiltonians and groups and their connection to error-correcting codes, but refer to~\cite[Section 10.5]{nielsen2000quantum}
for more details. Given some stabilizer group $S\subset\mathcal{P}_n$, where $\mathcal{P}_n$ is the group generated by the tensor product of $n$
Pauli matrices, with commutative generators $S=<P_1,\ldots,P_k>$, we define the stabilizer Hamiltonian to be given by
\begin{align*}
H_S=-\sum\limits_{i=1}^kP_i. 
\end{align*}
We then have:
\begin{lem}\label{lem:boundsmalleststab}
Let $H_S$ be the stabilizer Hamiltonian of the stabilizer group $S=<P_1,\ldots,P_k>$ 
on $n$ qubits. Denote by $\sigma_\beta=\frac{e^{-\beta H_S}}{\tr \left(e^{-\beta H_S}\right)}$ the corresponding thermal state at inverse inverse temperature $\beta>0$. Then
\begin{align*}
\|\sigma_\beta^{-1}\|_\infty\leq2^{n}e^{2k\beta}. 
\end{align*}
\end{lem}
\begin{proof}
The eigenvalues of each $P_i$ are contained in $\{1,-1\}$,
as they are just tensor products of Pauli matrices.
From this we have
\begin{align}\label{equ:spectrumstabilizer}
-k\one\leq H_S\leq k\one,
\end{align}
as $H_S$ is just the sum of $k$ terms such that $-\one\leq P_i\leq \one$.
From \eqref{equ:spectrumstabilizer} it follows that
\begin{align}\label{equ:boundpartition}
\tr\lb e^{-\beta H_S}\rb\leq2^ne^{\beta k}, 
\end{align}
as we have $2^n$ eigenvalues, including multiplicities. Moreover, it also follows that
\begin{align}\label{equ:boundlargesthamilt}
\|e^{\beta H_S}\|_\infty\leq e^{\beta k}. 
\end{align}
As $\|\sigma_\beta^{-1}\|_\infty=\|e^{\beta H_S}\|_\infty\tr\lb e^{-\beta H_S}\rb$, the claim follows by putting \eqref{equ:boundpartition}
and \eqref{equ:boundlargesthamilt} together.
\end{proof}

In~\cite[Theorem 15]{Temme2017} they show
\begin{align*}
\lambda\geq\frac{h^*}{4\eta^*} e^{-2\beta \bar{\epsilon}}
\end{align*}
for the spectral gap $\lambda$ of the Davies generators of stabilizer Hamiltonians at inverse temperature $\beta>0$.
Here $\bar{\epsilon}$ is the generalized energy barrier, $h^*$ is the smallest transition rate and $\eta^*$ the longest path in Pauli space.
We refer to \cite{Temme2017} for the exact definition of  these parameters. It is important to stress that in general $\eta^*$ will scale with the number 
of qubits, so our estimate on the capacity will not be very good as the number of qubits increases.

However, in~\cite[Theorem 15]{Temme2017} they also show the estimate 
\begin{align*}
\lambda\geq\frac{h^*}{4} e^{-2\beta \bar{\epsilon}},
\end{align*}
for the special case in which the generalized energy barrier can be evaluated with canonical paths $\Gamma_1$. We again refer to \cite{Temme2017} for the exact
definition. For these cases the gap does not scale with the dimension and our estimate is much better.
Summing up we obtain:
\begin{thm}\label{thm:equlifetimestabilizer}
Let $H_S$ be the stabilizer Hamiltonian of the stabilizer group $S=<P_1,\ldots,P_k>$ 
on $n$ qubits. Moreover, let $\liou_\beta$ be its Davies generator
at inverse temperature $\beta>0$. Then the classical capacity $C(e^{t\liou_\beta})$ is bounded by
\begin{align*}
C(e^{t\liou_\beta})\leq\left(n+2\beta k\log(e)\right)e^{-r(\beta,n,k)t}, 
\end{align*}
with
\begin{align*}
r(\beta,n,k)=e^{-2\beta \bar{\epsilon}}\frac{h^*}{8\eta^*\lb2k\beta+5n\ln\lb2\rb+11\rb} 
\end{align*}
and
\begin{align*}
r(\beta,n,k)=e^{-2\beta \bar{\epsilon}}\frac{h^*}{8\lb2k\beta+5n\ln\lb2\rb+11\rb} 
\end{align*}
in case the generalized energy barrier can be evaluated with canonical paths $\Gamma_1$.
Moreover, this is a bound in the strong converse sense.
\end{thm}
\begin{proof}
The claim follows immediately after inserting the bounds from Lemma \ref{lem:boundsmalleststab} and 
\cite[Theorem 15,16]{Temme2017} into Corollary \ref{cor:decaycap}. 
\end{proof}
In \cite{Temme2017} one can find more explicit bounds for the parameters $\bar{\epsilon}$, $\eta^*$ and $h^*$ for some stabilizer groups.
To the best of our knowledge this is the first bound available for the classical capacity of this class of quantum channels.
To make the bound in Theorem \ref{thm:equlifetimestabilizer} more concrete, we show what we obtain for the $2D$ toric code.

\subsection{2D Toric Code}
Here we consider the 2D toric code as originally introduced in~\cite{bravyi1998quantum}, which
is a stabilizer code.
We consider only square lattices: We take an $N\times N$ lattice with $N^2$ vertical and $(N+1)^2$ horizontal edges; 
associating a qubit to each edge gives a total of $n=2N^2+2N+1$ physical qubits. 
The stabilizer operators are $N(N+1)$ plaquette operators (including the ``open'' plaquettes along the rough boundary) and $N(N+1)$ vertex operators, all of which are independent. 
It goes beyond the scope of this article to explain the $2D$ toric code in detail and we refer to~\cite[Section 19.4]{lidar2013quantum} for a discussion. 
But from the previous observations
we obtain that we have $k=2N(N+1)$ generators for the stabilizer group of the $2D$ toric code on $n=2N^2+2N+1$ qubits. 
We will make the same assumptions on the the Davies generators at inverse temperature $\beta>0$ for the toric code as in \cite{Temme2017}. These are discussed in the beginning of Subsection \ref{sec:stabilizerhamilt}.

In \cite{alicki2009thermalization} it was proved that the spectral gap for the Davies generators for the $2D$ toric code at inverse temperature $\beta$ satisfies 
$\lambda\geq\frac{1}{3}e^{-8\beta}$, a result which was reproved in \cite{Temme2017} using different techniques.
We therefore obtain:
\begin{cor}\label{cor:boundcaptoric}
Let $H$ be the stabilizer Hamiltonian of the $2D$ toric code on a $N\times N$ lattice and $\liou_\beta$ be its Davies generator
at inverse temperature $\beta>0$. Then the classical capacity $C(e^{t\liou_\beta})$ is bounded by
\begin{align}\label{equ:boundtori}
C(e^{t\liou_\beta})\leq\left(2N^2+2N+1+\log(2)4\beta N(N+1)\right)e^{-r(\beta,L)t}, 
\end{align}
with 
\begin{align*}
r(\beta,N)=\frac{e^{-8\beta}}{6\lb(10N^2+10N+5)\ln(2)+4\beta N(N+1)\rb+66}. 
\end{align*}
Moreover, this is a bound in the strong converse sense.
\end{cor}
\begin{proof}
The claim follows immediately from Lemma \ref{lem:boundsmalleststab}
and the spectral gap estimate of \cite{alicki2009thermalization} for the toric code.
\end{proof}
\begin{figure}
\begin{center}
\includegraphics[width=.9\textwidth]{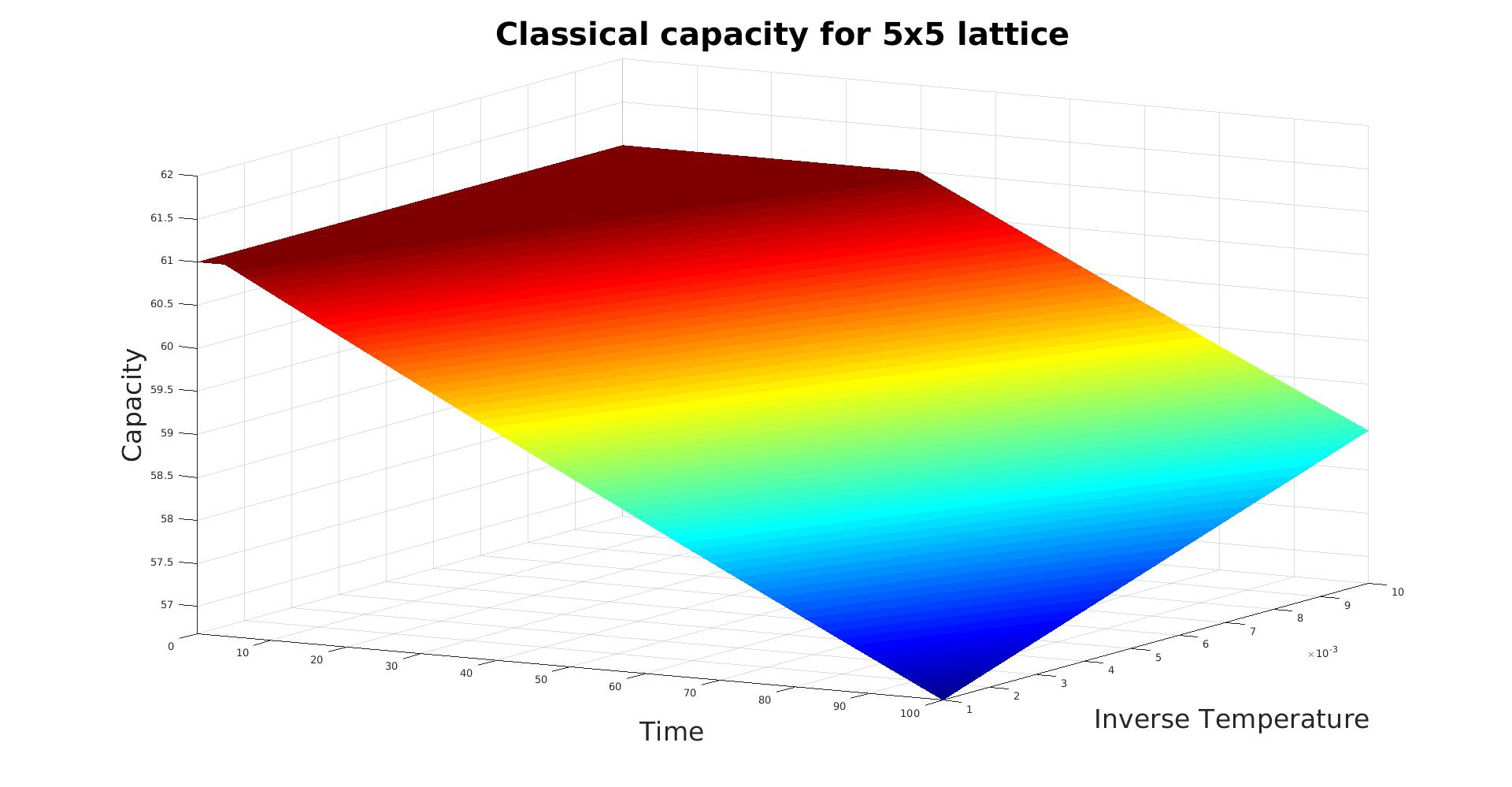}
\end{center}
\caption{Plot for a $5\times5$ lattice of the minimum of the bound in Equation \eqref{equ:boundtori}
and the trivial bound $C(e^{t\liou_\beta})\leq2N^2+2N+1=61$ as a function of the inverse temperature and time
 for the Davies generator of the $2D$ toric code.}\label{fig:captoric}
\end{figure}
From Figure \ref{fig:captoric} it becomes evident that we cannot retain information in the $2D$ toric for long times at small inverse temperatures and that we can get
nontrivial estimates even for very high dimensions, as the size of the gap does not scale with the size of the lattice.
It is conjectured that if the spectral gap of the Davies generators of a Hamiltonian with local, commuting terms satisfies
a lower bound which is independent of the size of the lattice, then the logarithmic Sobolev $2$ constant also satisfies such a bound~\cite{KastoryanoSpectralGap}.
As the Hamiltonian of the $2D$ toric code is of this form, proving this conjecture would lead to a bound similar to
the one in Corollary \ref{cor:boundcaptoric}, but with a rate $r(\beta,N)$ independent of the size of the lattice.
This would of course lead to much better bounds for large lattice sizes.

\subsection{Truncated harmonic oscillator}
Consider the Hamiltonian of a truncated harmonic oscillator
\[
H=\sum\limits_{n=0}^dn\ketbra{n}\in\M_{d+1}.\]
Suppose that the systems couples to the bath via
$S=(a+a^\dagger)$, with 
\begin{align}
a^\dagger=\sum\limits_{n=1}^d\sqrt{n}\ket{n}\bra{n-1} 
\end{align}
and the transition rate function $G(x)=(1+e^{-x\beta})^{-1}$.
Let $\sigma_\beta=\frac{e^{-\beta H}}{\tr(e^{-\beta H})}$. As the eigenvalues of $e^{-\beta H}$ are just a geometric sequence, we have 
\begin{align}
\|\sigma_\beta^{-1}\|_\infty=\frac{1-e^{-\beta(d+1)}}{1-e^{-\beta}}e^{\beta d}. 
\end{align}
In~\cite[Section V, Example 1]{lowerbounddavies} they show 
\begin{align}\label{equ:lowerbounddavies}
\lambda\geq\frac{1}{2}\min\{((1+e^{-\beta})d)^ {-1},\left[(G(1)(\sqrt{d-1}-\sqrt{d})^2+G(-1)(\sqrt{d-2}-\sqrt{d-1})^2)\right]\}, 
\end{align}
for the spectral gap $\lambda$ of the Davies generator $\liou_\beta$ of the truncated harmonic oscillator at inverse temperature $\beta>0$. 
We will denote the value of the lower bound in Equation \eqref{equ:lowerbounddavies} by $\mu(d,\beta)$.
As we can compute $\|\sigma_\beta^{-1}\|$ exactly and have a bound on the spectral gap from  we can apply Corollary \ref{cor:boundcaptoric} to these semigroups.

Note that in this case the bound scales with the dimension.
Putting these inequalities together with the bound given in Corollary \ref{cor:decaycap} for the capacity, we have for the 
classical capacity of this semigroup:
\begin{align}\label{equ:boundtrunc}
C(e^{t\liou})\leq \lb\log\left(\frac{1-e^{-\beta(d+1)}}{(1-e^{-\beta})}\right)+\beta d\log(e)\rb e^{-r(d,\beta)t},
\end{align}
with 
\begin{align}
 r(d,\beta)=\lb8\ln\lb d+1\rb+2\ln\lb\frac{1-e^{-\beta(d+1)}}{1-e^{-\beta}}\rb+2\beta d+22\rb^{-1}\mu(d,\beta).
\end{align}

\begin{figure}\label{fig:captruncharm}
\begin{center}
\includegraphics[width=.9\textwidth]{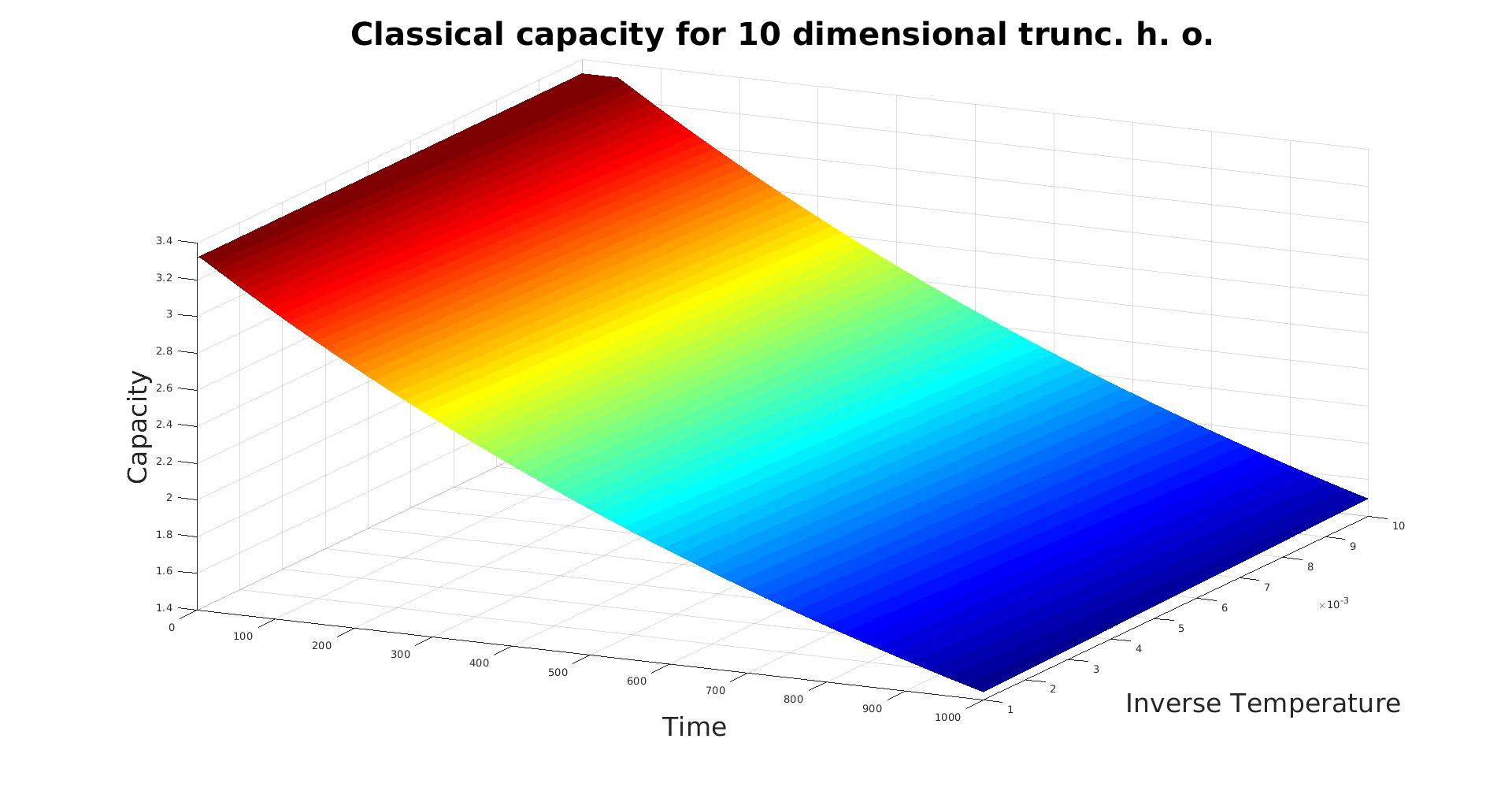}
\end{center}
\caption{Plot of the minimum of the bound in Equation \eqref{equ:boundtrunc}
and the trivial bound $C(e^{t\liou_\beta})\leq\log(10)\simeq3.32$ as a function of the inverse temperature and time for the classical capacity of the Davies generator of the truncated harmonic oscillator with $d+1=10$.}
\end{figure}

In this example we see that, as the estimate available on the gap scales with the dimension, our estimates are 
not much better than the trivial $\log(d+1)$ for high dimensions unless we are looking at large times.

\section{Conclusion and open questions}
We have introduced a framework similar to logarithmic Sobolev inequalities to study the convergence of a primitive quantum dynamical semigroup towards its fixed point in the distance measure of sandwiched R\'{e}nyi divergences. 
These techniques can be used to obtain mixing time bounds and strong converse bounds on the classical capacity of a quantum dynamical semigroup. 
Moreover, these results show that a logarithmic Sobolev inequality or hypercontractive inequality always implies a mixing time bound without the 
assumption of $l_p$-regularity (which is still not known to hold for general Liouvillians~\cite{Kastoryanosob}). 
Although we have some structural results concerning the constants $\beta_p$, some questions remain open. For logarithmic Sobolev inequalities it is known that $\alpha_2\leq\alpha_p$ for $p\geq1$ under the assumption of $l_p$-regularity (see \cite{Kastoryanosob}). It would be 
interesting to investigate if a result of similar flavor also holds for the $\beta_p$. 
In all examples discussed here, $\beta_2$ and $\alpha_2$ are of the same order and it would be interesting to know if this is always the case.
The framework of logarithmic Sobolev inequalities has recently been extended to the nonprimitive case~\cite{bardetdecoherence}. It should be possible
to develop a similar theory for the sandwiched R\'{e}nyi divergences to get rid of regularity assumptions present in their main results, as we did here
for the usual logarithmic Sobolev constants. 

We restricted our analysis to the sandwiched R\'{e}nyi divergences, as they can be expressed in terms of relative densities and noncommutative $l_p$-norms.
This allowed us to connect the convergence under the sandwiched divergences to the theory of hypercontractivity and to use tools from interpolation theory which
were vital to prove estimates on capacities.
There are however other noncommutative generalizations of the R\'{e}nyi divergences that are known to contract under quantum channels, such as 
the one discussed in~\cite[p. 113]{ohya2004quantum}. 
It would be interesting to explore the entropy production and convergence under semigroups for this and other families of divergences in future work.

In a similar vein, it would be interesting to investigate the entropy production or convergence rate for the range $\frac{1}{2}<p<1$, 
as the sandwiched R\'{e}nyi divergences
are known to contract under quantum channels for all $p>\frac{1}{2}$~\cite{liebmonotonicity}. 
However, looking closely at the proof of Theorem \ref{thm:derivativesand}, we see
that for $p<1$ the sandwiched R\'{e}nyi divergence is only differentiable at $t=0$ if the initial state has full rank. 
The  study of the convergence of these divergences for $p<1$ therefore
requires a different technical approach than that of this work.
Finally, it would of course be relevant to obtain bounds on the $\beta_p$ for more examples without relying on the estimate
based on the spectral gap, such as Davies generators.

\section*{Acknowledgments}

We thank David Reeb and Oleg Szehr for interesting and engaging discussions, which initiated this project. 
D.S.F acknowledges support from the graduate program TopMath of the Elite Network of Bavaria, the 
TopMath Graduate Center of TUM Graduate School at Technische Universit\"{a}t M\"{u}nchen, the Deutscher Akademischer 
Austauschdienst(DAAD) and by the Technische Universit\"at M\"unchen – Institute for Advanced Study,
funded by the German Excellence Initiative and the European Union Seventh Framework
Programme under grant agreement no. 291763.

A.M-H acknowledges financial support from the European Research Council (ERC Grant Agreement no 337603), 
the Danish Council for Independent Research (Sapere Aude), the Swiss National Science Foundation (project no PP00P2 150734), and the VILLUM FONDEN via the QMATH Centre of Excellence (Grant
No. 10059).

This work was supported by the German Research Foundation (DFG) and the Technical University of Munich within the funding programme Open Access Publishing.

This work was initiated at the BIRS workshop 15w5098, “Hypercontractivity and Log Sobolev Inequalities in Quantum Information Theory”. We thank BIRS and the Banff Centre for their hospitality.
\appendix

\section{Taylor Expansion of the Dirichlet Form}

In order to compute the Taylor expansions of the Dirichlet forms and of the noncommutative $l_p$-norms we define $f_p:\R^2\ra \R$ and $g_p:\R^2\ra\R$ for $p>1$ as 
\begin{equation}
f_p(x,y) = \begin{cases} (p-1)x^{p-2} & \text{ if }x=y \\
\frac{x^{p-1}-y^{p-1}}{x-y} & \text{ else}\end{cases}
\end{equation}
and 
\begin{equation}
g_p(x,y) = \begin{cases} \frac{p(p-1)}{2}x^{p-2} & \text{ if }x=y \\
\frac{(p-1)x^{p}-px^{p-1}y + y^p}{(x-y)^2} & \text{ else.}\end{cases}
\end{equation}
Note that the following identity holds 
\begin{equation}
g_p(x,y)+g_p(y,x) = pf_p(x,y)
\label{equ:Identitygf}
\end{equation}
for any $x,y\in\R$.

\begin{lem}[Taylor expansion]
Consider a primitive, reversible Liouvillian $\liou:\M_d\ra\M_d$ with full rank fixed point $\sigma\in\D^+_d$. Let $X\in\M_d$ be an eigenvector 
of $\hat{\liou} = \Gamma^{-1}_\sigma\circ \liou\circ \Gamma_\sigma$ with corresponding eigenvalue $\lambda\in\R$ (i.e.\ $\hat{\liou}(X) = \lambda X$) and $Y_\epsilon = \one_d + \epsilon X$. Then we have
\begin{equation}
\Em_p^\liou(Y_\epsilon) = \frac{p}{2(p-1)}\lb 2\epsilon^2\sum_{1\leq i\leq j\leq d}f_p(s_i,s_j)b_{ij}b_{ji} + O(\epsilon^3)\rb.
\label{equ:DirichletTaylor}
\end{equation}
and 
\begin{equation}
\kappa_p(Y_\epsilon) = \frac{p\epsilon^2}{p-1} \sum_{1\leq i\leq j\leq d}f_p(s_i,s_j)b_{ij}b_{ji} + O(\epsilon^3).
\label{equ:KappaTaylor}
\end{equation}
Where $\sigma^{1/p} = U\text{diag}\lb s_1,s_2,\ldots ,s_d\rb U^\dagger$ and $b_{ij} = (U^\dagger \sigma^{1/{2p}}X\sigma^{1/{2p}}U)_{ij}$. 
\label{lem:TaylorExp}
\end{lem}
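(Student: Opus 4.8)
The plan is to reduce everything to a second-order Taylor expansion around the identity, using that $Y_\epsilon = \one_d + \epsilon X$ with $X$ an eigenvector of $\hat\liou$. First I would observe that since $\hat\liou(\one_d) = 0$ (because $\liou(\sigma) = 0$ gives $\hat\liou(\Gamma^{-1}_\sigma(\sigma)) = \hat\liou(\one_d) = 0$), we have $\hat\liou(Y_\epsilon) = \epsilon\lambda X$, so the Dirichlet form becomes
\[
\Em_p^\liou(Y_\epsilon) = -\frac{p}{2(p-1)}\epsilon\lambda\,\lk I_{q,p}(Y_\epsilon), X\rk_\sigma,
\]
and what remains is to expand $I_{q,p}(Y_\epsilon)$ to first order in $\epsilon$. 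For $\kappa_p$ I would expand $\|Y_\epsilon\|_{p,\sigma}^p = \tr\big[(\sigma^{1/2p}Y_\epsilon\sigma^{1/2p})^p\big] = \tr\big[(\sigma^{1/p} + \epsilon\,\sigma^{1/2p}X\sigma^{1/2p})^p\big]$ directly to second order; the quantity $\log(\|Y_\epsilon\|_{p,\sigma}^p/\|Y_\epsilon\|_{1,\sigma}^p)$ will itself be $O(\epsilon^2)$ (the first-order terms cancel after dividing by $\|Y_\epsilon\|_{1,\sigma}^p$), so $\kappa_p(Y_\epsilon)$ picks up the product of a leading $O(1)$ factor and an $O(\epsilon^2)$ factor, which is exactly the quadratic term I want.

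The key computational device is the standard formula for the second differential of $A \mapsto \tr[A^p]$ at a positive diagonal matrix: diagonalizing $\sigma^{1/p} = U\,\text{diag}(s_1,\dots,s_d)\,U^\dagger$ and writing $B = U^\dagger\sigma^{1/2p}X\sigma^{1/2p}U$ with entries $b_{ij}$, one has
\[
\tr\big[(\sigma^{1/p} + \epsilon\,\sigma^{1/2p}X\sigma^{1/2p})^p\big] = \tr[\sigma] + p\,\epsilon\sum_i s_i^{p-1} b_{ii} + \epsilon^2\sum_{i,j} g_p(s_i,s_j)\,b_{ij}b_{ji} + O(\epsilon^3),
\]
where $g_p$ is the divided-difference kernel defined in the appendix; symmetrizing $g_p(s_i,s_j) + g_p(s_j,s_i) = p f_p(s_i,s_j)$ via \eqref{equ:Identitygf} rewrites the quadratic term as $\tfrac{p}{2}\sum_{i,j} f_p(s_i,s_j) b_{ij}b_{ji}$, i.e. $p\sum_{1\le i\le j\le d} f_p(s_i,s_j) b_{ij}b_{ji}$ (the $i=j$ terms being counted once). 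This gives \eqref{equ:KappaTaylor} after noting $\|Y_\epsilon\|_{1,\sigma}^p = (\tr[\Gamma_\sigma(Y_\epsilon)])^p = 1 + O(\epsilon)$ absorbs the linear term and the prefactor $\|Y_\epsilon\|_{p,\sigma}^p \to 1$. For the Dirichlet form I would similarly expand $I_{q,p}(Y_\epsilon) = \Gamma_\sigma^{-1/q}\big(|\Gamma_\sigma^{1/p}(Y_\epsilon)|^{p/q}\big)$ to first order and pair it against $X$ in the $\sigma$-scalar product; the relevant trace again organizes into the same divided-difference kernel, and the eigenvalue relation $\hat\liou(X) = \lambda X$ pulls out the factor $\lambda$, yielding \eqref{equ:DirichletTaylor} with the characteristic doubling $2\epsilon^2$ (versus $\epsilon^2$ for $\kappa_p$) because the Dirichlet form pairs the first-order part of $I_{q,p}$ with $X$ rather than taking a genuine second variation.

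The main obstacle I anticipate is bookkeeping the first-order expansion of the operator $I_{q,p}(Y_\epsilon)$: unlike $\tr[A^p]$, which is a scalar function whose differentials are classical, $I_{q,p}$ is a composition of noncommutative power maps and one must use the Daleckii–Krein formula (derivative of $A \mapsto A^r$ expressed via divided differences in an eigenbasis) carefully, keeping track of which operator's eigenbasis ($\sigma^{1/p}$'s) everything is expressed in, and verifying that after pairing with $X$ and using reversibility the cross terms assemble into precisely $f_p(s_i,s_j)b_{ij}b_{ji}$ with the right combinatorial factor. A secondary point to handle cleanly is that $X$ being self-adjoint and nonzero forces $B = U^\dagger\sigma^{1/2p}X\sigma^{1/2p}U$ to be self-adjoint and nonzero, so $b_{ij}b_{ji} = |b_{ij}|^2 \ge 0$ with strict positivity for at least one pair — this is what makes the leading quadratic terms genuinely nonvanishing, which is needed for the application in Theorem \ref{thm:upperboundpspectral}, though strictly speaking it belongs to that proof rather than to the statement of the lemma itself.
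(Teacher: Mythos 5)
Correct, and essentially the paper's own proof: you make the same reduction of $\Em_p^\liou(Y_\epsilon)$ to $\epsilon\lambda\,\tr[(A+\epsilon B)^{p-1}B]$ and of $\|Y_\epsilon\|_{p,\sigma}^p$ to $\tr[(A+\epsilon B)^p]$ with $A=\sigma^{1/p}$, $B=\sigma^{1/2p}X\sigma^{1/2p}$, apply the Daleckii--Krein formula, and symmetrize via $g_p(x,y)+g_p(y,x)=pf_p(x,y)$. The only point where the paper is slightly cleaner: it disposes of the linear terms by noting $\lk \one_d , X\rk_\sigma = 0$ (orthogonality of eigenvectors of the reversible generator), which is also what makes your ``cancellation after dividing by $\|Y_\epsilon\|_{1,\sigma}^p$'' exact at second order rather than only at first order.
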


\begin{proof}
Using that $X\in\M_d$ is an eigenvector of $\hat{\liou}$ a simple computation gives
\[
\Em_p^\liou(Y_\epsilon) = \frac{p\epsilon\sigma}{2(p-1)}\text{tr}\lb(A+\epsilon B)^{p-1} B\rb
\]
for $A = \sigma^{1/p}$ and $B = \sigma^{1/2p}X\sigma^{1/2p}$. Note that 
\[
\frac{d^k}{d\epsilon^k}\text{tr}\lb(A+\epsilon B)^{p-1} B\rb\Big{|}_{\epsilon = 0} = \text{tr}\lb D^kF(A)(B,B,\ldots ,B) B\rb
\]
for the matrix power $F:\M_d\ra\M_d$ given by $F(X)=X^{p-1}$. We apply the Daleckii-Krein formula (see \cite{daletskii1965integration} and \cite[Theorem 2.3.1.]{hiai2010matrix} for the version used here) and obtain 
\[
\frac{d}{d\epsilon}\text{tr}\lb(A+\epsilon B)^{p-1} B\rb\Big{|}_{\epsilon = 0} = \sum^d_{i=1}\sum^d_{j=1} f_p(s_i,s_j)b_{ij}b_{ji}.
\]
Using that 
\[
\text{tr}\lb(A+\epsilon B)^{p-1} B\rb\Big{|}_{\epsilon = 0} = \lk\one_d,X\rk_\sigma = 0
\]
by the orthogonality of eigenvectors, and that $f_p(x,y)=f_p(y,x)$ for any $x,y\in\R$ we obtain \eqref{equ:DirichletTaylor}.

To obtain \eqref{equ:KappaTaylor} we write 
\[
\|Y_\epsilon\|^p_{p,\sigma} = \text{tr}\lb(A+\epsilon B)^{p}\rb
\]
with $A = \sigma^{1/p}$ and $B = \sigma^{1/2p}X\sigma^{1/2p}$ as above. Again it is easy to see that 
\[
\frac{d^k}{d\epsilon^k}\text{tr}\lb(A+\epsilon B)^{p}\rb\Big{|}_{\epsilon = 0} = \text{tr}\lb D^kG(A)(B,B,\ldots ,B)\rb
\]
for the matrix power $G:\M_d\ra\M_d$ given by $G(X)=X^p$. Using the Daleckii-Krein formulas we obtain the derivatives
\begin{align*}
\frac{d}{d\epsilon}\text{tr}\lb(A+\epsilon B)^{p}\rb\Big{|}_{\epsilon = 0} &= p\lk\one_d,X\rk_\sigma=0 \\
\frac{d^2}{d\epsilon^2}\text{tr}\lb(A+\epsilon B)^{p}\rb\Big{|}_{\epsilon = 0} &= \sum^d_{i=1}\sum^d_{j=1}g_p(\sigma_i,\sigma_j)b_{ij}b_{ji} \\
&= p\sum_{1\leq i\leq j\leq d} f_p(\sigma_i,\sigma_j)b_{ij}b_{ji}
\end{align*}
where we used the identity \eqref{equ:Identitygf} in the last step. The above shows that 
\begin{equation}
\|Y_\epsilon\|^p_{p,\sigma} = 1 + \epsilon^2 p\sum_{1\leq i\leq j\leq d} f_p(\sigma_i,\sigma_j)b_{ij}b_{ji} + O(\epsilon^3).
\end{equation}
With the well-known expansion $\ln(1+x) = x - \frac{x^2}{2} + O(x^3)$ we obtain
\begin{align*}
\kappa_p(Y_\epsilon) &= \kappa_p(Y_\epsilon) = \frac{p\epsilon^2}{p-1} \sum_{1\leq i\leq j\leq d}f_p(\lambda_i,\lambda_j)b_{ij}b_{ji} + O(\epsilon^3).
\end{align*}
which is \eqref{equ:KappaTaylor}.

\end{proof}

\section{Interpolation Theorems and Proof of Theorem \ref{thm:uncertaintytau2ls2}}\label{sec:l2uncertproof}

In order to prove Theorem \ref{thm:uncertaintytau2ls2} we will need the following special case of the Stein-Weiss interpolation theorem~\cite[Theorem 1.1.1]{bergh2012interpolation}. 
This classic result from interpolation  spaces has been applied recently to solve problems from quantum information theory, such as in \cite[Section III]{dataprocessingbeigi}.

\begin{thm}[Hadamard Three Line Theorem]\label{steinweiss}
 Let $S=\{z\in\mathbb{C}:0\leq z\leq1\}$ and $F:S\to \mathcal{B}\lb \M_d\rb $ be an operator-valued function holomorphic in the interior of $S$ and uniformly bounded and continuous on the boundary. 
 Let $\sigma\in\D_d^+$ and assume $1\leq p\leq q\leq \infty$. For $0<\theta<1$
 define $p_0\leq p_\theta\leq p_1$ by
 \[
  \frac{1}{p_\theta}=\frac{1-\theta}{p_0}+\frac{\theta}{p_1}
 \]
Then for $0\leq y\leq x\leq1$  we have
\begin{equation}
\| F\lb y\rb \|_{2\to p_\theta,\sigma}\leq\sup_{a,b\in\R}\| F\lb ia\rb \|^{1-\theta}_{2\to p_0,\sigma}\| F\lb x+ib\rb \|^{\theta}_{2\to p_1,\sigma}
\end{equation}
\end{thm}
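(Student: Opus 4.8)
The plan is to run the classical Stein--Weiss/Riesz--Thorin argument with an analytic family adapted to the $\sigma$-weighted norms, reducing the operator-valued statement to the scalar Hadamard three-lines theorem. First I would rescale the strip to the case $x=1$: replacing $F$ by $G(z)=F(xz)$ yields a function satisfying the same hypotheses on $S$, with $G(0)=F(0)$, $G(1)=F(x)$, and $G(\theta)=F(\theta x)$, and with $\sup_a\|G(ia)\|_{2\to p_0,\sigma}=\sup_a\|F(ia)\|_{2\to p_0,\sigma}$ and $\sup_b\|G(1+ib)\|_{2\to p_1,\sigma}=\sup_b\|F(x+ib)\|_{2\to p_1,\sigma}$. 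Reading the statement with $y=\theta x$ (equivalently $\theta=y/x$, which is the relation implicitly forced on $y$, $x$, $\theta$), it then suffices to prove $\|F(\theta)\|_{2\to p_\theta,\sigma}\le M_0^{1-\theta}M_1^{\theta}$, where $M_0$ and $M_1$ are the suprema of $\|F(\cdot)\|_{2\to p_0,\sigma}$ and $\|F(\cdot)\|_{2\to p_1,\sigma}$ over the lines $\text{Re}\,z=0$ and $\text{Re}\,z=1$.

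Next I would reduce to a scalar estimate by double duality. Using the Schatten--H\"older inequality, which after conjugating out the weights reads $|\lk W,Y\rk_\sigma|=|\tr[(\sigma^{1/(2q)}W\sigma^{1/(2q)})(\sigma^{1/(2p)}Y\sigma^{1/(2p)})]|\le\|W\|_{q,\sigma}\|Y\|_{p,\sigma}$ for $1/p+1/q=1$, together with its saturation, one gets for fixed $X$ with $\|X\|_{2,\sigma}=1$ the identity $\|F(\theta)(X)\|_{p_\theta,\sigma}=\sup\{|\lk W,F(\theta)(X)\rk_\sigma|:\|W\|_{q_\theta,\sigma}=1\}$ with $1/p_\theta+1/q_\theta=1$. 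So I would fix such $X$ and $W$ and bound $|\lk W,F(\theta)(X)\rk_\sigma|$.

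The technical heart is to build an analytic family $z\mapsto W_z$ realizing this duality on the two boundary lines. Writing $1/q_z=(1-z)/q_0+z/q_1$, which is affine in $z$ with $\text{Re}(1/q_z)=1/q_0$ on $\text{Re}\,z=0$ and $\text{Re}(1/q_z)=1/q_1$ on $\text{Re}\,z=1$, I would set $\hat W=\sigma^{1/(2q_\theta)}W\sigma^{1/(2q_\theta)}$ (so $\|\hat W\|_{q_\theta}=1$), take a polar decomposition $\hat W=V|\hat W|$ with $V$ unitary, and define $\hat W_z=V|\hat W|^{\,q_\theta/q_z}$ and $W_z=\sigma^{-1/(2q_z)}\hat W_z\sigma^{-1/(2q_z)}$, where $q_\theta/q_z$ means $q_\theta$ times the affine $1/q_z$ above. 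Since $|\hat W|^{w}$ and $\sigma^{w}$ are entire in $w$ and $1/q_z$ is affine, $W_z$ is entire and bounded on $S$, and $W_\theta=W$. Two points make the boundary values work out: on $\text{Re}\,z=0$ the exponent $q_\theta/q_z$ has real part $q_\theta/q_0$, so $\||\hat W|^{q_\theta/q_z}\|_{q_0}^{q_0}=\tr[|\hat W|^{q_\theta}]=1$; and when one computes $\|W_z\|_{q_0,\sigma}$ the surviving weight $\sigma^{1/(2q_0)-1/(2q_z)}$ is purely imaginary on that line, hence unitary, hence norm-preserving. Together these give $\|W_{ia}\|_{q_0,\sigma}=1$ for all $a$, and symmetrically $\|W_{1+ib}\|_{q_1,\sigma}=1$ for all $b$.

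Finally, I would set $\Phi(z)=\lk W_z,F(z)(X)\rk_\sigma$, which is holomorphic in the interior of $S$ and bounded and continuous on $S$ (a composition of the bounded analytic family $W_z$ with the hypotheses on $F$ and the fixed $\sigma$-weights). On the boundary, H\"older gives $|\Phi(ia)|\le\|W_{ia}\|_{q_0,\sigma}\|F(ia)(X)\|_{p_0,\sigma}\le M_0$ and $|\Phi(1+ib)|\le M_1$. The scalar Hadamard three-lines theorem then yields $|\Phi(\theta)|\le M_0^{1-\theta}M_1^{\theta}$, and since $\Phi(\theta)=\lk W,F(\theta)(X)\rk_\sigma$, taking the supremum over admissible $W$ and then over $X$ proves the claim. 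I expect the main obstacle to be the third step---keeping precise track of how the $\sigma$-weights interact with complex powers, so that the constructed family has unit $q_j$-norm \emph{exactly} on the $j$-th boundary line---together with routine bookkeeping for the endpoint exponents ($p_0$ or $p_1=\infty$, dual exponent $1$), for partial-isometry/support issues in the polar decomposition when $W$ or $F(\theta)(X)$ is rank-deficient, and for confirming that $\Phi$ is bounded on the whole strip (which is where the boundedness hypothesis on $F$, i.e.\ a Phragm\'en--Lindel\"of-type input, is used).
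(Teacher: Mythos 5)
Your argument is correct: it is the standard Stein--Weiss proof --- reduce to the scalar three-lines lemma by weighted Schatten--H\"older duality, noting that since the domain exponent is fixed at $2$ only the dual element $W$ needs to be embedded in an analytic family $W_z$ with unit $q_0$- and $q_1$-norms on the respective boundary lines. The paper itself offers no proof of this statement, citing it as Theorem 1.1.1 of Bergh--L\"of\-str\"om, and your sketch is essentially the argument given there, including the correct reading of the (loosely stated) relation $\theta=y/x$ and the reduction to $x=1$ by rescaling the strip.
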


One important consequence of the Stein-Weiss interpolation theorem is the following interpolation result. We again refer to \cite[Theorem 1.1.1]{bergh2012interpolation} for a proof.
\begin{thm}[Riesz-Thorin Interpolation Theorem]\label{thm:rieszthorin}
Let $L:\M_d\to\M_d$ be a linear map, $1\leq p_0\leq p_1\leq+\infty$ and $1\leq q_0\leq q_1\leq+\infty$.
For $\theta\in[0,1]$ define $p_\theta$ to satisfy
\[
 \frac{1}{p_\theta}=\frac{\theta}{p_0}+\frac{1-\theta}{p_1}
\]
and $q_\theta$ analogously.
Then for $\sigma\in\D_d^+$ we have:
\[
\|L\|_{p_\theta\to q_\theta,\sigma}\leq\|L\|_{p_0\to q_0,\sigma}^\theta\|L\|_{p_1\to q_1,\sigma}^{1-\theta} 
\]

\end{thm}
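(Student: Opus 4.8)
The plan is to prove Theorem~\ref{thm:rieszthorin} by the classical complex-interpolation argument — a Hadamard three-lines estimate combined with ``phase freezing'' — carried out directly in the $\sigma$-weighted noncommutative $l_p$-spaces; at a higher level this is just the assertion that the spaces $\lb\M_d,\|\cdot\|_{p,\sigma}\rb$ form a complex interpolation scale, so one could equivalently quote the abstract interpolation theorem of~\cite{bergh2012interpolation} once that identification is in place, but I would keep the argument self-contained. As a preliminary I would record two elementary facts about the weighted norms: from the identity $\lk X,Y\rk_\sigma=\tr\lbr\Gamma_\sigma^{1/p}(X)\,\Gamma_\sigma^{1/p'}(Y)\rbr$ (valid whenever $\tfrac1p+\tfrac1{p'}=1$, by cyclicity of the trace and $\sigma^{1/2p}\sigma^{1/2p'}=\sigma^{1/2}$) together with the trace duality of Schatten norms applied to $\|\Gamma_\sigma^{1/p}(X)\|_{S_p}=\|X\|_{p,\sigma}$, one obtains both the duality formula $\|X\|_{p,\sigma}=\sup\lset\,|\lk X,Y\rk_\sigma|:\|Y\|_{p',\sigma}\leq1\,\rset$ and the Hölder inequality $|\lk X,Y\rk_\sigma|\leq\|X\|_{p,\sigma}\|Y\|_{p',\sigma}$. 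By homogeneity it then suffices to fix $X,Y\in\M_d$ with $\|X\|_{p_\theta,\sigma}=\|Y\|_{q_\theta',\sigma}=1$ and prove $|\lk L(X),Y\rk_\sigma|\leq M_0^\theta M_1^{1-\theta}$, writing $M_i:=\|L\|_{p_i\to q_i,\sigma}$.

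The heart of the proof is the construction of the analytic family. I would write $A:=\Gamma_\sigma^{1/p_\theta}(X)=U|A|$ and $B:=\Gamma_\sigma^{1/q_\theta'}(Y)=W|B|$ in polar form with $U,W$ unitary, set $\tfrac1{p(z)}:=\tfrac z{p_0}+\tfrac{1-z}{p_1}$ and $\tfrac1{q'(z)}:=\tfrac z{q_0'}+\tfrac{1-z}{q_1'}$ (so that $p(\theta)=p_\theta$ and $q'(\theta)=q_\theta'$), and define, for $z$ in the strip $S=\lset z:0\leq\Re z\leq1\rset$,
\[
X_z:=\Gamma_\sigma^{-1/p(z)}\!\lb U\,|A|^{\,p_\theta/p(z)}\rb,\qquad Y_z:=\Gamma_\sigma^{-1/q'(z)}\!\lb W\,|B|^{\,q_\theta'/q'(z)}\rb,
\]
with complex powers of $|A|,|B|$ taken through the spectral calculus (set to $0$ on the kernels). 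Since $\tfrac1{p(z)}$ and $\tfrac{p_\theta}{p(z)}$ are affine in $z$, the operators $\sigma^{-1/2p(z)}=\exp\!\lb-\tfrac1{2p(z)}\log\sigma\rb$ and $|A|^{p_\theta/p(z)}=\exp\!\lb\tfrac{p_\theta}{p(z)}\log|A|\rb$ are entire $\M_d$-valued functions, hence $z\mapsto X_z$ and $z\mapsto Y_z$ are entire and bounded on $\overline S$; consequently $\Phi(z):=\lk L(X_z),Y_z\rk_\sigma$ is holomorphic in the interior of $S$, bounded and continuous on $\overline S$, and $\Phi(\theta)=\lk L(X),Y\rk_\sigma$. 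The crucial point is the behaviour on the boundary: on $\Re z=1$ the coefficient $\tfrac1{2p_0}-\tfrac1{2p(z)}$ is purely imaginary, so $\sigma^{1/2p_0-1/2p(z)}$ is unitary, and since an imaginary power of $|A|$ is a partial isometry one finds $\|X_z\|_{p_0,\sigma}=\|\Gamma_\sigma^{1/p_0}(X_z)\|_{S_{p_0}}=\||A|^{p_\theta/p(z)}\|_{S_{p_0}}=\|A\|_{S_{p_\theta}}^{p_\theta/p_0}=1$, and likewise $\|Y_z\|_{q_0',\sigma}=1$; symmetrically $\|X_z\|_{p_1,\sigma}=\|Y_z\|_{q_1',\sigma}=1$ on $\Re z=0$.

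Plugging these boundary norms into the Hölder inequality from the first step gives $|\Phi(z)|\leq\|L(X_z)\|_{q_0,\sigma}\|Y_z\|_{q_0',\sigma}\leq M_0$ for $\Re z=1$ and $|\Phi(z)|\leq M_1$ for $\Re z=0$. The classical Hadamard three-lines lemma (see~\cite{bergh2012interpolation}; this is the scalar case of Theorem~\ref{steinweiss}) then yields $|\Phi(\theta)|\leq M_0^{\theta}M_1^{1-\theta}$, and taking suprema first over $Y$ and then over $X$ (via the duality formula) gives $\|L\|_{p_\theta\to q_\theta,\sigma}\leq M_0^{\theta}M_1^{1-\theta}$, which is the claim. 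I expect the main point requiring care — rather than a genuine obstacle — to be the bookkeeping around the two-sided $\sigma$-weighting: the complexified exponent must be chosen affine precisely so that the extra $\sigma$-powers occurring on $\Re z\in\lset0,1\rset$ become unitary, and the endpoint cases $p_i,q_i\in\lset1,\infty\rset$ (where $\|\cdot\|_{\infty,\sigma}=\|\cdot\|_\infty$ and the dual space is the weighted trace class) must be checked separately; in finite dimensions these are routine, the only thing to notice being that $\Re\tfrac{p_\theta}{p(z)}\geq0$ throughout $S$, so no negative powers of $|A|$ ever appear.
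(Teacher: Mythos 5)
Your proof is correct, and it is genuinely different in character from what the paper does: the paper gives no argument at all for Theorem~\ref{thm:rieszthorin}, simply deferring to the abstract complex-interpolation theorem of \cite[Theorem 1.1.1]{bergh2012interpolation} (i.e.\ to the identification of the weighted spaces $\lb\M_d,\|\cdot\|_{p,\sigma}\rb$ as a complex interpolation scale), whereas you carry out the Stein-type three-lines argument by hand in the weighted setting. The two key points that make your version work are exactly the ones you isolate: (i) the duality/Hölder identity $\lk X,Y\rk_\sigma=\tr\lbr\Gamma_\sigma^{1/p}(X)\Gamma_\sigma^{1/p'}(Y)\rbr$, which reduces weighted duality to Schatten duality via the bijection $\Gamma_\sigma^{1/p'}$; and (ii) the choice of $1/p(z)$ affine in $z$, so that on the boundary lines the surplus weights $\sigma^{1/2p_i-1/2p(z)}$ and the powers $|A|^{i s}$ are (partial) unitaries and the Schatten norms of $X_z$, $Y_z$ freeze to $1$. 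The sign conventions are consistent with the paper's (where $\theta$ weights the $(p_0,q_0)$ endpoint), boundedness of $\Phi$ on the strip is automatic in finite dimensions since the real parts of all exponents stay bounded, and the degenerate cases ($\theta\in\lset0,1\rset$, or $p_0=p_1=\infty$) are trivial as you note. What your route buys is a self-contained proof that makes explicit why the two-sided weighting is harmless; what the paper's route buys is brevity and the fact that the same citation already covers Theorem~\ref{steinweiss}, which is the form actually used in the proof of Theorem~\ref{thm:uncertaintytau2ls2}.
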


With these tools at hand we can finally prove Theorem \ref{thm:uncertaintytau2ls2}:
\begin{proof}
 Define $E:\M_d\to\M_d$ by $E\lb X\rb =\tr\lb \sigma X\rb \one_d$ and set $\tau=t_2\lb \epsilon\rb$ for some $\epsilon>0$. 
 In the following we use $T_z=e^{\tau z\liou}$ for $z\in S = \{z\in\mathbb{C}:0\leq \text{Re}z\leq1\}$. We will show that for $s\in[0,1]$:
 \begin{equation}
  \|\lb T_{s}-E\rb \lb X\rb \|_{\frac{2}{1-s},\sigma}\leq \epsilon^s\|X\|_{2,\sigma}.
 \end{equation}
The family of operators $T_z-E$ clearly satisfies the assumptions of the Stein-Weiss interpolation theorem. We therefore have
\begin{equation}\label{equ:sobmixing1}
 \|T_{s}-E\|_{2\to\frac{2}{1-s},\sigma}\leq\sup_{a,b\in\R}\|T_{ia}-E\|^{1-s}_{2\to2,\sigma}
 \|T_{1+ib}-E\|^s_{2\to\infty,\sigma}.
\end{equation}
Observe that by reversibility of $\liou$ the map $T_{ia}$ is a unitary operator with respect to $\lk\cdot , \cdot\rk_\sigma$. We also have $T_{ia}\circ E=E$, 
as $T_{ia}(\one_d)=\one_d$. This gives
\[
\|\lb T_{ia}-E\rb \lb X\rb \|_{2,\sigma}=\|T_{ia}\lb X-E\lb X\rb \rb \|_{2,\sigma}=\|X-E\lb X\rb \|_{2,\sigma}\leq\|X\|_{2,\sigma},
\]
where the last equality follows from $\|X-\tr\lb \sigma X\rb \one_d\|_{2,\sigma}=\min\limits_{c\in\R}\|X-c\one_d\|_{2,\sigma}$.
We therefore have 
\begin{equation}\label{equ:sobmixing2}
 ||T_{ia}-E||^{1-s}_{2\to2,\sigma}\leq1.
\end{equation}
Furthermore, by the unitarity of $T_{ib}$ we can compute 
\[
\|\lb T_{1+ib}-E\rb \lb X\rb \|_{\infty,\sigma}=\|T_{ib}\circ \lb T_{1}-E\rb \lb X \rb \|_{\infty,\sigma} \leq\| T_{1}-E\|_{2\to\infty,\sigma}\|X\|_{2,\sigma}
\]
Using duality of the norms and that both $T_1$ and $E$ are self-adjoint we have
\begin{equation}\label{equ:sobmixing3}
\| T_{1+ib}-E \|_{2\ra \infty,\sigma}\leq \|T_{1}-E\|_{2\to\infty,\sigma}=\| T_{1}-E\|_{1\to 2,\sigma} = \epsilon
\end{equation}
using the definition of $\tau$ in the last equality. Inserting \eqref{equ:sobmixing2} and \eqref{equ:sobmixing3} into \eqref{equ:sobmixing1} we get
\begin{equation}\label{equ:ineqnorm}
  \epsilon^{-s}\|\lb T_{s}-E\rb \lb X\rb \|_{\frac{2}{1-s},\sigma}\leq \|X-E\lb X\rb\|_{2,\sigma}, 
\end{equation}
as $\|\lb T_{s}-E\rb \lb X\rb \|_{\frac{2}{1-s},\sigma}=\|\lb T_{s}-E\rb \lb X-E\lb X\rb\rb \|_{\frac{2}{1-s},\sigma}$. 

Taking the derivative of \eqref{equ:ineqnorm} with respect to $s$ on both sides at $s=0$ we get 
\begin{equation}\label{equ:ineqbeforerot1}
 \frac{1}{2\|X-E\lb X\rb \|_{2,\sigma}}\lb -2\|X-E\lb X\rb \|_{2,\sigma}^2\ln\lb \epsilon\rb +\Ent_{2,\sigma}\lb |X-E\lb X\rb |\rb -2\tau\mathcal{E}\lb X\rb \rb \leq0.
\end{equation}
Rearranging the terms in \eqref{equ:ineqbeforerot1} we obtain
\begin{equation}\label{equ:ineqbeforerot2}
\Ent_{2,\sigma}\lb |X-E\lb X\rb |\rb \leq 2\tau\mathcal{E}\lb X\rb +2\text{Var}\lb X\rb \ln\lb \epsilon\rb. 
\end{equation}
In \cite[Theorem 4.2]{Olkiewicz1999246} the following inequality (known as Rothaus' inequality) was shown
\begin{equation}\label{equ:rothaus}
 \Ent_{2,\sigma}\lb X\rb \leq\Ent_{2,\sigma}\lb |X-E\lb X\rb |\rb +2\text{Var}\lb X\rb. 
\end{equation}
Combining inequalities \eqref{equ:rothaus} with \eqref{equ:ineqbeforerot2} and setting $\epsilon=\frac{1}{e}$
we get $t_2\lb \frac{1}{e}\rb \alpha_2\lb \liou\rb \geq\frac{1}{2}$ by the definition of the LS constant.

To prove that the inequality is tight, consider the depolarizing Liouvillian $\liou_\sigma(X)=\tr\lb X\rb\sigma-X$ for some full rank $\sigma\in\D_d^+$.
It is easy to see that $\Var_\sigma\lb e^{t\hat{\liou_\sigma}}X\rb=e^{-t}\Var_\sigma(X)$ and so $t_2\lb e^{-1}\rb=1+\ln\lb \|\sigma^{-1}\|_\infty-1\rb$. 
Restricting to operators commuting with $\sigma$, it follows from \cite[Theorem A.1]{diaconis1996} that 
\[
\alpha_2\lb\liou_\sigma\rb\leq(1-2 \|\sigma^{-1}\|_\infty^{-1})\frac{1}{2\ln\lb \|\sigma^{-1}\|_\infty-1\rb}.
\]
Thus, for a sequence $\sigma_n\in\D_d^+$ converging to a state that is not full rank we have 
\[
\lim\limits_{n\to\infty}t_2\lb e^{-1}\rb \alpha_2\lb \liou_{\sigma_n}\rb=\frac{1}{2}\].
\end{proof}

\bibliographystyle{unsrtnat}
\bibliography{bibliography}

\begin{thebibliography}{42}
\providecommand{\natexlab}[1]{#1}
\providecommand{\url}[1]{\texttt{#1}}
\expandafter\ifx\csname urlstyle\endcsname\relax
  \providecommand{\doi}[1]{doi: #1}\else
  \providecommand{\doi}{doi: \begingroup \urlstyle{rm}\Url}\fi

\bibitem[Olkiewicz and Zegarlinski(1999)]{Olkiewicz1999246}
R.~Olkiewicz and B.~Zegarlinski.
\newblock Hypercontractivity in noncommutative lp spaces.
\newblock \emph{J. Funct. Anal.}, 161\penalty0 (1):\penalty0 246 -- 285, 1999.
\newblock ISSN 0022-1236.
\newblock \doi{10.1006/jfan.1998.3342}.

\bibitem[{Kastoryano} and {Temme}(2013)]{Kastoryanosob}
M.~J. {Kastoryano} and K.~{Temme}.
\newblock {Quantum logarithmic Sobolev inequalities and rapid mixing}.
\newblock \emph{J. Math. Phys.}, 54\penalty0 (5):\penalty0 052202, May 2013.
\newblock \doi{10.1063/1.4804995}.

\bibitem[M{\"u}ller-Lennert et~al.(2013)M{\"u}ller-Lennert, Dupuis, Szehr,
  Fehr, and Tomamichel]{muller2013quantum}
M.~M{\"u}ller-Lennert, F.~Dupuis, O.~Szehr, S.~Fehr, and M.~Tomamichel.
\newblock On quantum {R}{\'e}nyi entropies: A new generalization and some
  properties.
\newblock \emph{J. Math. Phys.}, 54\penalty0 (12):\penalty0 122203, 2013.
\newblock \doi{10.1063/1.4838856}.

\bibitem[{Wilde} et~al.(2014){Wilde}, {Winter}, and {Yang}]{strongconvrenyi}
M.~M. {Wilde}, A.~{Winter}, and D.~{Yang}.
\newblock {Strong Converse for the Classical Capacity of Entanglement-Breaking
  and Hadamard Channels via a Sandwiched R{\'e}nyi Relative Entropy}.
\newblock \emph{Commun. Math. Phys.}, 331:\penalty0 593--622, October 2014.
\newblock \doi{10.1007/s00220-014-2122-x}.

\bibitem[King(2003)]{king2003capacity}
C.~King.
\newblock The capacity of the quantum depolarizing channel.
\newblock \emph{IEEE Trans. Inf. Theory.}, 49\penalty0 (1):\penalty0 221--229,
  2003.
\newblock \doi{10.1109/TIT.2002.806153}.

\bibitem[Umegaki(1962)]{umegaki1962}
H.~Umegaki.
\newblock Conditional expectation in an operator algebra. {IV}. {E}ntropy and
  information.
\newblock \emph{Kodai Math. Sem. Rep.}, 14\penalty0 (2):\penalty0 59--85, 1962.
\newblock \doi{10.2996/kmj/1138844604}.

\bibitem[{Beigi}(2013)]{dataprocessingbeigi}
S.~{Beigi}.
\newblock {Sandwiched R{\'e}nyi divergence satisfies data processing
  inequality}.
\newblock \emph{J. Math. Phys.}, 54\penalty0 (12):\penalty0 122202, December
  2013.
\newblock \doi{10.1063/1.4838855}.

\bibitem[Hiai et~al.()Hiai, Ohya, and Tsukada]{hiai1981}
F.~Hiai, M.~Ohya, and M.~Tsukada.
\newblock Sufficiency, {KMS} condition and relative entropy in von {N}eumann
  algebras.
\newblock \emph{Pacific J. Math.}, 96\penalty0 (1):\penalty0 99--109.
\newblock \doi{10.1142/9789812794208_0030}.

\bibitem[Gilardoni(2010)]{2006cs3097G}
G.~L. Gilardoni.
\newblock On {P}insker's and {V}ajda's type inequalities for {C}siszar's f
  -divergences.
\newblock \emph{IEEE Trans. Inf. Theory.}, 56\penalty0 (11):\penalty0
  5377--5386, Nov 2010.
\newblock ISSN 0018-9448.
\newblock \doi{10.1109/TIT.2010.2068710}.

\bibitem[Lindblad()]{lindblad1976}
G.~Lindblad.
\newblock On the generators of quantum dynamical semigroups.
\newblock \emph{Commun. Math. Phys.}, 48\penalty0 (2):\penalty0 119--130.
\newblock \doi{10.1007/BF01608499}.

\bibitem[{Gorini} et~al.(1976){Gorini}, {Kossakowski}, and {Sudarshan}]{gorini}
V.~{Gorini}, A.~{Kossakowski}, and E.~C.~G. {Sudarshan}.
\newblock {Completely positive dynamical semigroups of N-level systems}.
\newblock \emph{J. Math. Phys.}, 17:\penalty0 821--825, May 1976.
\newblock \doi{10.1063/1.522979}.

\bibitem[{Burgarth} et~al.(2013){Burgarth}, {Chiribella}, {Giovannetti},
  {Perinotti}, and {Yuasa}]{ergodicchiribella}
D.~{Burgarth}, G.~{Chiribella}, V.~{Giovannetti}, P.~{Perinotti}, and
  K.~{Yuasa}.
\newblock {Ergodic and mixing quantum channels in finite dimensions}.
\newblock \emph{New J. Phys.}, 15\penalty0 (7):\penalty0 073045, July 2013.
\newblock \doi{10.1088/1367-2630/15/7/073045}.

\bibitem[Spohn and Lebowitz()]{spohn1978irreversible}
H.~Spohn and J.~L. Lebowitz.
\newblock Irreversible thermodynamics for quantum systems weakly coupled to
  thermal reservoirs.
\newblock \emph{Adv. Chem. Phys}, 38:\penalty0 109--142.
\newblock \doi{10.1002/9780470142578.ch2}.

\bibitem[Davies(1976)]{davies1976quantum}
E.B. Davies.
\newblock \emph{Quantum Theory of Open Systems}.
\newblock Academic Press, 1976.
\newblock ISBN 9780122061509.

\bibitem[Breuer and Petruccione(2007)]{breuer2007theory}
H.P. Breuer and F.~Petruccione.
\newblock \emph{The Theory of Open Quantum Systems}.
\newblock OUP Oxford, 2007.
\newblock ISBN 9780199213900.
\newblock \doi{10.1093/acprof:oso/9780199213900.001.0001}.

\bibitem[Davies(1979)]{DAVIES1979421}
E.B. Davies.
\newblock Generators of dynamical semigroups.
\newblock \emph{J. Funct. Anal.}, 34\penalty0 (3):\penalty0 421 -- 432, 1979.
\newblock ISSN 0022-1236.
\newblock \doi{10.1016/0022-1236(79)90085-5}.

\bibitem[Kossakowski et~al.(1977)Kossakowski, Frigerio, Gorini, and
  Verri]{Kossakowski1977}
A.~Kossakowski, A.~Frigerio, V.~Gorini, and M.~Verri.
\newblock Quantum detailed balance and {KMS} condition.
\newblock \emph{Commun. Math. Phys.}, 57\penalty0 (2):\penalty0 97--110, Jun
  1977.
\newblock ISSN 1432-0916.
\newblock \doi{10.1007/BF01625769}.

\bibitem[Spohn(1978)]{spohn}
H.~Spohn.
\newblock Entropy production for quantum dynamical semigroups.
\newblock \emph{J. Math. Phys.}, 19\penalty0 (5):\penalty0 1227--1230, 1978.
\newblock \doi{10.1063/1.523789}.

\bibitem[Temme(2017)]{Temme2017}
K.~Temme.
\newblock Thermalization time bounds for {P}auli stabilizer hamiltonians.
\newblock \emph{Commun. Math. Phys.}, 350\penalty0 (2):\penalty0 603--637, Mar
  2017.
\newblock ISSN 1432-0916.
\newblock \doi{10.1007/s00220-016-2746-0}.

\bibitem[{Temme} et~al.(2010){Temme}, {Kastoryano}, {Ruskai}, {Wolf}, and
  {Verstraete}]{chisquared}
K.~{Temme}, M.~J. {Kastoryano}, M.~B. {Ruskai}, M.~M. {Wolf}, and
  F.~{Verstraete}.
\newblock {The {$\chi$}$^{2}$-divergence and mixing times of quantum Markov
  processes}.
\newblock \emph{J. Math. Phys.}, 51\penalty0 (12):\penalty0 122201, December
  2010.
\newblock \doi{10.1063/1.3511335}.

\bibitem[Diaconis and Saloff-Coste(1996)]{diaconis1996}
P.~Diaconis and L.~Saloff-Coste.
\newblock Logarithmic {S}obolev inequalities for finite {M}arkov chains.
\newblock \emph{Ann. Appl. Probab.}, 6\penalty0 (3):\penalty0 695--750, 08
  1996.
\newblock \doi{10.1214/aoap/1034968224}.

\bibitem[M\"{u}ller-Hermes et~al.(2016)M\"{u}ller-Hermes, Stilck~Fran{\c{c}}a,
  and Wolf]{relentdep}
A.~M\"{u}ller-Hermes, D.~Stilck~Fran{\c{c}}a, and M.~M. Wolf.
\newblock Relative entropy convergence for depolarizing channels.
\newblock \emph{J. Math. Phys.}, 57\penalty0 (2), 2016.
\newblock \doi{10.1063/1.4939560}.

\bibitem[{Temme}(2013)]{lowerbounddavies}
K.~{Temme}.
\newblock {Lower bounds to the spectral gap of Davies generators}.
\newblock \emph{J. Math. Phys.}, 54\penalty0 (12):\penalty0 122110, December
  2013.
\newblock \doi{10.1063/1.4850896}.

\bibitem[Cubitt et~al.(2015)Cubitt, Lucia, Michalakis, and
  Perez-Garcia]{stabcubitt}
T.~S. Cubitt, A.~Lucia, S.~Michalakis, and D.~Perez-Garcia.
\newblock Stability of local quantum dissipative systems.
\newblock \emph{Commun. Math. Phys.}, 337\penalty0 (3):\penalty0 1275--1315,
  2015.
\newblock ISSN 1432-0916.
\newblock \doi{10.1007/s00220-015-2355-3}.

\bibitem[{Brand{\~a}o} et~al.(2015){Brand{\~a}o}, {Cubitt}, {Lucia},
  {Michalakis}, and {Perez-Garcia}]{rapidarealaw}
F.~G.~S.~L. {Brand{\~a}o}, T.~S. {Cubitt}, A.~{Lucia}, S.~{Michalakis}, and
  D.~{Perez-Garcia}.
\newblock {Area law for fixed points of rapidly mixing dissipative quantum
  systems}.
\newblock \emph{J. Math. Phys.}, 56\penalty0 (10):\penalty0 102202, October
  2015.
\newblock \doi{10.1063/1.4932612}.

\bibitem[Temme et~al.(2014)Temme, Pastawski, and Kastoryano]{quasifreesob}
K.~Temme, F.~Pastawski, and M.~J. Kastoryano.
\newblock Hypercontractivity of quasi-free quantum semigroups.
\newblock \emph{J. Phys. A.}, 47\penalty0 (40):\penalty0 405303, 2014.
\newblock \doi{10.1088/1751-8113/47/40/405303}.

\bibitem[Nielsen and Chuang(2000)]{nielsen2000quantum}
M.A. Nielsen and I.L. Chuang.
\newblock \emph{Quantum Computation and Quantum Information}.
\newblock Cambridge Series on Information and the Natural Sciences. Cambridge
  University Press, 2000.
\newblock ISBN 9780521635035.
\newblock \doi{10.1017/cbo9780511976667}.

\bibitem[Winter(1999)]{796385}
A.~Winter.
\newblock Coding theorem and strong converse for quantum channels.
\newblock \emph{IEEE Trans. Inf. Theory.}, 45\penalty0 (7):\penalty0
  2481--2485, Nov 1999.
\newblock ISSN 0018-9448.
\newblock \doi{10.1109/18.796385}.

\bibitem[Ogawa and Nagaoka(1999)]{796386}
T.~Ogawa and H.~Nagaoka.
\newblock Strong converse to the quantum channel coding theorem.
\newblock \emph{IEEE Trans. Inf. Theory.}, 45\penalty0 (7):\penalty0
  2486--2489, Nov 1999.
\newblock ISSN 0018-9448.
\newblock \doi{10.1109/18.796386}.

\bibitem[{K{\"o}nig} and {Wehner}(2009)]{strongconversedep}
R.~{K{\"o}nig} and S.~{Wehner}.
\newblock {A Strong Converse for Classical Channel Coding Using Entangled
  Inputs}.
\newblock \emph{Phys. Rev. Lett}, 103\penalty0 (7):\penalty0 070504, August
  2009.
\newblock \doi{10.1103/PhysRevLett.103.070504}.

\bibitem[{Tomamichel} et~al.(2015){Tomamichel}, {Wilde}, and
  {Winter}]{tomamichel2015strong}
M.~{Tomamichel}, M.~M. {Wilde}, and A.~{Winter}.
\newblock Strong converse rates for quantum communication.
\newblock In \emph{2015 IEEE International Symposium on Information Theory
  (ISIT)}, pages 2386--2390. IEEE, 2015.
\newblock \doi{10.1109/TIT.2016.2615847}.

\bibitem[M{\"u}ller-Hermes et~al.(2016)M{\"u}ller-Hermes, Fran{\c{c}}a, and
  Wolf]{entproddoubly}
A.~M{\"u}ller-Hermes, D.~Stilck Fran{\c{c}}a, and M.~M. Wolf.
\newblock Entropy production of doubly stochastic quantum channels.
\newblock \emph{J. Math. Phys.}, 57\penalty0 (2):\penalty0 022203, 2016.
\newblock \doi{10.1063/1.4941136}.

\bibitem[Bravyi and Kitaev(1998)]{bravyi1998quantum}
S.~B. Bravyi and A.~Y. Kitaev.
\newblock Quantum codes on a lattice with boundary.
\newblock \emph{arXiv preprint quant-ph/9811052}, 1998.

\bibitem[Lidar and Brun(2013)]{lidar2013quantum}
D.~A. Lidar and T.~A. Brun.
\newblock \emph{Quantum error correction}.
\newblock Cambridge University Press, 2013.
\newblock \doi{10.1017/CBO9781139034807}.

\bibitem[Alicki et~al.(2009)Alicki, Fannes, and
  Horodecki]{alicki2009thermalization}
R.~Alicki, M.~Fannes, and M.~Horodecki.
\newblock On thermalization in {K}itaev's 2{D} model.
\newblock \emph{J. Phys. A.}, 42\penalty0 (6):\penalty0 065303, 2009.
\newblock \doi{10.1088/1751-8113/42/6/065303}.

\bibitem[Kastoryano and Brand{\~a}o(2016)]{KastoryanoSpectralGap}
M.~J. Kastoryano and F.~G. S.~L. Brand{\~a}o.
\newblock Quantum {G}ibbs samplers: The commuting case.
\newblock \emph{Commun. Math. Phys.}, 344\penalty0 (3):\penalty0 915--957, Jun
  2016.
\newblock ISSN 1432-0916.
\newblock \doi{10.1007/s00220-016-2641-8}.

\bibitem[{Bardet}(2017)]{bardetdecoherence}
I.~{Bardet}.
\newblock {Estimating the decoherence time using non-commutative Functional
  Inequalities}.
\newblock \emph{ArXiv preprint quant-ph/1710.01039}, October 2017.

\bibitem[Ohya and Petz(2004)]{ohya2004quantum}
M.~Ohya and D.~Petz.
\newblock \emph{Quantum Entropy and Its Use}.
\newblock Theoretical and Mathematical Physics. Springer, 2004.
\newblock ISBN 9783540208068.
\newblock \doi{10.1007/978-3-642-57997-4}.

\bibitem[Frank and Lieb(2013)]{liebmonotonicity}
R.~L. Frank and E.~H. Lieb.
\newblock Monotonicity of a relative {R}{\'e}nyi entropy.
\newblock \emph{J. Math. Phys.}, 54\penalty0 (12):\penalty0 122201, 2013.
\newblock \doi{10.1063/1.4838835}.

\bibitem[Daletskii and Krein(1965)]{daletskii1965integration}
J.~L. Daletskii and S.~G. Krein.
\newblock Integration and differentiation of functions of hermitian operators
  and applications to the theory of perturbations.
\newblock \emph{AMS Translations (2)}, 47:\penalty0 1--30, 1965.
\newblock \doi{10.1090/trans2/047/01}.

\bibitem[Hiai(2010)]{hiai2010matrix}
F.~Hiai.
\newblock Matrix analysis: matrix monotone functions, matrix means, and
  majorization.
\newblock \emph{Interdisciplinary Information Sciences}, 16\penalty0
  (2):\penalty0 139--246, 2010.
\newblock \doi{10.4036/iis.2010.139}.

\bibitem[Bergh and Lofstrom(2012)]{bergh2012interpolation}
J.~Bergh and J.~Lofstrom.
\newblock \emph{Interpolation spaces: an introduction}, volume 223.
\newblock Springer, 2012.
\newblock \doi{10.1007/978-3-642-66451-9}.

\end{thebibliography}

\end{document}